\newtheorem{theorem}{Theorem}
\newtheorem{lemma}{Lemma}
\DeclareMathOperator{\tr}{Tr}
\DeclareMathOperator*{\argmax}{arg\,max}
\DeclareMathOperator*{\locc}{LOCC}
\DeclareMathOperator*{\cptp}{CPTP}
\DeclareMathOperator*{\sign}{sign}
\newcommand{\ketbra}[2]{|#1\rangle\langle#2|}
\newcommand{\ketbras}[1]{\ketbra{#1}{#1}}
\begin{document}

\title{Simulating Quantum State Transfer between Distributed Devices using Noisy Interconnects}
\author{Marvin Bechtold}
\affiliation{Institute of Architecture of Application Systems, University of Stuttgart, Universitätsstraße 38, 70569 Stuttgart, Germany}
\orcid{0000-0002-7770-7296}
\email{bechtold@iaas.uni-stuttgart.de}
\author{Johanna Barzen}
\email{barzen@iaas.uni-stuttgart.de}
\affiliation{Institute of Architecture of Application Systems, University of Stuttgart, Universitätsstraße 38, 70569 Stuttgart, Germany}
\orcid{0000-0001-8397-7973}
\author{Frank Leymann}
\affiliation{Institute of Architecture of Application Systems, University of Stuttgart, Universitätsstraße 38, 70569 Stuttgart, Germany}
\orcid{0000-0002-9123-259X}
\email{leymann@iaas.uni-stuttgart.de}
\author{Alexander Mandl}
\affiliation{Institute of Architecture of Application Systems, University of Stuttgart, Universitätsstraße 38, 70569 Stuttgart, Germany}
\orcid{0000-0003-4502-6119}
\email{mandl@iaas.uni-stuttgart.de}
\maketitle

\begin{abstract}
    Scaling beyond individual quantum devices via distributed quantum computing relies critically on high-fidelity quantum state transfers between devices, yet the quantum interconnects needed for this are currently unavailable or expected to be significantly noisy.
These limitations can be bypassed by simulating ideal state transfer using quasiprobability decompositions~(QPDs).
Wire cutting, for instance, allows this even without quantum interconnects.
Nevertheless, QPD methods face drawbacks, requiring sampling from multiple circuit variants and incurring substantial sampling overhead.
While prior theoretical work showed that incorporating noisy interconnects within QPD protocols could reduce sampling overhead relative to interconnect quality, a practical implementation for realistic conditions was lacking.
Addressing this gap, this work presents a generalized and practical QPD for state transfer simulation using noisy interconnects to reduce sampling overhead.
The QPD incorporates a single tunable parameter for straightforward calibration to any utilized interconnect.
To lower practical costs, the work also explores reducing the number of distinct circuit variants required by the QPD.
Experimental validation on contemporary quantum devices confirms the proposed QPD’s practical feasibility and expected sampling overhead reduction under realistic noise.
Notably, the results show higher effective state transfer fidelity than direct transfer over the underlying noisy interconnect. \end{abstract}

\section{Introduction}
While quantum computing holds immense theoretical potential to outperform classical computers on specific tasks~\cite{Shor1997,Liu2021a}, its practical realization faces fundamental scalability challenges.
Current quantum devices contain hundreds of noisy physical qubits~\cite{AbuGhanem2025,MontanezBarrera2025}, while practical applications require many more fault-tolerant qubits~\cite{Acharya2024}. 
As scaling individual quantum devices remains technically challenging~\cite{Mohseni2024}, distributed quantum computing emerges as a promising and complementary scaling strategy, aiming to connect multiple smaller devices via quantum interconnects and classical communication~\cite{Barral2024}.
This distributed architecture relies crucially on high-fidelity quantum state transfer between devices~\cite{Caleffi2024}. 
However, establishing high-fidelity state transfers is challenging, as quantum interconnects, whether initial short-range versions between adjacent chips~\cite{Gold2021,Conner2021} or future long-range connections via cables~\cite{Zhong2019,Main2025}, are expected to suffer from significant noise and errors.

To overcome limitations from missing or noisy quantum interconnects, the action of an ideal quantum state transfer can be simulated using a \emph{quasiprobability decomposition} (QPD)~\cite{Temme2017,Piveteau2024,Brenner2023}. 
Such a QPD represents the ideal transfer in terms of the inherently noisy operations physically available between the distributed devices.
The simulation reconstructs the noiseless transfer by sampling these available operations according to the QPD. 
Classical post-processing then combines the results from the sampled operations to recover the target measurement statistics corresponding to the ideal transfer.
When no quantum interconnects are available and the QPD's operations are restricted to local quantum operations on individual devices possibly coordinated via classical communication, this method is known as \emph{wire cutting}~\cite{Bechtold2023b}.
However, key limitations when using QPDs include: (i)~the need to transpile and execute multiple distinct circuit variants~\cite{Harada2024}, and (ii)~the total number of measurements required to achieve a fixed accuracy, known as the \emph{sampling overhead}, typically scales exponentially with the number of qubits in the simulated transfer~\cite{Brenner2023}.
Recent advances propose mitigating this sampling overhead by leveraging noisy quantum interconnects to generate shared entanglement between devices~\cite{Bechtold2024a,Bechtold2025}. 
Because the interconnects themselves introduce noise, the generated entanglement is imperfect, typically yielding shared states that are not only \emph{non-maximally entangled}~(NME) but also mixed (impure).
Incorporating such shared NME states into wire cutting protocols can theoretically reduce the sampling overhead, with more significant improvements for states containing higher entanglement.

However, these prior protocols rely on the assumption of idealized, pure NME states~\cite{Bechtold2024a,Bechtold2025}. 
This assumption deviates significantly from the reality of noisy interconnects, which inevitably produce mixed NME states~\cite{Zhong2019,Liu2021b,Tao2024}, thereby limiting the practical applicability of existing methods.
Our work addresses this critical gap and advances these techniques through three key contributions.
First, we develop a generalized QPD for quantum state transfer designed for feasible execution on near-term distributed quantum devices that reduces sampling overhead by incorporating noisy quantum interconnects. 
This QPD can either directly utilize the noisy interconnects or employ the mixed NME states they generate as a resource, extending previous work by removing the assumption of pure NME states.
Crucially, our QPD formulation features a single tunable parameter, simplifying calibration: it can be readily adapted to any specific noisy interconnect using only minimal experimental characterization of the channel quality using a method detailed in this work.
Second, to lower the transpilation cost and the number of distinct circuits to execute, we investigate strategies to reduce the number of distinct circuit variants required by the QPD and analyze the trade-offs involving implementation errors introduced by these simplification techniques.
Third, we experimentally validate the feasibility of our QPD-based simulation of the state transfer on contemporary quantum devices. 
This includes calibrating the QPD based on noisy interconnect characteristics and subsequently executing the simulation.
Our experimental results confirm a reduction in sampling overhead under realistic noise, thereby bridging the gap between prior theoretical decompositions and practical implementation on noisy quantum devices.
Notably, these experiments reveal that employing noisy interconnects within our QPD-based simulation can achieve a higher effective state transfer fidelity compared to attempting direct state transfer over the same physical interconnects, even when the QPD incorporates approximations designed to reduce the number of circuit variants.

The structure of this paper is as follows: 
\Cref{sec:preliminaries} introduces the fundamental concepts and background.
\Cref{sec:main} presents our novel QPD for simulating quantum state transfer using noisy interconnects.
Subsequently, \Cref{sec:experiments} details our experimental methodology and results, encompassing both simulations and implementations on real quantum devices.
\Cref{sec:discussion} discusses our findings and limitations. 
To place our results in context, \Cref{sec:related_work} provides an overview of related work, before \Cref{sec:conclusion} concludes the work.
 \section{Background}\label{sec:preliminaries}
This section introduces the concepts underpinning the simulation of quantum state transfer via QPDs.
We start with introducing the used notation for quantum systems and states, as well as Pauli operators and their simultaneous measurement.
Building on this, we describe the modeling of state transfers, including imperfections, using quantum channels. 
We then examine their implementation using local operations and classical communication with shared entangled states as resources.
Following this, we introduce channel twirling, a technique for simplifying channel structure, which is subsequently applied to reduce the complexity of errors in the noisy physical channel.
We then detail QPDs themselves and explain how they enable the simulation of ideal transfers using physically constrained and imperfect operations.

\subsection{Quantum systems, states, and entanglement}
A quantum system is mathematically described by a Hilbert space $A$, with bounded linear operators $L(A)$ acting on it.
The physical state space of the system consists of density operators $D(A) \subset L(A)$, defined as positive semidefinite Hermitian operators with unit trace~\cite{Wilde2017}.
We denote states as $\rho_A \in D(A)$, where subscripts explicitly indicate the associated system when necessary.
Pure states $\ket{\psi} \in A$ are represented as rank-$1$ projectors $\psi = \ketbras{\psi} \in D(A)$.

When considering distributed systems, such as two devices $A$ and $B$, their joint state is described by a density operator $\rho_{AB} \in D(A\otimes B)$ on the composite Hilbert space $A \otimes B$. 
A state $\rho_{AB}$ is called separable if it can be written as a probabilistic mixture of product states: $\rho_{AB} = \sum_i p_i \rho_A^{(i)} \otimes \rho_B^{(i)}$, where $\{p_i\}_i$ is a probability distribution and $\rho_X^{(i)} \in D(X)$. 
The set of all separable states between $A$ and $B$ is denoted $S(A,B)\subset D(A\otimes B)$. 
States that are not separable, i.e., $\rho_{AB} \notin S(A,B)$, are entangled. 
A key example of an entangled state is the $2n$-qubit maximally entangled state~\cite{Wilde2017}, which is given in the computational basis $\{\ket{\vec{k}}\}_{\vec{k}\in\{0,1\}^n}$ by
\begin{align}
 \ket{\Phi_n}_{AB} = \frac{1}{\sqrt{2^n}}\sum_{\vec{k}\in\{0,1\}^n} \ket{\vec{k}}_A\ket{\vec{k}}_B.
\end{align}
We denote its density operator as $\Phi_{AB}$ with size $n$ implicit in systems $A$ and $B$, or $\Phi_n$ if the factorization in two systems is clear from the context or irrelevant. 

\subsection{Pauli operators}
Pauli operators play a foundational role in quantum computation and information, forming a basis for operators $L(A)$ that is essential for describing and analyzing quantum systems, processes, and measurements.
The single-qubit Pauli operators are denoted by $I$, $X$, $Y$, and $Z$~\cite{Wilde2017}.
Their $n$-qubit generalizations can be specified using binary vectors $\vec{a}=(a_0, \ldots, a_{n-1}) \in \{0,1\}^{n}$ as
\begin{align}\label{eq:X_and_Z}
 X_{\vec{a}} := \bigotimes_{i=0}^{n-1} X^{a_i},\quad Z_{\vec{a}} := \bigotimes_{i=0}^{n-1} Z^{a_i},
\end{align}
with $P^{a_i} = I$ if $a_i=0$ and $P^{a_i} = P$ if $a_i=1$, for $P \in \{X, Z\}$.

A general Pauli operator $P_{\vec{a}}$ combines $X$- and $Z$-components:
\begin{align}
 P_{\vec{a}} = X_{\vec{x}}Z_{\vec{z}}
\end{align}
where $\vec{x},\vec{z} \in \{0,1\}^n$, and the $2n$-dimensional vector $\vec{a} = (\vec{x},\vec{z}) \in \{0,1\}^{2n}$ concatenates them.
For convenience, we sometimes represent $\vec{a}$ by its integer equivalent $a = \sum_{i=0}^{2n-1} a_i 2^i$, writing $P_{a}=P_{\vec{a}}$, and similarly $X_x = X_{\vec{x}}$ and $Z_z = Z_{\vec{z}}$.
A detailed analysis of the properties of the Pauli operators required for our results can be found in \Cref{sec:appendix_paulis}.

The set of all such operators, including phase factors $\{\pm 1, \pm i\}$, forms the $n$-qubit Pauli group $\mathcal{P}_n$:
\begin{align}
 \mathcal{P}_n := \left\{i^kP_{\vec{a}}\middle| k\in\{0,1,2,3\}, \, \vec{a}\in \{0,1\}^{2n}\right\}.
\end{align}
As global phases are physically unobservable, we often work with the quotient group $\mathcal{Q}_n := \mathcal{P}_n / \{\pm 1, \pm i\}$, and denote as $\mathcal{Q}_n^* := \mathcal{Q}_n \setminus {I^{\otimes n}}$ the set of non-identity operators in this group.
The projection $\pi\colon \mathcal{P}_n \to \mathcal{Q}_n$, defined by
\begin{align}\label{eq:projection_pauli}
\pi(\alpha P) &= P \quad \text{for } \alpha \in \{\pm 1, \pm i\},\ P \in \mathcal{Q}_n,
\end{align}
identifies operators up to phase equivalence.

The Pauli operators in $\mathcal{Q}_n$ form an orthogonal basis for the space of linear operators $L(A)$ for an $n$-qubit Hilbert space $A$~\cite{Lawrence2002}, under the Hilbert-Schmidt inner product:
\begin{align}\label{eq:pauli_orthogonality}
\tr[P_a P_b] &= 2^n \delta_{a,b} \quad \forall P_a, P_b \in \mathcal{Q}_n.
\end{align}

This completeness as an operator basis means that any observable relevant to quantum measurements can be expressed as a linear combination of these Pauli operators.
A key property arising from this representation, which enables efficient measurements and will be utilized later, is that commuting Pauli operators can be measured simultaneously.
If $P_{a}, P_{b} \in \mathcal{Q}_n $ commute ($P_{a}P_{b} = P_{b}P_{a}$), they share a common eigenbasis, allowing their measurement outcomes to be determined from a single experiment involving a specific basis rotation~\cite{Gokhale2020}.

The set $\mathcal{Q}_n^*$ can be partitioned into $2^n+1$ disjoint subsets $S_j$~\cite{Lawrence2002}. 
Each of these subsets is formed by $2^n-1$ operators that all pairwise commute, and they are maximal in this respect: no additional operator from $\mathcal{Q}_n^*$ can be included in an $S_j$ without disrupting this complete pairwise commutativity. 
Such subsets are termed maximally commuting. 
This partitioning into maximally commuting subsets provides the largest sets of simultaneously measurable Pauli operators, crucial for minimizing measurement circuits in protocols like the QPD discussed later.

For each maximally commuting subset $S_j$, there exists a unitary transformation $V_j$ that simultaneously diagonalizes all its elements, yielding the representation~\cite[Lemma 3]{Harada2024}:
\begin{align}\label{eq:S_j}
S_j = \left\{s_{j,\vec{a}}V_jZ_{\vec{a}}V^{\dagger}_j \mid \vec{a}\in \{0,1\}^n, \vec{a} \neq \vec{0}\right\},
\end{align}
where operators $Z_{\vec{a}}$ are diagonal in the computational basis and $s_{j,\vec{a}} \in \{-1,1\}$ are sign factors.

Given that the QPD presented in \Cref{sec:main} uses these $V_j$ transformations, their efficient implementation on a quantum computer is vital.
Indeed, for $n$ qubits, each required unitary operator $V_j$ can be implemented efficiently as a quantum circuit with a depth of at most $n+2$, using $n$ Hadamard gates, at most $n$ Phase gates, and $\frac{n(n-1)}{2}$ controlled-$Z$ gates~\cite[Lemma 4]{Harada2024}. 
In the single-qubit case, the three necessary operators are:
\begin{align}\label{eq:single_qubit_mub_transformations}
V_0 = I, \quad V_1 = H, \quad V_2 = SH,
\end{align}
where $H$ is the Hadamard gate and $S$ is the phase gate~\cite{Harada2024}.

\subsection{State transfer via quantum channels}\label{sec:channels}
Independent of the specific protocol, any quantum state transfer between devices, including the inevitable noise affecting the process, can be described using \emph{quantum channels}~\cite{Wilde2017}.
A quantum channel $\mathcal{C}: L(A) \rightarrow L(B)$ is a linear map that transforms states of system $A$ into valid states of system $B$, i.e., $\mathcal{C}(\rho_A) \in D(B)$ for all $\rho_A \in D(A)$.
To ensure physical validity, these maps must be \emph{completely positive trace-preserving~(CPTP)}~\cite{Wilde2017}.
The set of such maps from $A$ to $B$ is denoted as $\cptp(A, B)$, with $\cptp(A) = \cptp(A, A)$ for maps acting on a single system. The ideal, noiseless transfer of an $n$-qubit state from $A$ to $B$ is represented by the identity channel $\mathcal{I} \in \cptp(A,B)$, where $\mathcal{I}(\rho_A) = \rho_B$.

To quantify the difference between quantum channels, e.g., comparing an implemented channel $\mathcal{C}$ to an ideal state transfer $\mathcal{I}$, one analyzes their output states.
A measure of distinguishability between any two quantum states $\sigma_1$ and $\sigma_2$, is their trace distance $\|\sigma_1 - \sigma_2\|_{1}$, derived from the trace norm $\|X\|_1 = \tr\left[\sqrt{X^{\dagger}X}\right]$~\cite{Wilde2017}.
This distance is bounded such that $0\le\|\sigma_1 - \sigma_2\|_{1}\le2$~\cite{Fuchs1999}.
To find the difference between channels $\mathcal{C}_1$ and $\mathcal{C}_2$, this trace distance of their outputs is maximized over all possible inputs.
This requires considering all input states $\rho \in D(R\otimes A)$, crucially allowing the primary system $A$ to be initially entangled with an ancillary system $R$.
This yields the \emph{diamond norm distance} $\|\mathcal{C}_1 - \mathcal{C}_2\|_{\diamond}$~\cite{Wilde2017}.
It is the standard operational metric as it captures worst-case channel distinguishability, calculated by:
\begin{align}\label{eq:diamond_norm_def}
\begin{split}
 &\|\mathcal{C}_1 - \mathcal{C}_2\|_\diamond \\&= \max_{\rho \in D(R\otimes A)} \| (\mathcal{I}_R \otimes \mathcal{C}_1)(\rho) - (\mathcal{I}_R \otimes \mathcal{C}_2)(\rho) \|_1,
\end{split}
\end{align}
The value of the diamond norm is bounded as
\begin{align}\label{eq:bounds_diamond_norm}
    0 \le \|\mathcal{C}_1 - \mathcal{C}_2\|_\diamond \le 2,
\end{align}
where a value of $0$ indicates that the channels are identical, while the maximum value of $2$ signifies that they are perfectly distinguishable by some measurement.
These bounds are inherited directly from the trace distance of the output states.
Consequently, for any channel $\mathcal{C}$, its deviation from the ideal state transfer is quantified by $\|\mathcal{C} - \mathcal{I}\|_{\diamond}$.

While the diamond norm quantifies the worst-case distinguishability between quantum channels, the \emph{entanglement fidelity} $F$ measures an average-case distance of a channel to the identity channel $\mathcal{I}$~\cite{Haah2023}.
Specifically, for an $n$-qubit channel $\mathcal{C}\in \cptp(A, B)$ with $\dim(A)=\dim(B)=2^n$, its entanglement fidelity quantifies how well $\mathcal{C}$ preserves the maximally entangled state $\Phi_{RA}$, initially shared between system $A$ and an $n$-qubit reference system $R$~\cite{Nielsen2002}.
It is defined as
\begin{align}\label{eq:entanglement_fidelity}
 F(\mathcal{C}) = \braket{\Phi_{RB}|(\mathcal{I}_R\otimes \mathcal{C})(\Phi_{RA})|\Phi_{RB}}.
\end{align}
A value of $F(\mathcal{C})=1$ indicates that the channel perfectly preserves this entanglement structure, implying $\mathcal{C}=\mathcal{I}$.

Quantum channels can be described using various mathematical representations. 
For the purposes of this work, the $\chi$-matrix formalism is particularly beneficial, as it expresses an $n$-qubit channel $\mathcal{C}$ in terms of the Pauli operator basis $\mathcal{Q}_n$~\cite{Wood2015}. 
In this representation, the action of the channel on an input state $\rho$ is given by:
\begin{align}
 \mathcal{C}(\rho) &= \sum_{a,b =0}^{2^{2n}-1}\chi_{ab}P_a \rho P_b.
\end{align}
A $\chi$-matrix represents a valid quantum channel, i.e., a linear CPTP map, if it is positive-semidefinite and Hermitian, and additionally satisfies the condition $\sum_{a,b}\chi_{ab} P_a P_b = I^{\otimes n}$~\cite{Wood2015}.
The entanglement fidelity of the channel is given in this representaion by~\cite{Wood2015}
\begin{align}\label{eq:ent_fidelity_chi}
    F(\mathcal{C}) = \chi_{00}.
\end{align}

The $\chi$-matrix partitions the channel’s action into two physically distinct components:
\begin{align}\label{eq:coherent_incoherent_part}
 \mathcal{C}(\rho) &= \underbrace{\sum_{a}\chi_{aa}P_a \rho P_a}_{\text{incoherent part}} + \underbrace{\sum_{a\ne b}\chi_{ab}P_a \rho P_b}_{\text{coherent part}}.
\end{align}
The diagonal terms describe the incoherent part, representing probabilistic application of Pauli errors with $\chi_{aa} \ge 0$ and $\sum_a \chi_{aa} = 1$.
The off-diagonal terms, i.e., $\chi_{ab}$ for $a \ne b$, describe the coherent part, capturing quantum interference effects between different Pauli operations. 
A channel is called a \emph{Pauli channel} if its $\chi$-matrix is diagonal, i.e., $\chi_{ab}=0$ for $a \neq b$, meaning it only consists of probabilistic Pauli errors:
\begin{align}\label{eq:pauli_channel}
 \mathcal{C}_{\text{Pauli}}(\rho) = \sum_{a=0}^{2^{2n}-1}\chi_{aa}P_a \rho P_a.
\end{align}

A fundamental Pauli channel is the \emph{depolarizing channel} $\mathcal{D}_p$~\cite{Quek2024}, defined as:
\begin{align}\label{eq:depol_channel}
 \mathcal{D}_p(\varphi) := p \mathcal{I}^{\otimes n}(\varphi) + \frac{1-p}{2^{2n}-1}\sum_{P \in \mathcal{Q}_n^*} P \varphi P.
\end{align}
This channel leaves the state $\varphi$ unchanged with probability $p$ and applies a uniformly random non-identity Pauli error with probability $1-p$. 
Its entanglement fidelity is $F(\mathcal{D}_p) = p$ (see \Cref{eq:ent_fidelity_chi}).
Using \Cref{eq:depol_channel}, we can represent the depolarizing channel $\mathcal{D}_p$ as 
\begin{align}\label{eq:depol_channel_with_D0}
    \mathcal{D}_p = p\mathcal{I}^{\otimes n} + (1-p)\mathcal{D}_0.
\end{align}

\subsection{State transfer using local operations and classical communication}
For distributed quantum devices $A$ and $B$, which are typically spatially separated, operational capabilities are generally restricted to actions performed locally on each individual subsystem, possibly coordinated by classical communication. 
This class of operations, motivated by the physical constraints inherent in such distributed scenarios, is known as \emph{local operations and classical communication}~(LOCC), denoted $\locc(A,B) \subset \cptp(A\otimes B)$~\cite{Chitambar2014}.

Using LOCC, quantum teleportation implements a state transfer between systems $A$ and $B$ by leveraging the shared maximally entangled state $\Phi_{AB}$ as its resource~\cite{Bennett1993}. 
Ideally, with a perfect resource $\Phi_{AB}$, this process perfectly transfers an arbitrary quantum state from sender to receiver, realizing an identity channel for the transferred state.
The protocol for teleporting a single-qubit state $\varphi$, as illustrated in \Cref{fig:teleportation_circuit}, can be extended to transfer an $n$-qubit state~\cite{Bechtold2025}. 
Such an extension involves employing $n$ parallel instances of this single-qubit procedure, which requires $n$ individual shared, maximally entangled states, collectively denoted as $\Phi_n$.

\begin{figure}
 \centering
\includegraphics[trim={0 -0.4cm 0 -0.1cm},clip]{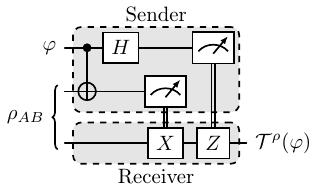}
 \caption{Single-qubit teleportation protocol with resource state $\rho_{AB}$.}
 \label{fig:teleportation_circuit}
\end{figure}

In practice, the shared resource state $\rho_{AB}$ may be imperfect, i.e., an NME state, rather than the ideal $\Phi_n$.
Using such an NME state for teleportation introduces errors, transforming the ideal identity channel into a noisy Pauli channel $\mathcal{T}^{\rho}$~\cite{Bowen2001,Gu2004}:
\begin{align}
 \mathcal{T}^{\rho}: \varphi \mapsto \sum_{a = 0}^{2^{2n-1}}\braket{\Phi^{a}|\rho|\Phi^{a}}P_a\varphi P_a.
\end{align}
Here, $\ket{\Phi^{a}} = (P_a \otimes I^{\otimes n}) \ket{\Phi_n}$ is the image of the maximally entangled state after applying the corresponding Pauli operator $P_a \in \mathcal{Q}_n$.
The error probability associated with each Pauli operator $P_a$ is determined by the fidelity $\braket{\Phi^a | \rho | \Phi^a}$. 
This establishes a direct connection between the quality of the resource state and the noise profile of the resulting channel.
The entanglement fidelity of the teleportation channel $\mathcal{T}^{\rho}$ corresponds directly to the fidelity between the resource state $\rho$ and the maximally entangled state~$\Phi_n$:
\begin{align}\label{eq:entanglement_fidelity_teleportation}
 F(\mathcal{T}^{\rho}) = \braket{\Phi_n|\rho|\Phi_n}.
\end{align}
Consequently, improving the fidelity $\braket{\Phi_n|\rho|\Phi_n}$ of the resource state $\rho$ directly enhances the entanglement fidelity of the teleportation channel.

However, attempts to improve this resource fidelity for teleportation are constrained by the requirement that only LOCC are permitted.
Fundamentally, LOCC cannot increase the entanglement of the shared state $\rho_{AB}$~\cite{Chitambar2014}. 
Given that the target state $\Phi_n$ is maximally entangled, if $\rho_{AB}$ initially possesses less entanglement than $\Phi_n$, then the fidelity $\braket{\Phi_n|\Lambda(\rho_{AB})|\Phi_n}$ achievable with any LOCC operation $\Lambda \in\locc(A,B)$ will inherently be less than one.
This motivates the \emph{fidelity of distillation} $f(\rho_{AB})$, which assesses the potential quality of a resource state for tasks like teleportation~\cite{Regula2019,Regula2020}.
It quantifies the maximal achievable fidelity with the target state $\Phi_n$ starting from $\rho_{AB}$, optimized over all possible LOCC channels $\Lambda \in\locc(A,B)$: 
\begin{align}\label{eq:fid_of_distillation}
 f(\rho_{AB}) = \max_{\Lambda \in\locc(A,B)} \braket{\Phi_n|\Lambda(\rho_{AB})|\Phi_n}.
\end{align}
The value of $f(\rho_{AB})$ ranges from $2^{-n}$ for separable states to $1$ for maximally entangled states~\cite{Horodecki2007}. 
States are classified as NME when $2^{-n} < f(\rho_{AB}) < 1 $.

\subsection{Channel twirling}\label{sec:twirl}
Twirling is a mathematical concept that aims to simplify the structure of a quantum channel by averaging it over a set of unitary transformations.
The full channel twirl of a channel $\mathcal{C}$ over the set of $n$-qubit unitary operators $\mathcal{U}(2^n)$ is given by
\begin{align}
 \mathbb{E}_\mu(\mathcal{C}): \rho \mapsto \int_{\mathcal{U}(2^n)} U^\dagger\mathcal{C}(U\rho U^{\dagger})U \,d\mu(U),
\end{align}
where the average is calculated according to the Haar measure $\mu$ on $\mathcal{U}(2^n)$.
For any channel $\mathcal{C}$, the result of full twirling is always a depolarizing channel: $\mathbb{E}_\mu(\mathcal{C})=\mathcal{D}_{F(\mathcal{C})}$~\cite{Horodecki1999}. 
Importantly, the entanglement fidelity is preserved under twirling: $F(\mathcal{C}) = F(\mathbb{E}_\mu(\mathcal{C}))$~\cite{Horodecki1999}.

In practice, the integral over the Haar measure is approximated by averaging over finite ensembles of unitary operators. 
For an ensemble $\mathcal{E} = \{(p_i, U_i)\}_{i=0}^{K-1}$ with $U_i\in \mathcal{U}(2^n)$ and probabilities $p_i$, we denote an $\mathcal{E}$-channel-twirl of $\mathcal{C}$ by
\begin{align}\label{eq:expectation_ensemble}
 \mathbb{E}_{\mathcal{E}}(\mathcal{C}): \rho \mapsto \sum_{i=0}^{K-1} p_iU_i^\dagger\mathcal{C}(U_i\rho U_i^{\dagger})U_i.
\end{align}

An ensemble is called a \emph{unitary two-design}~\cite{Cleve2016} if the $\mathcal{E}$-channel-twirl exactly reproduces the full twirl of $\mathcal{C}$:
\begin{align}
\mathbb{E}_{\mathcal{E}}(\mathcal{C}) = \mathbb{E}_\mu(\mathcal{C}).
\end{align}
For $n$-qubit systems, the minimal ensemble size of a unitary two-design satisfies~\cite{Gross2007, Roy2009}
\begin{align}\label{eq:size_two_design}
 2^{4n} - 2^{2n+1} + 2 \le K \le 2^{5n} - 2^{3n}.
\end{align}
For one qubit, the following ensemble of $12$ unitaries with equal probability is a two-design~\cite{Bengtsson2017}
\begin{align}\label{eq:single_qubit_two_desing}
 \begin{split}
 \big\{(12^{-1}, AB)\big|& A \in \{I, HS, SH\}, \\
 &B\in\{I, X, Y, Z\}\big\}.
 \end{split}
\end{align}
For $n$ qubits, there exist efficient constructions for unitary two-designs~\cite{Cleve2016,Dankert2009}.

Another property of an ensemble is \emph{Pauli mixing}~\cite{Cleve2016}.
Such an ensemble $\mathcal{E} = \{(p_i, U_i)\}_{i=0}^{K-1}$ comprises unitaries $U_i$ that preserve the $n$-qubit Pauli group $\mathcal{P}_n$ under conjugation, i.e., for any $P \in \mathcal{P}_n$, the operator $U_i^{\dagger}PU_i$ is also an element of $\mathcal{P}_n$.
This ensemble is then defined as Pauli mixing if, for any $P \in \mathcal{Q}_n^*$, the random output $P_{\text{out}} =\pi(U^{\dagger}PU)$, where $U$ is drawn from $\mathcal{E}$ with probability $p_i$, is uniformly distributed over $\mathcal{Q}_n^*$. 
Here, $\pi$ is the projection defined in \Cref{eq:projection_pauli}.
This uniform distribution means that for any input $P \in \mathcal{Q}_n^*$ and any target $P' \in \mathcal{Q}_n^*$, the probability of $P_{\text{out}}$ being $P'$ is:
\begin{equation}\label{eq:definition_pauli_mixing}
\Pr(P_{\text{out}} = P') = \sum_{i \in \text{Idx}(P, P')} p_i = \frac{1}{|\mathcal{Q}_n^*|},
\end{equation}
where the index set is $\text{Idx}(P, P') := \{i \mid \pi(U_i^{\dagger}PU_i) = P'\}$.
Such an ensemble thereby effectively randomizes any non-identity Pauli operator into a uniform probabilistic mixture over $\mathcal{Q}_n^*$.
As a result, twirling any any Pauli channel $\mathcal{C}_{\text{Pauli}}$ with a Pauli mixing ensemble $\mathcal{E}$ transforms it into a depolarizing channel that retains the original entanglement fidelity:
\begin{align}\label{eq:pauli_mixing_to_depol}
\mathbb{E}_{\mathcal{E}}(\mathcal{C}_{\text{Pauli}}) = \mathcal{D}_{F(\mathcal{C}_{\text{Pauli}})}.
\end{align}
A proof is provided by \Cref{lemma_pauli_mixing} in \Cref{sec:appendix_pauli_mixing}.
Pauli mixing ensembles can be smaller than unitary two-designs.
For $n>1$ qubits, there exists a Pauli mixing ensemble of size $2^{3n} - 2^{n}$, which can be efficiently constructed~\cite{Cleve2016}.
In the special case of $n=1$, three unitaries suffice for a Pauli mixing ensemble~\cite{Chau2005}:
\begin{align}\label{eq:single_qubit_pauli_mixing}
 \left\{(3^{-1},U)\middle| U \in \{I, HS, SH\}\right\}.
\end{align}

\subsection{Simulation of state transfer via a QPD}
The operations physically realizable with current hardware between devices $A$ and $B$ are typically limited to a set $S \subset \cptp(A, B)$. 
Given this limitation, the goal is to simulate the ideal, noiseless state transfer from $A$ to $B$ using only the available operations from the set $S$.
A common example for $S$ involves operations derived from $\locc(A,B)$.
The perfect state transfer, ensuring faithful transmission of an $n$-qubit quantum state, is represented by the identity channel $\mathcal{I}^{\otimes n}_{A \rightarrow B}$.
The QPD expresses this ideal channel as a linear combination of implementable channels $\mathcal{F}_i \in S$:
\begin{align}\label{eq:qpd_identity}
 \mathcal{I}^{\otimes n}_{A \rightarrow B} = \sum_{i=1}^m c_i \mathcal{F}_i.
\end{align}
The coefficients $\{c_i\}_{i}$ form a quasiprobability distribution, satisfying $\sum_i c_i = 1$ while allowing for negative values $c_i \in \mathbb{R}$.

The presence of negative coefficients in the QPD prevents a direct physical realization of $\mathcal{I}^{\otimes n}_{A \rightarrow B} $ as a probabilistic mixture of the channels $\mathcal{F}_i \in S$.
However, the QPD still allows the exact calculation of expectation values for the ideal channel.
Using \Cref{eq:qpd_identity}, the expectation value $\tr[O\mathcal{I}^{\otimes n}_{A \rightarrow B}(\rho)]$, where the state $\rho$ is transferred from system $A$ to $B$ and then measured with observable $O$, can be formulated as:
\begin{align}
 \tr[O\mathcal{I}^{\otimes n}_{A \rightarrow B}(\rho)] &= \sum_{i=1}^m c_i \tr[O\mathcal{F}_i(\rho)]\\
 &= \sum_{i=1}^m p_i \tr[O\mathcal{F}_i(\rho)] \kappa \sign(c_i) \label{eq:expectation}
\end{align}
where $p_i= |c_i|\kappa^{-1}$ and $\kappa = \sum_i |c_i| \ge 1$.

This formulation enables the quasiprobabilistic simulation of the state transfer.
This is a Monte Carlo simulation to estimate the target expectation value $\tr[O\mathcal{I}^{\otimes n}_{A \rightarrow B}(\rho)]$ by probabilistically sampling only operations $\mathcal{F}_{i}$ from the set $S$.
The protocol involves $N$ independent circuit executions, where each execution $j$ proceeds as follows: 
First, an index $1\le i^{(j)} \le m$ is selected according to the probability distribution $p_i$. 
The corresponding operator $\mathcal{F}_{i^{(j)}}$ is then applied to the input state $\rho\in D(A)$, preparing the modified state $\mathcal{F}_{i^{(j)}}(\rho) \in D(B)$. 
Next, the observable $O$ is measured on this state, yielding a single-shot measurement outcome~$m_j\in \mathbb{R}$.
Finally, each outcome $m_j$ is classically scaled by $\kappa \sign(c_{i(j)})$, which is the corresponding coefficient to the chosen $\mathcal{F}_{i^{(j)}}$.
The final expectation value estimate is computed by averaging all weighted outcomes:
\begin{align}\label{eq:qpd_estimator}
\widehat{\braket{O}}_\rho^N = \frac{1}{N}\sum_{j=0}^{N-1}\kappa \sign(c_{i^{(j)}}) m_{j}.
\end{align}
As $N \to \infty$, this estimator converges to the true expectation value $\tr[O\mathcal{I}^{\otimes n}_{A \rightarrow B}(\rho)]$.
However, to achieve a target statistical error $\epsilon$ with this quasiprobabilistic simulation, the number of required shots $N$ scales as $O(\kappa^2/\epsilon^2)$~\cite{Temme2017}. 
This introduces a sampling overhead of $O(\kappa^2)$ compared to directly estimating the expectation value $\tr[O\mathcal{I}^{\otimes n}_{A \rightarrow B}(\rho)]$ from a hypothetical, direct implementation of the ideal state transfer.
Minimizing the QPD’s $\kappa$-factor is therefore critical for practical implementations.

The QPD framework for simulating ideal identity channels is applied in circuit cutting techniques such as wire cutting~\cite{Brenner2023,Harada2024,Lowe2022}.
In wire cutting, a direct $n$-qubit quantum connection, i.e., an identity channel $\mathcal{I}^{\otimes n}_{A \rightarrow B}$ that ensures faithful state transfer between circuit parts $A$ and $B$, is replaced by its quasiprobabilistic simulation. 
The objective is to replicate this removed identity channel's action, typically using only LOCC between $A$ and $B$. 
Such LOCC protocols usually entail measurements on system $A$, classical communication of the outcomes, and conditioned state preparations on system $B$.
The minimal possible sampling overhead $\gamma(\mathcal{I}^{\otimes n}_{A \rightarrow B})$ of such a wire cut, characterized by the minimal $\kappa$-factor of all possible QPDs, is known to be~\cite{Brenner2023}:
\begin{align}\label{eq:overhead_trad_wire_cut}
\gamma(\mathcal{I}^{\otimes n}_{A \rightarrow B}) = 2^{n+1} - 1.
\end{align}

If the devices additionally share a NME state $\rho_{A'B}$ as a resource, where $A'$ is an auxiliary system held by party $A$, the range of implementable operations effectively expands. 
The LOCC protocols can now leverage this shared entanglement. 
In this entanglement-assisted scenario, the minimal sampling overhead $\gamma^{\rho}(\mathcal{I}^{\otimes n}_{A \rightarrow B})$ of the wire cut is reduced and depends on the shared resource state via its fidelity of distillation~$f(\rho_{A'B})$~\cite{Bechtold2024a,Bechtold2025}:
\begin{align}\label{eq:gamma_nme_simplified}
\gamma^{\rho}(\mathcal{I}^{\otimes n}_{A \rightarrow B}) = \frac{2}{f(\rho_{A'B})} - 1.
\end{align}

For separable states $\rho_{A'B}$, the fidelity of distillation evaluates to $f(\rho_{A'B})= 2^{-n}$, which recovers the overhead without NME states $\gamma^{\rho}(\mathcal{I}^{\otimes n}_{A \rightarrow B})=\gamma(\mathcal{I}^{\otimes n}_{A \rightarrow B})$.
Increasing the entanglement of the shared NME state increases $f(\rho_{A'B})$, thereby reducing the sampling overhead $\gamma^{\rho}(\mathcal{I}^{\otimes n}_{A \rightarrow B})$.
As $\rho_{A'B}$ approaches a maximally entangled state, i.e., $f(\rho_{A'B}) \rightarrow 1$, the minimal sampling overhead $\gamma^{\rho}(\mathcal{I}^{\otimes n}_{A \rightarrow B})$ converges to one, eliminating the need for additional circuit executions.
This ideal scenario is equivalent to using the maximal entangled resource state in quantum teleportation.

 \section{Simulating ideal state transfers using a noisy state transfer}\label{sec:main}
This section presents our method for simulating ideal quantum state transfers across distributed quantum devices using imperfect physical channels.
First, \Cref{sec:main_qpd} introduces a QPD of the ideal identity channel $\mathcal{I}^{\otimes n}$.
This decomposition leverages the available imperfect channel between devices to simulate ideal state transfers, exhibiting a sampling overhead that decreases as the channel quality improves. 
To enable practical implementation, \Cref{sec:comp_ent_fid} presents a systematic method for calculating the required QPD coefficients for any given channel between the devices.
Furthermore, \Cref{sec:wire_cut_teleportation} applies the developed QPD to formulate a wire cutting protocol that uses shared NME states.
Significantly, this approach can leverage arbitrary NME states, including mixed ones, overcoming the restriction to idealized pure states found in prior protocols~\cite{Bechtold2024a,Bechtold2025}.
Finally, \Cref{sec:error_analysis} analyzes potential errors that may arise during the practical implementation of this QPD.

\subsection{QPD for the identity channel}\label{sec:main_qpd}
The goal is to enable the simulation of the $n$-qubit identity channel $\mathcal{I}^{\otimes n}$ using operations derived from an available, generally imperfect, physical channel $\mathcal{C}$ shared between distributed devices, supplemented by LOCC.
Our approach relies on expressing $\mathcal{I}^{\otimes n}$ as a linear combination of two specific $n$-qubit depolarizing channels: $\mathcal{D}_p$ with entanglement fidelity given by parameter $p\ne 0$ and $\mathcal{D}_0$ with zero entanglement fidelity.
By rearranging the definition of the depolarizing channel as given in \Cref{eq:depol_channel_with_D0}, the identity relating these channels is:
\begin{align}\label{eq:qpd_identiy_depol}
 \mathcal{I}^{\otimes n} = \frac{1}{p}\mathcal{D}_p - \left(\frac{1}{p}- 1\right)\mathcal{D}_0.
\end{align}
This decomposition forms the basis of our QPD. 
It allows simulating $\mathcal{I}^{\otimes n}$ by probabilistically executing either $\mathcal{D}_p$ or $\mathcal{D}_0$, weighted by the potentially negative coefficients $1/p$ and $-(1/p-1)$, respectively. 
The feasibility of this simulation hinges on our ability to construct $\mathcal{D}_p$ and $\mathcal{D}_0$ using the available physical resources, i.e., the shared channel $\mathcal{C}$ and LOCC.
We now detail the construction of these two depolarizing channels.

The channel $\mathcal{D}_p$ required in \Cref{eq:qpd_identiy_depol} can be obtained by twirling the available shared channel $\mathcal{C}$ with a suitable ensemble of unitary operations $\mathcal{E}=\{(p_i, U_i)\}$ such that the $\mathcal{E}$-channel-twirl yields $\mathbb{E}_{\mathcal{E}}(\mathcal{C}) = \mathcal{D}_{F(\mathcal{C})}$.
For instance, while a unitary two-design is generally suitable for an arbitrary channel $\mathcal{C}$, simpler ensembles can be employed if $\mathcal{C}$ possesses known structural properties.
These adaptations, aimed at minimizing the number of operators in the ensemble, will be further elaborated towards the end of this subsection.
Regardless of the specific ensemble $\mathcal{E}$ that achieves the desired twirling outcome, this process results in a depolarizing channel whose parameter $p$ is precisely the entanglement fidelity $F(\mathcal{C})$ of the original channel $\mathcal{C}$. 
Thus, we identify $p = F(\mathcal{C})$ in \Cref{eq:qpd_identiy_depol}.

The depolarizing channel $\mathcal{D}_0$ with zero entanglement fidelity must also be synthesized from available operations. 
While it can be constructed by twirling an arbitrary channel with zero entanglement fidelity, we aim to use a resource-efficient approach that requires the ensemble for the twirl as small as possible.
Specifically, we construct $\mathcal{D}_0$ by first implementing a measure-and-prepare channel $\mathcal{M}$ requiring only one-way classical communication, and then twirling $\mathcal{M}$ with a specific unitary ensemble $\mathcal{V}$. 
The following lemma details this construction.
\begin{restatable}{lemma}{restatelemma}
\label{lemma_mp_channel}
 Let $V_j$ be the $2^n+1$ different unitary operators that diagonalize the sets of maximally commuting Pauli operators, as defined in \Cref{eq:S_j}.
 Let $\mathcal{V} = \{(2^n+1)^{-1}, V_j^\dagger\}_{j=0}^{2^n}$ be the uniform ensemble over the adjoint unitary operators. 
 The $n$-qubit depolarizing channel $\mathcal{D}_0$ can be constructed as:
 \begin{align}
 \mathcal{D}_0 = \mathbb{E}_{\mathcal{V}}(\mathcal{M}),
 \end{align}
 where $\mathcal{M}$ is a measure-and-prepare channel defined by measuring in the computational basis $\{\ket{\vec{k}}\}_{\vec{k}\in\{0,1\}^n}$ and preparing a state $\rho_{\vec{k}}$ (see \Cref{eq:state_rho_k}) upon outcome~$\ket{\vec{k}}$:
 \begin{align}
 \mathcal{M}: \rho \mapsto \sum_{\vec{k}\in\{0,1\}^n}\tr\left[\ketbras{\vec{k}}\rho\right]\rho_{\vec{k}}
 \end{align}
 with the prepared states $\rho_{\vec{k}}$ being uniform mixtures over all computational basis states orthogonal to $\ket{\vec{k}}$, where $\delta_{\vec{k},\vec{l}}$ is the Kronecker delta:
 \begin{align}\label{eq:state_rho_k}
 \rho_{{\vec{k}}} = \sum_{\vec{l}\in\{0,1\}^n} \frac{1-\delta_{\vec{k},\vec{l}}}{2^n-1}\ketbras{\vec{l}}.
 \end{align}
 This construction uses the minimal number of measure-and-prepare circuits required to realize $\mathcal{D}_0$ without ancilla qubits.
\end{restatable}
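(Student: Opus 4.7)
The plan is to verify the channel identity $\mathbb{E}_\mathcal{V}(\mathcal{M})=\mathcal{D}_0$ by direct computation of the twirl, and then separately justify the minimality claim. The core computation rests on simplifying $\mathcal{M}$ into a Pauli form and then exploiting the partition of $\mathcal{Q}_n^*$ into the maximally commuting subsets $S_j$ from \Cref{eq:S_j}.

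First I would observe that $\rho_{\vec{k}} = (I - \ketbras{\vec{k}})/(2^n-1)$, so for any input state $\sigma$ with $\tr[\sigma]=1$ the measure-and-prepare channel reduces to $\mathcal{M}(\sigma) = (I-\mathcal{Z}(\sigma))/(2^n-1)$, where $\mathcal{Z}$ denotes the full dephasing channel in the computational basis. Using the standard Pauli representation $\mathcal{Z}(\sigma) = 2^{-n}\sum_{\vec{a}\in\{0,1\}^n} Z_{\vec{a}}\sigma Z_{\vec{a}}$ and substituting into the definition of the twirl yields
\begin{equation*}
(2^{2n}-1)\,\mathbb{E}_\mathcal{V}(\mathcal{M})(\rho) = (2^n+1)I - \tfrac{1}{2^n}\sum_{j=0}^{2^n}\sum_{\vec{a}\in\{0,1\}^n}(V_j Z_{\vec{a}} V_j^\dagger)\rho(V_j Z_{\vec{a}}V_j^\dagger).
\end{equation*}

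The key step is evaluating this double sum. The $\vec{a}=\vec{0}$ terms each contribute $\rho$ since $V_j I V_j^\dagger = I$, giving $(2^n+1)\rho$ after summing over $j$. For $\vec{a}\neq\vec{0}$, \Cref{eq:S_j} identifies $V_j Z_{\vec{a}} V_j^\dagger = s_{j,\vec{a}} P$ with $P\in S_j$, and the signs cancel under two-sided conjugation. Because $\{S_j\}_{j=0}^{2^n}$ partitions $\mathcal{Q}_n^*$ into $2^n+1$ disjoint sets of size $2^n-1$, each non-identity Pauli appears exactly once across the remaining sum, which evaluates to $\sum_{P\in\mathcal{Q}_n^*}P\rho P = 2^n I - \rho$ by the Pauli completeness identity $\sum_{P\in\mathcal{Q}_n}P\rho P = 2^n I$. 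The double sum therefore equals $2^n(I + \rho)$, and substituting back produces $\mathbb{E}_\mathcal{V}(\mathcal{M})(\rho) = (2^n I - \rho)/(2^{2n}-1)$, which matches $\mathcal{D}_0$ directly from \Cref{eq:depol_channel}.

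The main obstacle is the minimality claim. Any ancilla-free measure-and-prepare circuit implements a channel of the form $\rho\mapsto\sum_i\tr[E_i\rho]\sigma_i$ for POVM effects $E_i$ and output states $\sigma_i$ on the same $n$-qubit system, and a convex mixture of $m$ such circuits inherits this structure. I would approach the lower bound by exploiting the full $V_j$-covariance of $\mathcal{D}_0$: each individual circuit naturally pairs, via its measurement basis, with one maximally commuting Pauli subset, and since $\mathcal{D}_0$ has nonzero $\chi$-matrix diagonal on all $2^{2n}-1$ non-identity Paulis---which partition into exactly $2^n+1$ such subsets---at least $m\geq 2^n+1$ distinct circuits are required. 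Making this intuition rigorous, most plausibly via a symmetry-plus-Choi-rank argument or an explicit covariance-reduction step that forces each circuit to align with a single $S_j$, is the part demanding the most care.
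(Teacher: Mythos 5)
Your computation of the channel identity $\mathbb{E}_{\mathcal{V}}(\mathcal{M})=\mathcal{D}_0$ is correct and follows essentially the same route as the paper: rewrite $\mathcal{M}(\rho)$ as $\frac{1}{2^n-1}\bigl(\tr[\rho]\,I^{\otimes n}-\mathcal{Z}(\rho)\bigr)$ with $\mathcal{Z}$ the computational-basis dephasing channel expressed as the average $2^{-n}\sum_{\vec{a}}Z_{\vec{a}}\rho Z_{\vec{a}}$, push the twirl through, cancel the signs $s_{j,\vec{a}}$ under two-sided conjugation, use the disjoint partition of $\mathcal{Q}_n^*$ into the sets $S_j$, and finish with the completeness identity $\sum_{P\in\mathcal{Q}_n}P\rho P=2^n\tr[\rho]\,I^{\otimes n}$. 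Your intermediate expressions and the final value $(2^nI^{\otimes n}-\rho)/(2^{2n}-1)$ all check out against the paper's derivation.

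The minimality claim is where your proposal has a genuine gap, and you flag it yourself. The heuristic you offer --- that each ancilla-free measure-and-prepare circuit ``pairs, via its measurement basis, with one maximally commuting Pauli subset,'' so that covering the $\chi$-diagonal of $\mathcal{D}_0$ forces at least $2^n+1$ circuits --- does not survive scrutiny as stated: a general measure-and-prepare circuit need not measure in a stabilizer basis at all, its prepared states need not bear any relation to its measurement basis, and there is no a priori reason that the Pauli components it can produce are confined to a single $S_j$. The paper instead uses a rank-counting argument: the Pauli transfer matrix of $\mathcal{D}_0$ is diagonal with all $2^{2n}$ entries nonzero, hence has full rank, and the cited bound of Harada et al.\ states that realizing a channel of PTM rank $r$ as a (quasi)probabilistic mixture of $K$ ancilla-free measure-and-prepare circuits requires $K\ge (r-1)/(2^n-1)$; with $r=2^{2n}$ this yields $K\ge 2^n+1$, which the construction attains. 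If you wish to avoid the citation, the piece to make rigorous is precisely this dimension count (each such circuit factors through a $2^n$-outcome classical register and is trace-preserving, limiting its useful contribution to the PTM rank to $2^n-1$); the covariance intuition on its own will not close the argument.
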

A formal proof of the lemma is provided in \Cref{sec:proof_lemma}.
Substituting the constructions $\mathcal{D}_p = \mathbb{E}_{\mathcal{E}}(\mathcal{C})$ with $p = F(\mathcal{C})$ and $\mathcal{D}_0 = \mathbb{E}_{\mathcal{V}}(\mathcal{M})$ into the initial decomposition of \Cref{eq:qpd_identiy_depol} yields the central theorem for the QPD of the identity channel.

\begin{figure*}
 \centering
\includegraphics[trim={0 -0.4cm 0 -0.4cm},clip]{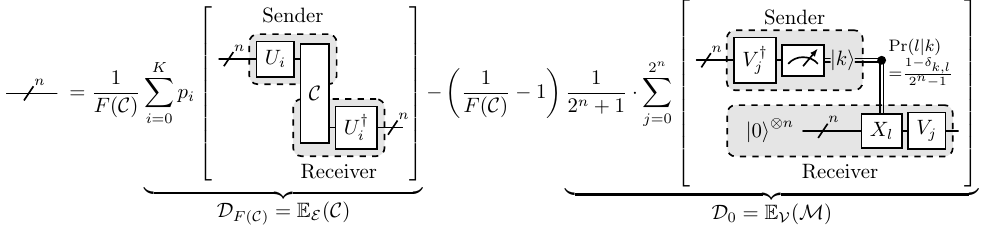}
 \caption{Circuit representation of the QPD from \Cref{theorem_arbitrary_cut}. The first term involves applying the channel $\mathcal{C}$ between sender and receiver, while the second term utilizes measure-and-prepare circuits, where $\Pr(l|k)$ denotes the conditional probability of applying operation $X_l$ (as defined in \Cref{eq:X_and_Z}) given measurement outcome $\ket{k}$.}
 \label{fig:wire_cut_n_qubits}
\end{figure*}

\begin{theorem}\label{theorem_arbitrary_cut}
 Let $\mathcal{C}$ be an $n$-qubit quantum channel with entanglement fidelity $F(\mathcal{C})$. 
 Let $\mathcal{E} = \{p_i, U_i\}_{i=0}^{K-1}$ be a unitary ensemble such that its $\mathcal{E}$-channel-twirl yields the depolarizing channel $\mathbb{E}_{\mathcal{E}}(\mathcal{C})=\mathcal{D}_{F(\mathcal{C})}$. 
 Let $\mathcal{V}$ and $\mathcal{M}$ be the unitary ensemble and measure-and-prepare channel defined in \Cref{lemma_mp_channel}. 
 Then, the $n$-qubit identity channel $\mathcal{I}^{\otimes n}$ admits the following QPD:
 \begin{equation}
 \mathcal{I}^{\otimes n} = \frac{1}{F(\mathcal{C})}\mathbb{E}_{\mathcal{E}}(\mathcal{C}) - \left(\frac{1}{F(\mathcal{C})}-1\right)\mathbb{E}_{\mathcal{V}}(\mathcal{M}).
 \end{equation}
\end{theorem}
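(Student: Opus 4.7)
The plan is to combine three ingredients already assembled in the preceding material: the affine rewriting of $\mathcal{I}^{\otimes n}$ in terms of two depolarizing channels, the twirling hypothesis that turns $\mathcal{C}$ into a depolarizing channel with matching entanglement fidelity, and the measure-and-prepare realization of $\mathcal{D}_0$ provided by Lemma~\ref{lemma_mp_channel}. Since each piece is already in place, the proof should reduce to a short substitution argument; the content of the theorem is the observation that these pieces compose consistently into a valid QPD.

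First I would start from \eqref{eq:depol_channel_with_D0}, which reads $\mathcal{D}_p = p\mathcal{I}^{\otimes n} + (1-p)\mathcal{D}_0$. Solving for the identity (permissible whenever $p \neq 0$) recovers the decomposition $\mathcal{I}^{\otimes n} = \tfrac{1}{p}\mathcal{D}_p - \bigl(\tfrac{1}{p}-1\bigr)\mathcal{D}_0$ of \eqref{eq:qpd_identiy_depol}. This already has the shape of a QPD into two depolarizing channels, and its two coefficients sum to $1$ by inspection.

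Second I would specialize $p := F(\mathcal{C})$, which is consistent with the theorem's standing hypothesis that $\mathbb{E}_{\mathcal{E}}(\mathcal{C}) = \mathcal{D}_{F(\mathcal{C})}$ and tacitly assumes $F(\mathcal{C}) \neq 0$. Substituting $\mathcal{D}_{F(\mathcal{C})} = \mathbb{E}_{\mathcal{E}}(\mathcal{C})$, which is the defining property of the ensemble $\mathcal{E}$ in the hypothesis, and $\mathcal{D}_0 = \mathbb{E}_{\mathcal{V}}(\mathcal{M})$, which is the conclusion of Lemma~\ref{lemma_mp_channel}, into \eqref{eq:qpd_identiy_depol} yields exactly the claimed identity
\begin{equation*}
\mathcal{I}^{\otimes n} = \frac{1}{F(\mathcal{C})}\mathbb{E}_{\mathcal{E}}(\mathcal{C}) - \left(\frac{1}{F(\mathcal{C})}-1\right)\mathbb{E}_{\mathcal{V}}(\mathcal{M}).
\end{equation*}

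Since both $\mathbb{E}_{\mathcal{E}}(\mathcal{C})$ and $\mathbb{E}_{\mathcal{V}}(\mathcal{M})$ are CPTP maps implementable from the available physical resources, and since the two coefficients $\tfrac{1}{F(\mathcal{C})}$ and $-(\tfrac{1}{F(\mathcal{C})}-1)$ sum to $1$, the expression is indeed a valid quasiprobability decomposition in the sense of \eqref{eq:qpd_identity}. The main obstacle lying behind this theorem is entirely carried by Lemma~\ref{lemma_mp_channel} (constructing $\mathcal{D}_0$ as a twirl of a measure-and-prepare channel using only $2^n+1$ diagonalizing unitaries) and by the analysis of which ensembles $\mathcal{E}$ realize the depolarizing twirl of $\mathcal{C}$; once those are in hand, Theorem~\ref{theorem_arbitrary_cut} itself is a one-line substitution and does not require any further structural argument.
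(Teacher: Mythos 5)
Your proposal is correct and follows exactly the same route as the paper's own proof: rearranging \eqref{eq:depol_channel_with_D0} into \eqref{eq:qpd_identiy_depol}, setting $p = F(\mathcal{C})$, and substituting $\mathcal{D}_{F(\mathcal{C})} = \mathbb{E}_{\mathcal{E}}(\mathcal{C})$ from the hypothesis and $\mathcal{D}_0 = \mathbb{E}_{\mathcal{V}}(\mathcal{M})$ from Lemma~\ref{lemma_mp_channel}. Your additional remarks on the implicit assumption $F(\mathcal{C}) \neq 0$ and on the coefficients summing to $1$ are accurate but not needed beyond what the paper states.
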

\begin{proof}
 The theorem follows directly by substituting $p=F(\mathcal{C})$, $\mathcal{D}_p = \mathbb{E}_{\mathcal{E}}(\mathcal{C})$, and $\mathcal{D}_0 = \mathbb{E}_{\mathcal{V}}(\mathcal{M})$ into \Cref{eq:qpd_identiy_depol}, using the result from \Cref{lemma_mp_channel}.
\end{proof}
\Cref{fig:wire_cut_n_qubits} shows a circuit implementation of the QPD of \Cref{theorem_arbitrary_cut}.
Executing the QPD involves the twirled channels $\mathbb{E}_{\mathcal{E}}(\mathcal{C})$ and $ \mathbb{E}_{\mathcal{V}}(\mathcal{M})$. 
Their implementation first requires sampling a unitary operator $U_i$ from $\mathcal{E}$ or $V_j$ from $\mathcal{V}$. 
Based on the sampled unitary operator, for $\mathbb{E}_{\mathcal{E}}(\mathcal{C})$, the sender then locally applies $U_i$ before the channel $\mathcal{C}$, and the receiver applies $U_i^\dagger$ after it. 
Similarly, for $\mathbb{E}_{\mathcal{V}}(\mathcal{M})$, the sender applies $V_j^{\dagger}$ before channel $\mathcal{M}$, and the receiver applies $V_j$ after it. 
Since these local operations are performed by spatially separated parties (sender and receiver) but must correspond to the same sampled operator ($U_i$ or $V_j$), their actions require coordination. 
This coordination is typically achieved by classically communicating the selected operator from the sender or by using pre-shared randomness to determine the sequence of unitary operators in advance.

The sampling overhead for the QPD presented in \Cref{theorem_arbitrary_cut} is determined by the sum of the absolute values of its quasiprobability coefficients. 
For this specific decomposition, the overhead $\kappa$ is given by:
\begin{align}\label{eq:overhead_qpd_channel}
 \kappa = \left|\frac{1}{F(\mathcal{C})}\right|+\left|- \left(\frac{1}{F(\mathcal{C})}-1\right)\right|=\frac{2}{F(\mathcal{C})} -1.
\end{align}
Thus, by using a channel $\mathcal{C}$ with higher entanglement fidelity $F(\mathcal{C})$, the sampling overhead of the QPD decreases, reaching its minimum when $\mathcal{C}$ approaches the ideal identity channel $\mathcal{I}^{\otimes n}$ with $F(\mathcal{I}^{\otimes n}) = 1$.
In this case, the identity channel $\mathcal{I}^{\otimes n}$ can be used directly without sampling overhead.
Crucially, for the sampling overhead of the QPD from \Cref{theorem_arbitrary_cut} as given in \Cref{eq:overhead_qpd_channel} to be lower than that of an optimal wire cut with only classical communication (see \Cref{eq:overhead_trad_wire_cut}), the entanglement fidelity of the used channel $\mathcal{C}$ must satisfy:
\begin{align}\label{eq:advantage_qpd_over_wire_cut}
    F(\mathcal{C}) > 2^{-n}.
\end{align}

Beyond reducing the sampling overhead, it is also practical to minimize the number of distinct circuits required when sampling from the QPD of \Cref{theorem_arbitrary_cut}. 
This can be achieved by carefully selecting the unitary ensemble $\mathcal{E}$ used for twirling $\mathcal{C}$.
While the ensemble $\mathcal{V}$ for twirling $\mathcal{M}$ is already minimal and fixed by \Cref{lemma_mp_channel}, the ensemble $\mathcal{E}$ should contain the minimum number of unitaries necessary to achieve the required twirling result, i.e., $\mathbb{E}_{\mathcal{E}}(\mathcal{C})=\mathcal{D}_{F(\mathcal{C})}$. 
The choice of the minimal ensemble $\mathcal{E}$ depends on the known structural properties of the channel $\mathcal{C}$:
\begin{enumerate}
\item For arbitrary channels $\mathcal{C}$, the ensemble $\mathcal{E}$ must generally be a unitary two-design to guarantee twirling to a depolarizing channel.
\item If $\mathcal{C}$ is known to be a Pauli channel, a smaller Pauli-mixing ensemble suffices for $\mathcal{E}$, as this directly transforms Pauli channels into depolarizing ones (see \Cref{eq:pauli_mixing_to_depol}). 
\item If $\mathcal{C}$ is already a depolarizing channel, no twirling is needed. 
The trivial ensemble $\mathcal{E}=\{(1, I^{\otimes n})\}$ can be used.
\end{enumerate}
This set of specific conditions allows optimizing the QPD's circuit count based on prior knowledge about $\mathcal{C}$.
Furthermore, for specific channels $\mathcal{C}$ with particular symmetries, even smaller unitary ensembles tailored to their structure might exist that satisfy the twirling condition, further reducing the number of circuits.

\subsection{Calibrating the QPD's coefficients}\label{sec:comp_ent_fid}
The QPD presented in \Cref{theorem_arbitrary_cut} utilizes an arbitrary shared channel $\mathcal{C}$ as a resource to reduce sampling overhead.
However, practical application of this QPD requires its calibration to the specific channel $\mathcal{C}$ by adapting the QPD's coefficients. 
These coefficients are crucial as they dictate the sampling probabilities for the various channels within the QPD.
The values of these coefficients, in turn, depend directly on the channel's entanglement fidelity $F(\mathcal{C})$.
Therefore, determining $F(\mathcal{C})$ is a prerequisite for implementing the quasiprobabilistic simulation. 
This subsection outlines a practical method for measuring $F(\mathcal{C})$ on distributed quantum devices connected by channel $\mathcal{C}$ and capable of LOCC.

Directly measuring the entanglement fidelity, as defined in \Cref{eq:entanglement_fidelity}, poses practical challenges in the considered distributed scenario.
It would require preparing a maximally entangled state $\ket{\Phi_n}$ on one device, transmitting half of it via the channel $\mathcal{C}$ to the other device, and subsequently performing a joint measurement projecting the resulting shared state onto $\ket{\Phi_n}$. 
Such non-local measurements are unavailable in the considered distributed scenario.

Our approach circumvents this difficulty by leveraging the $\mathcal{E}$-channel-twirl that is part of the QPD of \Cref{theorem_arbitrary_cut}.
Twirling the channel $\mathcal{C}$ with this unitary ensemble $\mathcal{E}$ yields the depolarizing channel $\mathbb{E}_{\mathcal{E}}(\mathcal{C})=\mathcal{D}_{F(\mathcal{C})}$.
The entanglement fidelity $F(\mathcal{C})$ can be directly related to the probability $P_{0 \to 0}$ of preserving the specific state $\ket{0}^{\otimes n}$ when passed through this depolarizing channel.
This relationship, whose derivation is detailed in \Cref{sec:appendix_ent_fidelity}, is given by:
\begin{align}\label{eq:compute_fid_ent_ordered}
F(\mathcal{C}) = \frac{2^n+1}{2^n} P_{0 \to 0} - \frac{1}{2^n},
\end{align}
where $P_{0 \to 0}$ is defined as
\begin{align}\label{eq:P_0_to_0}
P_{0 \to 0} = \braket{0^{\otimes n}|\mathbb{E}_{\mathcal{E}}(\mathcal{C})((\ketbras{0})^{\otimes n})|0^{\otimes n}}.
\end{align}

To calculate $F(\mathcal{C})$ using this expression, the probability $P_{0 \to 0}$ must be experimentally determined.
This measurement uses the channel $\mathcal{C}$ and either classical communication or pre-shared randomness to coordinate the channel twirl.
Crucially, this characterization of $F(\mathcal{C})$ typically needs to be performed only once initially, before executing subsequent quasiprobabilistic simulations involving the QPD with $\mathcal{C}$. 
This relies on the assumption that the noise characteristics of the shared channel are stable over the relevant timescale.

\subsection{Wire cutting with NME states}\label{sec:wire_cut_teleportation}
While the QPD detailed in \Cref{theorem_arbitrary_cut} employs an arbitrary shared channel $\mathcal{C}$ as a resource for simulating the identity channel, shared entangled states are often the primary resource for distributed quantum computing~\cite{Barral2024,Caleffi2024}.
This section adapts our QPD framework to utilize such shared entangled states, specifically connecting this approach to wire cutting techniques that employ potentially mixed NME states.
A key advantage of this adaptation is its ability to handle mixed NME states, thereby addressing limitations of previous work restricted to pure NME resources~\cite{Bechtold2024a,Bechtold2025}.
Moreover, shared entanglement offers greater flexibility, as it can be established between distant devices indirectly, e.g., via entanglement routing across intermediate nodes~\cite{Abane2025}.
Furthermore, as \Cref{sec:channel_to_pauli_channel} explains, the ability to convert any available channel $\mathcal{C}$ into a shared entangled state reinforces the focus on entanglement, particularly since this approach may yield greater benefits than directly applying the channel within the QPD.

The main idea is to implement the required channel $\mathcal{C}$ from the QPD of \Cref{theorem_arbitrary_cut} by utilizing the available shared NME state $\rho$ as a resource within an LOCC protocol. 
A primary example, detailed below, is realizing $\mathcal{C}$ via the standard quantum teleportation protocol $\mathcal{T}^{\rho}$, which consumes the state $\rho$.

\subsubsection{Sampling overhead}
Implementing the QPD using $\mathcal{C}=\mathcal{T}^{\rho}$ incurs a sampling overhead determined by the entanglement fidelity of the teleportation channel $F(\mathcal{T}^{\rho})$.
According to \Cref{eq:entanglement_fidelity_teleportation}, the entanglement fidelity is given by $F(\mathcal{T}^{\rho}) = \braket{\Phi_{n}|\rho|\Phi_{n}}$.

Therefore, to minimize the sampling overhead associated with a given initial NME state $\rho$, one must maximize its fidelity with $\Phi_n$ before the state is used in teleportation.
Using the idea of the fidelity of distillation, as defined in \Cref{eq:fid_of_distillation}, this maximization is achieved by preprocessing the state $\rho$ with the optimal LOCC operator $\Lambda_{\max}$ that yields the highest possible fidelity:
\begin{align}
 \Lambda_{\max} = \argmax_{\Lambda \in\locc(A,B)} \braket{\Phi_n|\Lambda(\rho)|\Phi_n}.
\end{align}

Let $f(\rho)=F(\mathcal{T}^{\Lambda_{\max}(\rho)}) = \braket{\Phi_n|\Lambda_{\max}(\rho)|\Phi_n}$ denote this maximum LOCC-assisted fidelity achievable from the initial state $\rho$. 
By implementing the channel as $\mathcal{C} = \mathcal{T}^{\Lambda_{\max}(\rho)}$ in \Cref{theorem_arbitrary_cut}, the wire cutting protocol achieves the minimal possible sampling overhead according to \Cref{eq:gamma_nme_simplified} obtainable from the initial NME state $\rho$:
\begin{align}
 \frac{2}{F(\mathcal{T}^{\Lambda_{\max}(\rho)})} -1 &= \frac{2}{f(\rho)} -1
 = \gamma^{\rho}(\mathcal{I}^{\otimes n}_{A \rightarrow B}).
\end{align}
Consequently, we obtain a wire cutting protocol capable of achieving the optimal sampling overhead when using a shared NME state $\rho$, subject to optimizations with LOCC operators $\Lambda$ on the state $\rho$.

However, finding the optimal $\Lambda_{\max}$ presents a significant challenge. 
The complete set of LOCC protocols allows for an unbounded number of communication rounds~\cite{Chitambar2014}, making a direct maximization over all possible operations from this set intractable on physical hardware.
Instead, a promising strategy to identify feasible LOCC operations with limited communication yielding high fidelity could be the application of iterative optimization routines.
Such routines would optimize over restricted, parameterizable subsets of LOCC operations that are feasible for the target devices. 
The optimization aims to maximize the entanglement fidelity $F(\mathcal{T}^{\Lambda(\rho)})$, which can be estimated using the technique described in \Cref{sec:comp_ent_fid}. 
For instance, a practical compromise is optimizing over local unitary operators without classical communication or only a single communication round, balancing performance with practical feasibility on near-term hardware.

\subsubsection{Transforming a channel into a Pauli channel}\label{sec:channel_to_pauli_channel}
Even when provided with an arbitrary quantum channel $\mathcal{C}$ between distributed devices, it can be advantageous to first convert it into an entangled state for use in wire cutting. 
The following conversion procedure simplifies the twirling channel within the QPD of \Cref{theorem_arbitrary_cut}: the required ensemble $\mathcal{E}$ reduces from a general unitary two-design to a smaller Pauli-mixing ensemble, critically, without increasing the sampling overhead.

The procedure involves using the shared channel $\mathcal{C}$ to generate its corresponding Choi state $J(\mathcal{C})$~\cite{Stratton2024}, shared between the devices:
\begin{align}\label{eq:choi_state}
 J(\mathcal{C}) = (\mathcal{I}^{\otimes n} \otimes \mathcal{C} )(\Phi_n).
\end{align}
When this Choi state $J(\mathcal{C})$ is subsequently used as the resource for the standard teleportation protocol, the resulting channel $\mathcal{T}^{J(\mathcal{C})}$ is given by:
\begin{align}
 \mathcal{T}^{J(\mathcal{C})}(\varphi) = \sum_{a=0}^{2^{2n} - 1}\braket{\Phi^{a}|J(\mathcal{C})|\Phi^{a}}P_a\varphi P_a.
\end{align}
Assuming no error in the teleportation process itself, the resulting channel is a Pauli channel and a Pauli twirling ensemble suffices for the QPD.

Crucially, this transformation preserves the entanglement fidelity of the original channel $\mathcal{C}$:
\begin{align}
 F(\mathcal{T}^{J(\mathcal{C})}) &= \braket{\Phi_{n}|J(\mathcal{C})|\Phi_{n}} \label{eq:fidelity_preservation_steps_1}\\
 &=F(\mathcal{C}). \label{eq:fidelity_preservation_steps_2}
\end{align}
The first equality, \Cref{eq:fidelity_preservation_steps_1}, applies \Cref{eq:entanglement_fidelity_teleportation} to connect $F(\mathcal{T}^{J(\mathcal{C})})$ with the fidelity of the resource state $J(\mathcal{C})$. 
The second equality, \Cref{eq:fidelity_preservation_steps_2}, subsequently follows from the definitions of the Choi state $J(\mathcal{C})$ (via \Cref{eq:choi_state}) and the entanglement fidelity $F(\mathcal{C})$ (as per \Cref{eq:entanglement_fidelity}).
Consequently, because the sampling overhead is determined by this entanglement fidelity, it remains unchanged under this transformation.
Therefore, by setting $\mathcal{C}' = \mathcal{T}^{J(\mathcal{C})}$ in the QPD of \Cref{theorem_arbitrary_cut}, one can leverage the practical benefit of using a smaller Pauli-mixing ensemble $\mathcal{E}$ while maintaining the same sampling overhead determined by the original channel's fidelity $F(\mathcal{C})$.

\subsection{Error analysis for \Cref{theorem_arbitrary_cut}}\label{sec:error_analysis}
In the following, we analyze error sources when sampling from the QPD of \Cref{theorem_arbitrary_cut} and building an estimator as shown in \Cref{eq:qpd_estimator}.
In practice, implementing the QPD is subject to noise and errors.
Therefore, it does not yield the perfect identity channel $\mathcal{I}$, but rather an imperfect channel realization denoted by $\mathcal{\tilde{I}}$:
\begin{align}
 \mathcal{\tilde{I}} = \frac{1}{\tilde{p}}\tilde{\mathcal{D}}_p - \left(\frac{1}{\tilde{p}}-1\right)\tilde{\mathcal{D}}_0
\end{align}
where $\tilde{\mathcal{D}}_{p}$ and $\tilde{\mathcal{D}}_{0}$ represent potentially imperfect implementations of the target depolarizing channels, and $\tilde{p}$ is an non-zero estimate of the ideal depolarization parameter $p$.
A detailed derivation of this expression and subsequent bounds can be found in \Cref{sec:appendix_error}.

The total error $\epsilon(N)$ measures the deviation between the expectation value obtained from $N$ samples using the biased estimate $\widehat{\braket{O}}_{\tilde{\mathcal{I}}(\rho)}^{N}$ and the ideal value $\tr[O\rho]$:
\begin{align}\label{eq:error}
 \epsilon(N) = \left|\widehat{\braket{O}}_{\tilde{\mathcal{I}}(\rho)}^{N} - \tr[O\rho] \right|.
\end{align}
Two primary sources contribute to this total error:
(i)~statistical sampling error $\epsilon_{\text{sampling}}$ arising from a finite number $N$ of samples from the QPD and (ii)~a systematic bias $\epsilon_{\text{bias}}$ resulting from the discrepancy between the implemented channel $\mathcal{\tilde{I}}$ and the ideal identity channel $\mathcal{I}$. 
The total error is bounded by the sum of these contributions, as the statistical and systematic errors are not necessarily additive and can partially cancel:
\begin{align}
 \epsilon(N) \le \epsilon_{\text{sampling}}(N) + \epsilon_{\text{bias}}.
\end{align}

The sampling error $\epsilon_{\text{sampling}}$ quantifies statistical fluctuation of the finite-sample estimator around the true expectation value produced by the implemented channel $\mathcal{\tilde{I}}$:
\begin{align}
 \epsilon_{\text{sampling}}(N) &= \left|\widehat{\braket{O}}_{\tilde{\mathcal{I}}(\rho)}^{N} - \tr[O\tilde{\mathcal{I}}(\rho)] \right|.
\end{align}
This error decreases with the number of samples $N$ as $\mathcal{O}(\kappa/\sqrt{N})$~\cite{Temme2017}, where $\kappa = 2\tilde{p}^{-1} - 1$ represents the sampling overhead associated with the implemented QPD using the estimated parameter $\tilde{p}$.

In contrast, the systematic bias $\epsilon_{\text{bias}}$ represents a fundamental deviation that persists even with an infinite number of samples.
It quantifies the differences between the expectation value obtained with the implemented channel $\mathcal{\tilde{I}}$ and the ideal expectation value obtained with $\mathcal{I}$:
\begin{align}
\epsilon_{\text{bias}} &= \left|\tr[O\tilde{\mathcal{I}}(\rho)] - \tr[O\mathcal{I}(\rho)]\right|\\
\begin{split}
 &\le \left\|O \right\|_{\infty} \biggl(\frac{1}{\tilde{p}}\left\|\tilde{\mathcal{D}}_p -\mathcal{D}_p \right\|_{\diamond} \\
 &\phantom{\le \left\|O \right\|_{\infty} \biggl(} \, + \left(\frac{1}{\tilde{p}}-1\right)\left\|\mathcal{D}_0 - \tilde{\mathcal{D}}_0 \right\|_{\diamond} \\
 &\phantom{\le \left\|O \right\|_{\infty} \biggl(} \, + 2 \left|\frac{p}{\tilde{p}} - 1\right| \biggr).
\end{split}
\end{align}
where the bound is based on the Schatten $\infty$-norm of the specific observable $O$ and reflects three error components:
Imperfections in the realization of channels $\mathcal{D}_p$ and $\mathcal{D}_0$, measured by $\left\|\tilde{\mathcal{D}}_p -\mathcal{D}_p \right\|_{\diamond}$ and $\left\|\tilde{\mathcal{D}}_0 -\mathcal{D}_0 \right\|_{\diamond}$, respectively, and the parameter mismatch between $\tilde{p}$ and ideal $p$.

Errors in the channel implementations typically stem from two main sources: (i)~inherent noise in the quantum hardware used, and (ii)~deliberate approximations made during the synthesis of the required twirling operations. 
A key example of the latter occurs when approximate unitary two-designs $\mathcal{E}_{\text{approx}}$ are employed to realize the twirling needed to generate the depolarizing channels~\cite{Dankert2009,Harrow2009,Haferkamp2022}.
In such cases, the resulting twirled channel $\mathbb{E}_{\mathcal{E}_{\text{approx}}}(\mathcal{C})$ only approximates the ideal depolarizing channel $\mathcal{D}_{F(\mathcal{C})}$, but crucially, the diamond norm distance between the actual and ideal channel is guaranteed to be bounded by a small value.

Similar approximation errors arise if a smaller Pauli mixing ensemble is applied when the underlying channel $\mathcal{C}$ is not inherently a Pauli channel. 
For an arbitrary quantum channel $\mathcal{C}$ characterized by a $\chi$-matrix, the deviation introduced by Pauli twirling is bounded by the sum of the magnitudes of the off-diagonal elements of $\chi$:
\begin{align}\label{eq:pauli_mixing_coherent_errors}
 \left\|\mathbb{E}_{\mathcal{E}}(\mathcal{C}) -\mathcal{D}_{F(\mathcal{C})} \right\|_{\diamond} \le \sum_{i\ne j}|\chi_{ij}|.
\end{align}
This bound links the approximation error directly to the coherent components of the channel's noise quantified by non-zero $\chi_{ij}$ for $i\neq j$.
This result implies that Pauli mixing is a good approximation if the channel $\mathcal{C}$ being twirled has only small coherent error components. 
Consequently, for the wire cutting protocols using NME states introduced in \Cref{sec:wire_cut_teleportation}, if the implemented teleportation process introduces primarily Pauli noise with minimal coherent errors, a Pauli mixing ensemble suffices for achieving a low systematic bias $\epsilon_{\text{bias}}$ provided $\hat{p}$ and $\hat{\mathcal{D}}_0$ is correct.
This approach circumvents the need to implement a larger unitary two-designs.

 \section{Experiments}\label{sec:experiments}
This section presents experiments evaluating the QPD for state transfer proposed in \Cref{theorem_arbitrary_cut}.
Our primary objectives are:
\begin{itemize}
 \item \textbf{Obj. 1:} Experimentally demonstrate the practical feasibility of implementing the QPD (from \Cref{theorem_arbitrary_cut}) and its associated calibration procedure (from \Cref{sec:comp_ent_fid}) on contemporary quantum hardware.
 \item \textbf{Obj. 2:} Investigate the reduction in sampling overhead, related to the sampling error $\epsilon_{\text{sampling}}$, associated with using quantum channels of higher entanglement fidelity.
 \item \textbf{Obj. 3:} Investigate the influence of different unitary ensembles $\mathcal{E}$ on the systematic error $\epsilon_{\text{bias}}$, to assess the viability of using smaller ensembles for reducing QPD circuits on hardware.
\end{itemize}
We use numerical simulations to benchmark performance and explore objectives 2 and 3 under controlled noise models. 
Subsequently, experiments on superconducting quantum devices address all three objectives, assessing practical feasibility (objective 1) and evaluating performance under realistic hardware noise.
The remainder of this section is structured as follows: we first describe the general experimental setup applicable to both simulations and hardware implementations. 
Subsequently, we detail the numerical simulations and present their results. 
This is followed by the methodology specific to the experiments on quantum devices and the corresponding outcomes. 
All code and data generated for this study are publicly available~\cite{darus}.

\subsection{General experimental setup}
Across all experiments, whether numerical simulations or implementations on quantum hardware, we evaluate the accuracy of our QPD for single-qubit state transfers.
This accuracy is quantified by how closely expectation values of Pauli observables $O \in \{X,Z\}$ are estimated for an initial Haar-random input state $\ket{\psi}$ after it undergoes the state transfer using the QPD.
The QPD estimate $\widehat{\braket{O}}_{\tilde{\mathcal{I}}(\psi)}^{N}$ is obtained by using the estimator defined in \Cref{eq:qpd_estimator} from $N$ total shots. This estimate is compared against the exact, classically computed expectation value $\tr[O\ketbras{\psi}]$, assuming an ideal transfer, to determine the estimation error $\epsilon(N)$ as defined in \Cref{eq:error}.

The QPD implementation from \Cref{theorem_arbitrary_cut} requires specifying two channel twirls, which are associated with unitary ensembles $\mathcal{E}$ and $\mathcal{V}$.
We use the specific single-qubit ensembles introduced in \Cref{sec:twirl} to realize these twirls.
For the ensemble $\mathcal{E}$, applied to the quantum channel $\mathcal{C}$, we test three distinct choices to analyze their impact on performance, directly addressing objective 3:
\begin{enumerate}
 \item Unitary two-design from \Cref{eq:single_qubit_two_desing}
 \item Pauli mixing ensemble from \Cref{eq:single_qubit_pauli_mixing}
 \item Trivial ensemble $\{(1, I)\}$, i.e., no twirling
\end{enumerate}
For the ensemble $\mathcal{V}$, applied to the measure-and-prepare circuit $\mathcal{M}$, we consistently employ the set of unitary operators specified in \Cref{eq:single_qubit_mub_transformations} across all experiments.
To set the QPD coefficents, the QPD from \Cref{theorem_arbitrary_cut} is formulated as $\mathcal{I}^{\otimes n} = c\mathbb{E}_{\mathcal{E}}(\mathcal{C}) + \left(1-c\right)\mathbb{E}_{\mathcal{V}}(\mathcal{M})$, where the Parameter $c$ is determined from the simulated or measured entanglement fidelity $F(\mathcal{C})$ of the channel $\mathcal{C}$ using the relation $c = F(\mathcal{C})^{-1}$.

\subsection{Numerical simulations}\label{sec:simulations}
This section details the methodology and results of the numerical simulations designed to evaluate our QPD protocols.
A central aim of these simulations is to assess the errors of QPDs under diverse channel conditions. 
Specifically, we investigate the influence of channels with varying entanglement fidelities.
Furthermore, we analyze how different magnitudes of coherent errors affect QPDs that employ Pauli mixing ensembles, particularly as \Cref{eq:pauli_mixing_coherent_errors} demonstrates that greater coherent error magnitudes result in potentially larger errors in these QPDs.
To model this range of conditions, encompassing both varying entanglement fidelities and different coherent error magnitudes, we utilize randomly generated channels. 
These simulations thereby establish an ideal performance benchmark for the QPD protocols, allowing for comparison with subsequent hardware experiments and providing initial insights into objectives 2 and 3 under idealized conditions.

\subsubsection{Method}
We simulate the quantum channel $\mathcal{C}$ for the state transfer between devices using an error model adapted from Combes et al.~\cite{Combes2014}.
This model is particularly suitable as it allows for explicit control over the channel's resulting entanglement fidelity and coherent errors.
It achieves this by combining incoherent noise, e.g., arising from environmentally induced decoherence, and coherent errors, e.g., caused by systematic over- or under-rotations from miscalibrated gates, to capture common experimental imperfections.

The incoherent noise component is modeled as a Pauli channel $\mathcal{P}$, defined by
\begin{align}
\mathcal{P}(\rho) = (1-q)\rho + q\left(p_x X\rho X + p_y Y\rho Y + p_z Z\rho Z\right),
\end{align}
where $q \in[0,1]$ quantifies the total error probability, and $p_x,p_y,p_z\ge 0$ with $p_x + p_y + p_z = 1$ describe the distribution of Pauli errors.

The coherent error component is modeled as a unitary transformation $\mathcal{U}(\rho) = U\rho U^\dagger$, where the unitary operator $U$ represents a general rotation:
\begin{align}
U = \exp\left(\frac{-i\theta\vec{n}\cdot\vec{\sigma}}{2}\right).
\end{align}
Here, $\theta$ is the rotation angle determining the coherent error magnitude, $\vec{n} = (n_x, n_y, n_z)$ is a unit vector specifying the rotation axis on the Bloch sphere, and $\vec{\sigma} = (X, Y, Z)$ is the vector of Pauli matrices.

The complete noisy channel $\mathcal{C}$ applies incoherent errors first, followed by the coherent rotation:
\begin{align}
\mathcal{C} = \mathcal{U} \circ \mathcal{P}.
\end{align}
The channel's entanglement fidelity is given by
\begin{align}\label{eq:ent_fidelity_error_model}
F(\mathcal{C}) = (1-q)c^2 + q\left(p_x n_x^2 + p_y n_y^2 + p_z n_z^2\right)s^2,
\end{align}
where $c = \cos(\theta/2)$ and $s = \sin(\theta/2)$~\cite{Combes2014}.

Using this error model, we generate random instances of the channel $\mathcal{C}$ that target specific values for entanglement fidelity $F(\mathcal{C})$ and coherent error rotation angle $\theta$. 
This allows us to study the effect of $F(\mathcal{C})$  on the sampling overhead, and the impact of coherent errors quantified by $\theta$ on the Pauli mixing ensemble.
The generation procedure is as follows: first, a Pauli error distribution $p_x, p_y, p_z$ is randomly selected.
Second, a rotation axis $\vec{n}$ is chosen by sampling uniformly from the surface of the unit sphere. 
Finally, given these selections ($p_x, p_y, p_z$, and $\vec{n}$) and the target values of $F(\mathcal{C})$ and $\theta$, the total Pauli error probability $q$ is calculated by solving \Cref{eq:ent_fidelity_error_model} for $q$.

To evaluate the error scaling of different QPDs, corresponding to the three choices for ensemble $\mathcal{E}$, we simulated the transfer of $500$ Haar-random initial states using various noisy quantum channels $\mathcal{C}$.
Specifically, for each combination of target entanglement fidelity $F(\mathcal{C})$,  chosen from $ \{0.55, 0.7, 0.9\}$  to represent low, medium, and high fidelity scenarios respectively, and coherent rotation angle $\theta$,  selected from $\{0, 0.15, 0.3\}$ to model progressively increasing magnitudes of coherent error, one such random channel instance was generated using the method described.

\begin{figure*}[t]
 \centering
 \includegraphics[trim={0 0.35cm 0 0.25cm},clip]{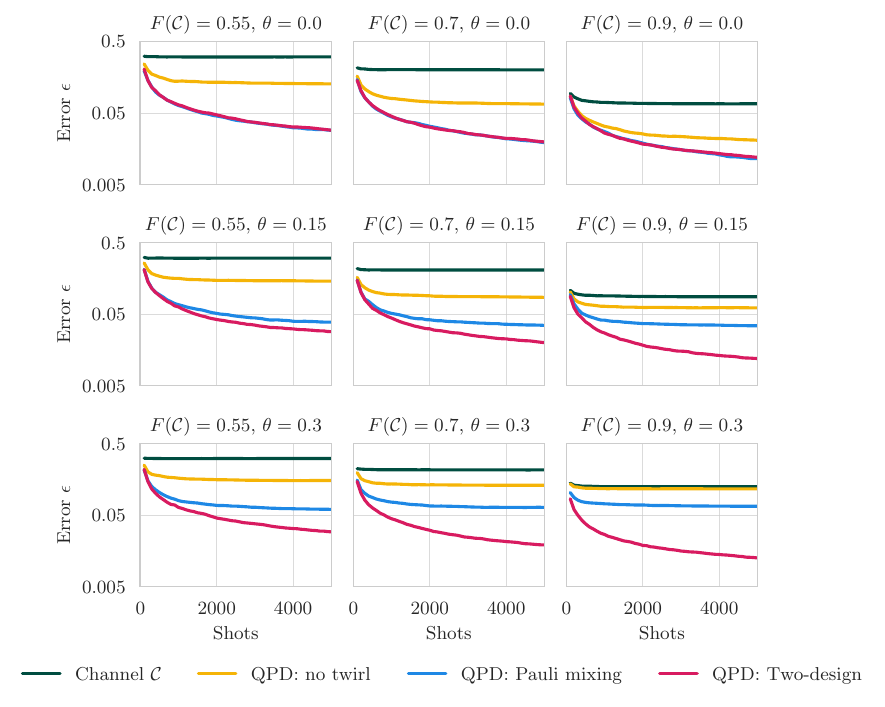}
 \caption{Simulation results: Error scaling of the channel $\mathcal{C}$ and various QPDs applied to channel $\mathcal{C}$, evaluated for different entanglement fidelities $F(\mathcal{C})$ and coherent rotation angles $\theta$.}
 \label{fig:simulation_plot}
\end{figure*}

\subsubsection{Results}
\Cref{fig:simulation_plot} presents the simulation results comparing the performance of different QPDs under various noise conditions.
Each subplot displays the error in the expectation value $\epsilon(N)$ averaged over all $500$ input states and the two observables as a function of the number of shots $N$ for quantum channel $\mathcal{C}$ characterized by a fixed entanglement fidelity $F(\mathcal{C})$ and coherent rotation angle $\theta$.
The logarithmic y-axis in the plots facilitates a clear visualization of the error differences across multiple orders of magnitude. 

Before analyzing the specific performance of each QPD variant, it is helpful to outline how distinct error contributions as introduced in \Cref{sec:error_analysis} manifest in such plots. 
Typically, statistical sampling error $\epsilon_{\text{sampling}}$ is indicated by a decreasing trend in $\epsilon(N)$ as the number of shots $N$ increases, as more data reduces statistical fluctuations. 
In contrast, a systematic error $\epsilon_{\text{bias}}$ becomes apparent if $\epsilon(N)$ converges to a persistent, non-zero value, referred to as an error plateau, even for a large number of shots $N$. 
This plateau signifies a baseline error inherent to the method or its imperfect implementation, which cannot be eliminated by merely increasing $N$.

As illustrated in \Cref{fig:simulation_plot}, the QPD employing a unitary two-design (red line) consistently achieves the lowest average estimation errors across all tested configurations of $F(\mathcal{C})$, $\theta$, and $N$. 
This superior performance is attributed to its ability to transform the general channel $\mathcal{C}$ into the exact depolarizing channel $\mathcal{D}_{F(\mathcal{C})}$ required by the QPD formulation in \Cref{theorem_arbitrary_cut}.
Consequently, this QPD simulates an ideal, noiseless state transfer, thereby avoiding systematic error ($\epsilon_{\text{bias}}=0$) from the QPD itself. 
The remaining error is therefore due to statistical sampling noise.
This behavior is evident in \Cref{fig:simulation_plot}, where the estimation error for the two-design QPD decreases monotonically with increasing $N$, following the standard scaling of $\mathcal{O}(1/\sqrt{N})$, showing no saturation to an error plateau within the range of $N$ depicted.
Furthermore, the estimation error is independent of the coherent error angle $\theta$, as the two-design effectively handles coherent errors. 
Additionally, for a fixed number of shots $N$, the estimation error decreases as the channel’s entanglement fidelity $F(\mathcal{C})$ increases, reflecting the reduced sampling overhead associated with higher-fidelity channels, a key theoretical validation for objective 2.
\looseness=-1

In contrast, the QPD based on the Pauli mixing ensemble (blue line) performs comparably to the unitary two-design QPD only in the absence of coherent errors ($\theta = 0$). 
In this specific scenario, its error scaling with increasing shot number $N$ or higher entanglement fidelity $F(\mathcal{C})$ is equal to that of the two-design QPD. 
However, as \Cref{fig:simulation_plot} illustrates, the presence of coherent errors ($\theta > 0$) significantly degrades the Pauli mixing QPD's performance relative to the two-design. 
This degradation occurs because Pauli mixing fails to handle coherent errors. 
Consequently, it introduces a non-zero systematic error, which increases with the coherent error angle $\theta$. 
This directly explores the influence of ensemble choice on systematic error under coherent noise, central to objective 3.
Visually, this systematic error causes the estimation error in \Cref{fig:simulation_plot} to converge to a non-zero plateau for large $N$, with the plateau height increasing with $\theta$.
Notably, the plots also show that for higher entanglement fidelities $F(\mathcal{C})$, and therefore smaller sampling overhead, the error converges faster towards this error-induced plateau. 
\looseness=-1

The QPD variant that omits twirling for channel $\mathcal{C}$ (yellow line) exhibits the poorest performance among the QPDs.
Since a randomly generated channel $\mathcal{C}$ is generally not depolarizing, this QPD variant suffers from a systematic error even without coherent errors ($\theta = 0$).
Unlike the Pauli mixing case, where the systematic error is primarily driven by $\theta$, the systematic error (and thus the error plateau) for the no-twirl QPD also varies with the entanglement fidelity $F(\mathcal{C})$.

Finally, across all tested configurations, all QPD variations achieve lower estimation errors than directly using the noisy channel $\mathcal{C}$ without the QPD framework (dark green line). 
The error associated with this direct channel use converges most rapidly to a plateau, suggesting it is dominated by the inherent channel imperfections (systematic error) rather than statistical sampling noise.

\begin{mdframed}
 [
 skipabove=15pt,
 innerbottommargin=.3\baselineskip,
 rightmargin=0em,
 leftmargin=0em,
 nobreak=true
 ]
 \textbf{Major observations from simulations:}
 \begin{itemize}[leftmargin=10pt] 
 \item Unitary two-design QPD offers the lowest error, is immune to coherent errors, and its error further decreases with higher entanglement fidelity as a result of smaller sampling overhead.
 \item Pauli mixing QPD performs worse with coherent errors, leading to a coherent-error-dependent plateau; higher entanglement fidelity accelerates convergence to this plateau.
 \item No-twirl QPD is the least effective QPD due to inherent systematic error from the channel's non-depolarizing structure, even without coherent errors.
 \item QPD strategies, especially unitary two-design and Pauli mixing, substantially reduce errors compared to direct use of the noisy channel.
 \end{itemize}
\end{mdframed}

\subsection{Experimental setup for quantum devices}
This section details the methodology for experiments on contemporary quantum devices. 
We begin by explaining the emulation of the quantum channel via teleportation, followed by a description of the quantum devices utilized. 
Subsequently, we outline the experimental workflow, encompassing circuit transpilation, control and characterization of the emulated channel's entanglement fidelity, execution of the circuits for the QPD protocols, and concluding with the postprocessing of the obtained measurement data.

\subsubsection{Channel emulation via teleportation}
To emulate a distributed quantum computation scenario, given the current unavailability of physically separate, interconnected quantum devices, we logically partitioned qubits on a single quantum device into sender and receiver roles.
The quantum channel $\mathcal{C}$ connecting these qubits from the logical partitions was implemented via quantum teleportation (see \Cref{sec:wire_cut_teleportation}).
In practice, noise affects both the preparation of the entangled resource state, resulting in a NME state $\rho$, and the execution of the teleportation operations, resulting in $\tilde{\mathcal{T}}^{\rho}$ instead of $\mathcal{T}^{\rho}$. 
Consequently, the implemented channel is a noisy teleportation process $\mathcal{C}=\tilde{\mathcal{T}}^{\rho}$. 
This setup enables interpretation of our results within the context of wire cutting using a pre-shared NME state $\rho$.

\begin{table}[t]
 \centering
 \begin{tabular}{c|c|c}
 Device & Qubits & Processor type\\
 \hline
 \hline
 \texttt{ibm\_fez} & 156 & Heron r2\\
 \hline
 \texttt{ibm\_torino} & 133 & Heron r1\\
 \hline
 \texttt{ibm\_bruessels} & 127 & Eagle r3
 \end{tabular}
 \caption{Overview of the quantum devices used for the experiments.}
 \label{tab:qpus}
\end{table}

\subsubsection{Devices}
Experiments were performed on IBM superconducting quantum devices~\cite{ibmq} detailed in \Cref{tab:qpus}.
These devices utilize different processor architectures, namely the older Eagle generation and the newer Heron generation. 
The Heron architecture features tunable couplers, yielding significantly lower two-qubit gate error rates than Eagle devices~\cite{AbuGhanem2025}.
Among the Heron devices employed, Heron r2 is a subsequent revision of Heron r1, mainly differing in its increased qubit count. 
Furthermore, the Heron and Eagle architectures also differ in their native two-qubit gates, with Eagle using echoed cross-resonance gates and Heron implementing controlled-Z gates.
Crucially, all utilized devices support dynamic circuits, enabling the mid-circuit measurements and conditional operations required for implementing the classical communication in quantum teleportation.

\subsubsection{Circuit transpilation}
Executing the QPD circuits of \Cref{theorem_arbitrary_cut} requires mapping the logical circuit structure onto the physical qubit topology of the devices~\cite{Leymann2020a}.
We utilized Qiskit's \texttt{VF2Layout} transpiler pass for this task~\cite{VF2Layout}. 
This pass selects suitable physical qubits by matching the circuit's interaction graph with the device's coupling map, considering the device's current error model to minimize potential noise impact.
We used the interaction graph of a standard teleportation circuit (see \Cref{fig:teleportation_circuit}), including preparation of a maximally entangled resource state $\ket{\Phi_2}$. 
This yields a layout of three interconnected qubits: one qubit holding the sender's initial state to be transmitted, an ancilla qubit for the sender's part of the resource state, and the receiver's qubit, which initially holds the receiver's part of the entangled state and subsequently stores the transmitted state after teleportation. 
This three-qubit mapping was consistently applied to all teleportation-based circuits in the QPD. 
For the simpler measure-and-prepare circuits, which only require two qubits (sender state and receiver), the sender's ancilla qubit from this layout was omitted.

\begin{figure*}[t]
 \centering
 \includegraphics[trim={0 0.25cm 0 0.25cm},clip]{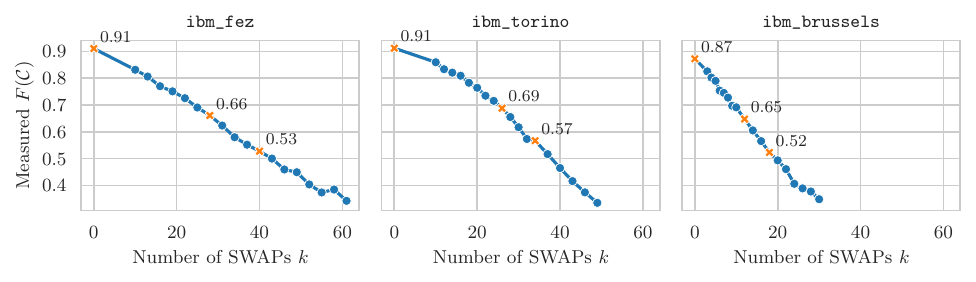}
 \caption{Entanglement fidelity of the teleportation channel as a function of SWAP operations $k$ applied to the resource state, measured using the method described in \Cref{sec:comp_ent_fid}. Highlighted points indicate the configurations used in the subsequent experiments.}
 \label{fig:fidelity_plot}
\end{figure*}

\subsubsection{Controlling and measuring entanglement fidelity}
To compare QPD performance across channels of varying entanglement fidelities on the same physical qubits, addressing objective 2, we augmented inherent device noise with controlled additional noise, thereby enabling systematic tuning of the effective entanglement fidelity for comparable experimental conditions.
The baseline noise from imperfect preparation of the entangled state ($\tilde{\Phi}_2$ instead of $\Phi_2$) and errors during the teleportation operations ($\tilde{\mathcal{T}}$ instead of $\mathcal{T}$) establishes a maximum achievable entanglement fidelity $F(\tilde{\mathcal{T}}^{\tilde{\Phi}_2})<1$ for the used qubits on the given device.

To systematically lower this fidelity, we degraded the entangled resource state $\tilde{\Phi}_2$ by applying $k$ SWAP operations $\tilde{\mathcal{S}}$,  which are inherently noisy due to imperfections in current quantum hardware,  between its constituent qubits before teleportation, swapping its qubit states back and forth. Crucially, it was ensured that these SWAP operations were explicitly executed as intended and not removed by the transpiler. 
The resulting NME state $\rho_k$ is given as:
\begin{align}
\rho_k = \tilde{\mathcal{S}}^k(\tilde{\Phi}_2) = \underbrace{\tilde{\mathcal{S}} \circ \ldots \circ \tilde{\mathcal{S}}}_{k \text{ times}}(\tilde{\Phi}_2).
\end{align}
While ideal SWAP operations $\mathcal{S}$ preserve the maximally entangled state $\mathcal{S}(\Phi_2) = \Phi_2$, noisy SWAPs introduce errors. 
Assuming approximately constant noise per SWAP, increasing $k$ monotonically reduces the entanglement fidelity $F(\tilde{\mathcal{T}}^{\rho_k})$ of the resulting teleportation channel, thereby simulating the use of NME resource states of varying quality.
Crucially, for the two-design QPD, its performance is dictated solely by the entanglement fidelity $F(\tilde{\mathcal{T}}^{\rho_k})$, rendering the detailed structure of the errors in the teleportation channel $\tilde{\mathcal{T}}^{\rho_k}$ and hence in the resource state $\rho_k$ unimportant beyond their collective impact on this fidelity. 
The effectiveness of the Pauli mixing ensemble, in contrast, is more constrained: it is sensitive to specific error characteristics of $\tilde{\mathcal{T}}^{\rho_k}$, particularly to any coherent errors introduced by the teleportation operations. 

To determine suitable values of $k$ for the main QPD experiments and to characterize the device-specific impact of SWAP errors, we first performed calibration runs, measuring the entanglement fidelity $F(\tilde{\mathcal{T}}^{\rho_k})$ for increasing $k$ using the method from \Cref{sec:comp_ent_fid}.
Based on these results, we select three distinct $k$ values corresponding to three different noise levels.
The results of these entanglement fidelity calibration runs are presented in \Cref{fig:fidelity_plot}.

The baseline entanglement fidelities ($k=0$), shown in \Cref{fig:fidelity_plot}, reflect the inherent noise levels of the teleportation process on each device. 
We measured baseline fidelities of approximately $0.91$ for both \texttt{ibm\_fez} and \texttt{ibm\_torino}, and $0.87$ for \texttt{ibm\_bruessels}.
As expected, applying successive SWAP operations leads to a decrease in entanglement fidelity for all devices, although the rate of degradation varies. 
\texttt{ibm\_bruessels} showed the most rapid degradation, followed by \texttt{ibm\_torino}, whereas \texttt{ibm\_fez} demonstrated the slowest decline, suggesting greater resilience to the accumulated errors from these SWAP operations.

Notably, the number of SWAP operations required for the fidelity to drop below the critical $0.5$ threshold, where quantum advantage over classical communication is lost (see \Cref{eq:advantage_qpd_over_wire_cut}), varies significantly across the devices. 
As seen in \Cref{fig:fidelity_plot}, this threshold is crossed after approximately $k=19$ for \texttt{ibm\_bruessels}, $k=37$ for \texttt{ibm\_torino}, and $k=43$ for \texttt{ibm\_fez}.
These results indicate that SWAP operations on Heron-based devices (\texttt{ibm\_fez}, \texttt{ibm\_torino}) exhibit lower accumulated error rates, which in turn leads to higher entanglement fidelity for the teleportation channel compared to the Eagle-based device (\texttt{ibm\_bruessels}).

Based on these calibration findings, for the following QPD experiments, we selected two additional noise levels for each device, representing medium and low entanglement fidelities, beyond the high-fidelity baseline ($k=0$).
All selected fidelites are highlighted in \Cref{fig:fidelity_plot}.
The different numbers of SWAP gates $k$ for the additional noise levels were chosen to satisfy the following criteria.
The primary criterion was to ensure that the sampling overhead factor $\kappa = 2F(\tilde{\mathcal{T}}^{\rho_k})^{-1} -1$ was approximately evenly distributed.
Since $\kappa$ is a non-linear function of the fidelity, this approach naturally results in fidelity values that are not themselves evenly spaced. 
As a secondary criterion, we selected $k$ for each device to ensure the resulting measured fidelities for a given noise level were approximately matched across the different devices to simplify comparison.
Finally, we ensured all selected fidelities remained above the classical threshold of $F(\tilde{\mathcal{T}}^{\rho_k})>0.5$, as values below this offer no advantage in sampling overhead compared to wire cutting that uses only classical communication (see \Cref{eq:advantage_qpd_over_wire_cut})
This procedure resulted in the following additional selected SWAP counts for $k>0$ and their corresponding measured entanglement fidelities:
\begin{itemize}
\item \texttt{ibm\_fez}: $0.66$ ($k=28$) and $0.53$ ($k=40$)
\item \texttt{ibm\_torino}: $0.69$ ($k=26$) and $0.57$ ($k=34$)
\item \texttt{ibm\_bruessels}: $0.65$ ($k=12$) and $0.52$ ($k=18$)
\end{itemize}

\subsubsection{Execution of the QPD}
For efficient comparison across QPD protocols (using unitary two-design, Pauli mixing, or trivial ensembles for channel twirling) and the baseline case of teleportation without QPD, a core set of $15$ unique circuits is executed once for each combination of noise level, one of $25$ Haar-random initial states, and observables $X$ and $Z$.
Critically, every single-shot measurement outcome from these executions is stored for comprehensive postprocessing and data reuse.
This core set of 15 circuits is dictated by the requirements of the most general two-design QPD, which includes $12$ circuits for the twirled teleportation channel using the single-qubit unitary two-design from \Cref{eq:single_qubit_two_desing} and $3$ measure-and-prepare circuits, employing the unitary operators from \Cref{eq:single_qubit_mub_transformations}. 
Importantly, results for all other scenarios are derived by using subsets of these $15$ circuit executions.

This strategy allows analyzing an arbitrary total shot count $N$ during post-processing by utilizing the required number of stored single-shot samples for each relevant circuit. 
Moreover, by deriving results for all compared scenarios from the identical underlying raw measurement data, we significantly minimized statistical fluctuations between scenarios compared to performing independent experimental runs for each. 
Consequently, observed performance differences between scenarios primarily reflect their inherent distinctions, rather than statistical variations from independent experimental runs.

\begin{figure*}[t]
 \centering
 \includegraphics[trim={0 0.35cm 0 0.25cm},clip]{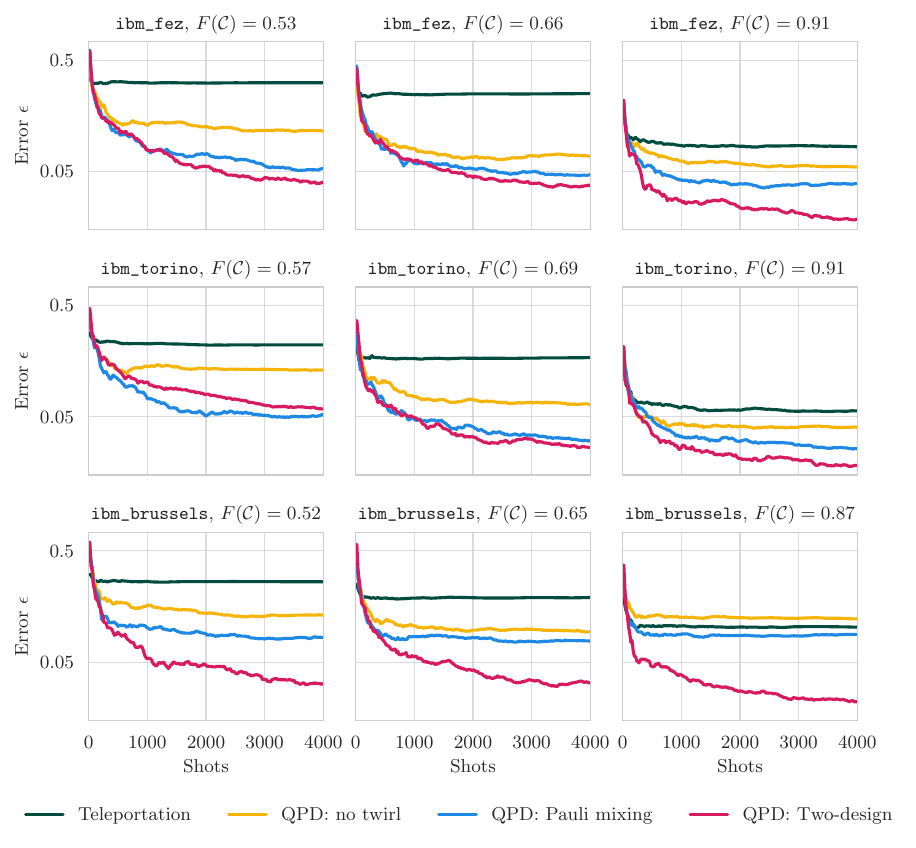}
 \caption{Error scaling for the QPDs and direct teleportation under varying entanglement fidelity across different quantum devices.}
 \label{fig:qpu_plot}
\end{figure*}

\subsubsection{Postprocessing}
Expectation values for QPDs are computed using the estimator in \Cref{eq:qpd_estimator} with the stored single-shot measurement samples.
The number of samples drawn from each circuit's stored data is selected proportionally to that circuit's probability $p_i$ within the QPD.
Specifically, when sampling the QPD with $N$ shots, we use $\lfloor p_iN \rfloor$ samples from the circuit corresponding to the operator $\mathcal{F}_i$ in the QPD.
The total number of samples used, $\sum_i \lfloor p_iN \rfloor$, might be slightly less than $N$ due to rounding, but this discrepancy diminishes for larger $N$.

\subsection{Results: Error scaling with the number of shots}\label{sec:error_scaling_qpu}
This section analyzes the scaling of average estimation error $\epsilon(N)$ with the number of measurement shots $N$ to further investigate sampling overhead on quantum devices.
We compare direct quantum teleportation against the various QPD protocols, using the computed coefficients $c_{\text{com}}= F(\mathcal{C})^{-1}$ derived from measured entanglement fidelities of the calibration runs.
The experimental results validate key findings from numerical simulations and are visualized in \Cref{fig:qpu_plot}, where each subplot represents a specific quantum device and selected entanglement fidelity.

A primary observation from \Cref{fig:qpu_plot} is that the QPD protocols generally achieve significantly lower error plateaus than the direct use of the noisy teleportation channel (dark green line), with one exception on \texttt{ibm\_bruessels} detailed below.
The error associated with this direct channel use quickly saturates at a high plateau, limited by the inherent systematic noise of the used channel. 
In contrast, all QPD protocols demonstrate a more substantial reduction in the error as the total number of shots increases, typically achieving significantly lower error levels.
Initially, QPD errors decrease following the expected $\mathcal{O}(1/\sqrt{N})$ scaling before potentially saturating at these lower levels. 
This confirms that sampling error dominates QPD performance at lower shot counts and that the QPDs substantially reduces the error compared to direct channel use.

The unitary two-design (red line) yields the lowest errors across most configurations, demonstrating its robustness against arbitrary noise experimentally.
The single observed instance where the Pauli mixing QPD (blue line) appeared to perform marginally better is likely a statistical artifact, as these experimental results using the quantum devices are averaged over only $25$ random initial states for the two observables $X$ and $Z$, compared to $500$ initial states in the simulation.
Moreover, the error for the two-design QPD continuously decreases with an increasing number of shots $N$ in these experiments, showing no signs of a limiting error plateau within the tested range. 
Furthermore, it clearly exhibits the expected trend: higher entanglement fidelity $F(\mathcal{C})$ of the channel $\mathcal{C}$ results in lower errors for a fixed $N$, confirming that reduced sampling overhead improves accuracy.

The Pauli mixing QPD (blue line) serves as an experimental probe for coherent errors, with observed results aligning with simulations for non-zero coherent errors from \Cref{sec:simulations}.
Its performance relative to the two-design highlights hardware differences.
On Heron devices (\texttt{ibm\_fez}, \texttt{ibm\_torino}), a small error gap between the two QPDs suggests minimal coherent errors in the teleportation implementation, enabling the Pauli mixing QPD, which only handles Pauli errors, to perform well. 
Conversely, on the Eagle device \texttt{ibm\_bruessels}, a significantly larger error gap and clear convergence to an error plateau indicate stronger coherent errors.
These coherent errors are not effectively handled by the Pauli mixing QPD, creating a persistent systematic error plateau that limits performance improvement, even when increasing fidelity $F(\mathcal{C})$, as seen by the plateau being largely independent of $F(\mathcal{C})$. 
The faster convergence to this plateau at higher fidelities further illustrates the dominance of this systematic error over the diminishing sampling noise for this method under these conditions.

As expected, omitting the channel twirl (yellow line) yields the largest errors among the QPD methods.
This poor performance is especially highlighted on \texttt{ibm\_bruessels} at $F(\mathcal{C}) = 0.87$, where, with the computed QPD coefficients, the no-twirl QPD performs even worse than the directly using channel $\mathcal{C}$.

\begin{mdframed}
 [
 skipabove=15pt,
 innerbottommargin=.3\baselineskip,
 rightmargin=0em,
 leftmargin=0em,
 nobreak=true
 ]
 \textbf{Major observations from error scaling:}
 \begin{itemize}[leftmargin=10pt] 
 \item Quantum device experiments validate trends predicted by simulations.
 \item Unitary two-design QPD generally achieves the smallest error on all devices, with its accuracy improving with higher entanglement fidelity $F(\mathcal{C})$ due to lower sampling overhead.
 \item Pauli mixing QPD performs well on Heron devices (indicating low coherent noise) but is limited by systematic error plateau on the Eagle device (indicating higher coherent noise).
 \item The effectiveness of Pauli mixing on Heron devices validates its viability for reducing QPD circuits on hardware with low coherent noise.
 \end{itemize}
\end{mdframed}

\begin{figure*}[t]
 \centering
 \includegraphics[trim={1.22cm 0cm 0 0.25cm},clip]{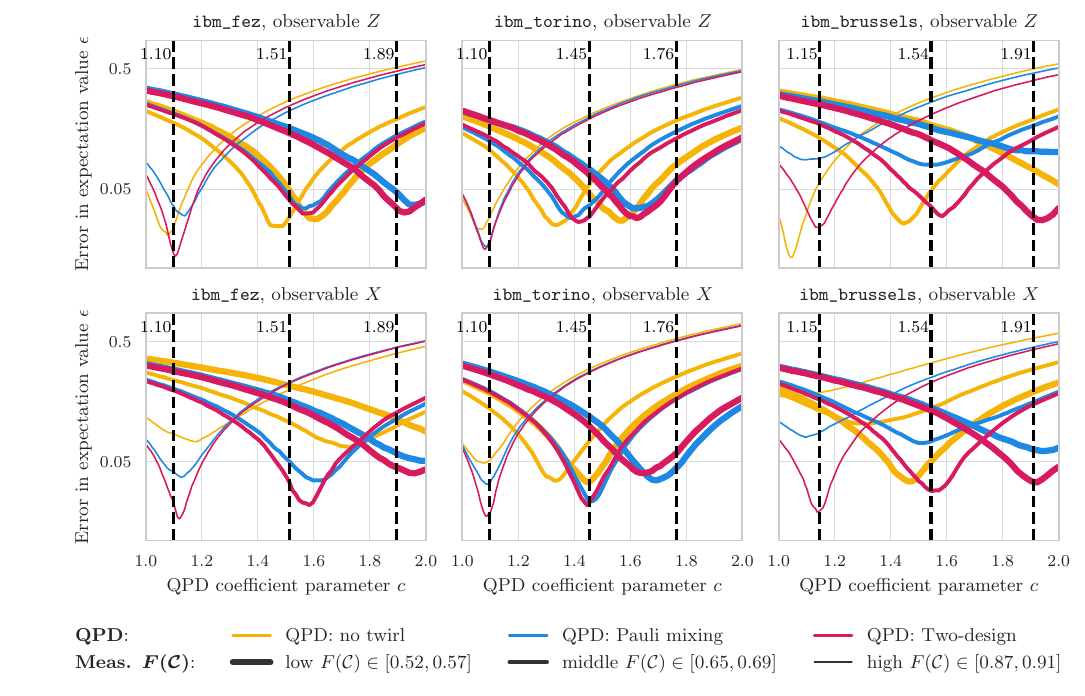}
 \caption{Analysis of the error with QPD coefficients $c$ and $1-c$, where the dotted vertical lines indicate the experimentally computed QPD coefficients $c_{\text{com}}$ based on the different measured entanglement fidelities $F(\mathcal{C})$.}
 \label{fig:fidelity_analysis}
\end{figure*}

\subsection{Results: Minimal error and validation of QPD coefficients} 
This section aims to further validate the method in \Cref{sec:comp_ent_fid} for determining QPD coefficients from measured entanglement fidelity and in doing so, to demonstrate the practical feasibility of our approach (objective 1). 
The core of this validation involves analyzing the impact of deviations from the QPD coefficients specifically computed using this method. 
To conduct this analysis, we reprocess stored single-shot measurement data, calculating expectation values across a range of hypothetical QPD coefficients, which are parameterized as $c$ and $1-c$. 
By scanning through possible values for $c$, we identify the empirical value $c_{\text{opt}}$ that minimizes the average estimation error $\epsilon(N)$. 
Comparing $c_{\text{opt}}$ to $c_{\text{com}}= F(\mathcal{C})^{-1}$, which is derived from the measured channel fidelity $F(\mathcal{C})$, validates our fidelity measurement and coefficient calculation method.
Recall that $c$ relates to the sampling overhead $\kappa = 2c-1$, and the constraint $F(\mathcal{C}) \in [0.5, 1]$ implies $c \in [1, 2]$. 
\Cref{fig:fidelity_analysis} illustrates this analysis, plotting the average estimation error, using $4000$ shots for observables $X$ and $Z$, as a function of the hypothetical QPD coefficient parameter~$c$. 
Dashed vertical lines indicate the calculated $c_{\text{com}}$ values.

\subsubsection{Two-design and Pauli-mixing QPDs}\label{sec:qpd_coefficient_analysis}
For the QPD protocols that employ channel twirling using either the unitary two-design (red line) or Pauli mixing ensembles (blue line), the results presented in \Cref{fig:fidelity_analysis} reveal several key findings.
First, the empirical optimum $c_{\text{opt}}$ is largely independent of the measured observable ($X$ or $Z$) and very similar for both ensembles. 
This empirical optimum $c_{\text{opt}}$, which corresponds to the coefficient value that minimizes the error shown in \Cref{fig:fidelity_analysis}, aligns well with the coefficient $c_{\text{com}}$ computed via the method from \Cref{sec:comp_ent_fid} (indicated by dashed lines), particularly at high entanglement fidelities.
The observed discrepancies between the empirically optimal $c_{\text{opt}}$ and the computed $c_{\text{com}}$ at lower fidelities where $c_{\text{opt}}$ values are correspondingly larger, likely stem from the error magnification inherent in the calculation $c_{\text{com}}= F(\mathcal{C})^{-1}$.
This inverse relationship makes $c_{\text{com}}$ increasingly sensitive to absolute errors in the measured $F(\mathcal{C})$ as $F(\mathcal{C})$ itself decreases.

Second, the minimum achievable error at $c_{\text{opt}}$ shows different dependencies on entanglement fidelity $F(\mathcal{C})$ for the two ensembles. 
For the unitary two-design QPD, higher entanglement fidelity generally yields lower minimum errors, as exhibited in \Cref{fig:qpu_plot} by the lower minima of the error curves corresponding to higher entanglement fidelity $F(\mathcal{C})$.
This confirms that a reduced sampling overhead $\kappa = 2c_{\text{opt}}-1$ improves accuracy for a fixed shot count $N$ and directly addresses objective 2.
In contrast, the minimum achievable errors for the Pauli mixing QPD show no clear correlation with the entanglement fidelity $F(\mathcal{C})$, as adjusting the coefficients does not correct unhandled coherent errors for the Pauli mixing QPD.
This finding is consistent with the error plateaus observed in error-scaling experiments, both in simulations (\Cref{fig:simulation_plot}) and on quantum devices (\Cref{fig:qpu_plot}).

Third, as anticipated from simulations, unitary two-design QPDs generally achieve lower errors than Pauli mixing QPDs.
This suggests coherent errors in the teleportation channel, which the two-design can handle, but Pauli mixing does not.
This performance gap is notably smaller for Heron-based devices (\texttt{ibm\_fez}, \texttt{ibm\_torino}) than for the Eagle-based \texttt{ibm\_bruessels}. 
This implies stronger coherent errors in the Eagle device's channel implementation, which more significantly degrade Pauli mixing QPD performance relative to the two-design.

\subsubsection{QPD without channel twirl}
In contrast, the QPD variant that omits channel twirling (yellow line), i.e., uses the trivial twirl, exhibits markedly different behavior in \Cref{fig:fidelity_analysis}.
First, its empirically optimal coefficient $c_{\text{opt}}$ (corresponding to the minima of the yellow error curves) often deviates significantly from the computed value $c_{\text{com}}$ (indicated by vertical dashed lines), which is derived from the entanglement fidelity.
Second, both the optimal coefficient $c_{\text{opt}}$ and the minimum achievable error value show strong dependence on the estimated observable ($X$ or $Z$). 
For example, on \texttt{ibm\_fez}, the non-twirled QPD achieves low errors for observable $Z$ (comparable to twirled QPDs, though at a different $c_{\text{opt}}$), but exhibits significantly higher minimum errors for observable $X$, again with a different optimal coefficient.

This behavior indicates that the implemented noisy teleportation channel $\mathcal{C}=\tilde{\mathcal{T}}^{\rho_k}$ is not inherently depolarizing.
As established in \Cref{sec:main_qpd}, if channel $\mathcal{C}$ were genuinely depolarizing, specific QPD coefficients would exist that enable an accurate simulation of the identity channel, and crucially, this simulation's effectiveness, including optimal coefficients and minimum error, would be independent of the observable being measured.
However, when the channel twirl is omitted, $\mathcal{C}$ is not transformed into the requisite depolarizing form, thereby violating a core assumption of \Cref{theorem_arbitrary_cut}.
As a result, the observed optimal coefficients are no longer expected to align with $ F(\mathcal{C})^{-1}$.
Furthermore, without the averaging effect of channel twirling, the performance of the QPD becomes sensitive to the specific interplay between the channel's non-depolarizing structure and the chosen measurement observable, explaining the observed dependencies.
\looseness=-1

\begin{mdframed}
 [
 skipabove=15pt,
 innerbottommargin=.3\baselineskip,
 rightmargin=0em,
 leftmargin=0em,
 nobreak=true
 ]
 \textbf{Major observations from QPD coefficients:}
 \begin{itemize}[leftmargin=10pt] 
 \item For two-design and Pauli mixing QPDs, computed coefficients align well with empirical optima (especially at high $F(\mathcal{C})$), validating the calibration method of \Cref{sec:comp_ent_fid}.
 \item Unitary two-design QPD error improves with higher entanglement fidelity $F(\mathcal{C})$; Pauli mixing QPD is limited by coherent errors, consistent with simulation findings.
 \item Without channel twirling, optimal QPD coefficients deviate significantly from computed coefficients, and performance becomes strongly observable-dependent, confirming the channel is not inherently depolarizing.
 \end{itemize}
\end{mdframed}

 \section{Discussion}\label{sec:discussion}

\Cref{theorem_arbitrary_cut} introduces a general QPD, enabling computations that leverage noisy quantum channels for state transfer between distributed devices. 
Our experiments, detailed in \Cref{sec:experiments}, confirm the feasibility of implementing this QPD on a single quantum device and demonstrate its higher accuracy over direct noisy channel use. 
This section further explores the properties of the proposed QPD and addresses the limitations of our experimental validation.

\subsection{Analysis and comparison of the proposed QPD}

A key advantage of the QPD derived in \Cref{theorem_arbitrary_cut} is its simple calibration and applicability to arbitrary channels $\mathcal{C}$.
The QPD's coefficients depend solely on the channel's entanglement fidelity $F(\mathcal{C})$. 
Determining this single parameter is sufficient for calibration, and as demonstrated experimentally, the entanglement fidelity $F(\mathcal{C})$ can be measured efficiently between distributed devices using the method described in \Cref{sec:comp_ent_fid}.
Furthermore, the initial QPD calibration remains valid and can be reused provided the channel $\mathcal{C}$'s noise characteristics are stable. 
Should these characteristics drift, the QPD coefficients can be periodically updated based on the current entanglement fidelity $F(\mathcal{C})$, similar to recalibrations that maintain the operational accuracy of individual quantum devices~\cite{Deng2024}.

From a practical perspective, minimizing the number of distinct circuits within a QPD is crucial.
This reduction directly lowers the costs associated with circuit transpilation and limits the number of unique circuits from which samples must be drawn during execution.
\Cref{lemma_mp_channel} guarantees that the number of circuits simulating the zero-fidelity depolarizing channel $\mathcal{D}_0$ component is minimal. 
However, the overall QPD construction from \Cref{theorem_arbitrary_cut} does not necessarily minimize the total circuit count for simulating the identity channel via the channel $\mathcal{C}$.
This potential sub-optimality arises from two main factors.
First, the minimality of the overall construction depends on the structural properties of $\mathcal{C}$, which dictate the complexity of the required channel twirl. 
For instance, implementing the twirl for an arbitrary channel requires a unitary two-design, whereas a smaller Pauli mixing ensemble suffices for a Pauli channel.
Second, alternative QPD constructions, differing from the depolarizing-channel-based formulation in \Cref{eq:qpd_identiy_depol}, might exist that achieve lower total circuit counts for specific channels.

Although such alternative QPDs could potentially reduce circuit numbers, they likely involve trade-offs in generality or ease of use.
For example, the specialized QPD developed for teleportation channels using pure NME resource states requires only $2^n+1$ circuits for an $n$-qubit state transfer~\cite{Bechtold2025}.
This count is significantly lower than for our general construction.
Our QPD uses $|\mathcal{E}| + 2^n+1$ circuits in total: $2^n+1$ circuits for the $\mathcal{D}_0$ component alone, with an additional $|\mathcal{E}|$ circuits  generated by twirling channel $\mathcal{C}$ using ensemble $\mathcal{E}$. 
The magnitude of this $|\mathcal{E}|$ term can be considerable, especially if $\mathcal{E}$ must form a unitary two-design (see \Cref{eq:size_two_design}).
However, this efficiency of the specialized QPD comes with significant drawbacks.
Firstly, it applies strictly to teleportation implemented with pure NME states, whereas \Cref{theorem_arbitrary_cut} handles arbitrary channels $\mathcal{C}$.
Secondly, this specialized QPD requires determining all $2^n$ Schmidt coefficients of the NME state as input parameters~\cite{Bechtold2025}, significantly increasing calibration complexity compared to our single-parameter approach. 
Thirdly, it implicitly assumes noiseless teleportation operations, as calibration relies only on the pure resource state properties, neglecting potential noise within the process itself.
This example for teleportation with pure NME states illustrates that QPDs optimized for specific channel structures to minimize circuit count may demand more complex calibration. 
Identifying the specific structure and determining numerous parameters may even necessitate full characterization methods like quantum process tomography~\cite{Mohseni2008} between devices, which in turn enables the construction of QPDs that are highly tailored to the channel and can potentially yield significant benefits.
In contrast, our approach maintains generality and simplicity by being calibrated solely by the entanglement fidelity.

Beyond minimizing the number of distinct circuits, transpilation costs of a QPD can also be reduced by employing parameterized circuit templates~\cite{CarreraVazquez2024}.
With such templates, different operators within the QPD can ideally be realized by merely adjusting parameters, thus requiring only a single template circuit to be transpiled. 
This use of parameterization is also relevant to our work, particularly for implementing the channel twirls required by \Cref{theorem_arbitrary_cut}. 
Indeed, existing constructions, such as the approximate unitary two-design by Nakata et al.~\cite{Nakata2017}, already feature an efficient, single parameterized template circuit.
While a detailed implementation using parameterized ensembles was beyond the scope of this particular study, future experiments employing our QPD could effectively leverage them for the required twirls. 
Adopting this approach would mean that only circuit parameters require adjustment, thereby reducing transpilation to a single instance per ensemble and offering the potential for substantial savings in pre-processing time.

\subsection{Experimental scope and limitations}\label{sec:experimental_limitations}
Our experiments provide valuable proof-of-principle demonstrations, but several limitations should be acknowledged.
A primary limitation is that the experiments were confined to single-qubit state transfers.
For these proof-of-principle experiments, we focused on this simpler scenario to affirm our method's core viability.
Consequently, we cannot draw direct conclusions about the performance of joint multi-qubit state transfers using a single QPD instance, although \Cref{theorem_arbitrary_cut} theoretically supports this approach.
An alternative for implementing multi-qubit state transfer involves using multiple single-qubit QPDs in parallel. 
This parallel method might simplify the construction of twirling ensembles (e.g., via product structures) but could potentially incur higher sampling overhead compared to joint transfers~\cite{Bechtold2025}. 
The performance trade-offs associated with these different multi-qubit implementation strategies were not experimentally evaluated in this work.

Furthermore, the QPD protocols were investigated in an isolated context, separate from their integration within larger quantum algorithms.
As a result, this study does not address the potential complexities or performance implications arising in broader practical computations.
However, studying the QPDs in this controlled setting is crucial for a precise characterization of their core functionality and response to noise, thereby establishing the necessary baseline for future investigations within specific application contexts.

A further experimental limitation arises from the specific implementation of the quantum channel. 
In our hardware experiments, the channel was realized exclusively through quantum teleportation, with noise introduced by applying SWAP operations to the entangled resource state.
Although this specific implementation might narrow the demonstrated generality across all conceivable channel types, two key observations mitigate this concern.
First, our hardware results align well with the simulations incorporating a more general noise model as outlined in \Cref{sec:simulations}.
Second, the procedure outlined in \Cref{sec:wire_cut_teleportation} shows that an arbitrary channel can, in principle, be converted into such a teleportation channel, implying that this implementation represents a relevant and generalizable scenario.
Furthermore, the performance of the QPD variant utilizing a unitary two-design is theoretically expected to depend solely on the channel's entanglement fidelity, irrespective of its specific noise structure of the channel.

Regarding the experimental setting, all hardware experiments were performed by allocating separate sets of qubits within a single physical quantum device.
This setup inherently bypasses the practical challenges associated with truly distributed quantum computing, such as inter-device synchronization and classical communication latency~\cite{DiAdamo2021}.
Additionally, while the observed intra-device noise is representative for current single-device operations, it might not fully represent the noise characteristics of future networked quantum devices. 
Nonetheless, as previously noted, theory predicts that the performance of the unitary two-design QPD depends solely on the channel's entanglement fidelity.
This suggests the specific origin of the noise, whether intra- or inter-device, is less critical than the overall entanglement fidelity.

Finally, the study relied on a specific set of superconducting quantum devices. 
Although different superconducting hardware architectures were included (IBM's Eagle and Heron processors), variations across distinct hardware platforms necessitate caution when extrapolating quantitative results, due to differing noise characteristics and physical implementation details. 
However, the key qualitative insights derived from our experiments align with theoretical predictions that are independent of specific hardware, suggesting these fundamental findings should remain valid on other quantum platforms capable of executing the required operations.
 \section{Related work}\label{sec:related_work}
Strategies for utilizing imperfect quantum components are critical for advancing quantum computation. 
Our approach to simulating ideal quantum state transfers across noisy interconnects relates to, yet diverges from, established and emerging research areas.

A primary point of comparison is traditional quantum channel distillation~\cite{Liu2020,Regula2021,Kechrimparis2025}. 
While the objective of realizing the behavior of an identity channel using noisy channels is shared, the methodology is fundamentally different. 
Traditional channel distillation aims to physically convert multiple instances of a noisy channel into fewer, higher-fidelity physical channels, often with probabilistic success. 
In contrast, our approach achieves a virtual realization of the ideal identity channel. 
It ensures the correct measurement statistics are reproduced through classical post-processing of data obtained from operations involving the given noisy channel, without physically constructing an improved channel.

Our technique is an instance of virtual resource distillation, a broader concept where QPDs are used to quasiprobabilistically simulate resourceful target objects, e.g., states or channels, using less resourceful ones~\cite{Takagi2024,Yuan2024}.
For quantum states, the feasibility has been experimentally demonstrated~\cite{Zhang2024}.
Related theoretical work on channels examined simulating a general target channel using a given channel $\mathcal{C}$~\cite{Zhao2024a}. 
However, the investigated setting relies on non-signaling resources, e.g., shared randomness, and lacks the classical communication inherent in the measure-and-prepare circuits of \Cref{theorem_arbitrary_cut}.

Our work advances quantum circuit cutting for distributed quantum computing, where large circuits are partitioned via QPDs for execution across smaller devices, typically using only classical communication~\cite{Bechtold2023b}. 
Our QPD enhances wire cuts (simulated state transfers) by using noisy quantum interconnects to reduce sampling overhead.
Pednault~\cite{Pednault2023} uses a similar QPD for wire cutting based on constructing depolarizing channels via two-designs.
However, this method does not incorporate noisy quantum interconnects to reduce sampling overhead, nor does it leverage the minimal measure-and-prepare circuit construction that we detail in \Cref{lemma_mp_channel}.
Distinct from our wire cutting focus, gate cutting simulates the action of a multi-qubit gate via a QPD of local operators~\cite{Piveteau2024,Mitarai2021,Ufrecht2023a,Ufrecht2024}.
Although circuit cutting experiments have successfully distributed computations using only classical communication~\cite{CarreraVazquez2024}, they, unlike our approach, omit leveraging noisy quantum interconnects for sampling overhead reduction.
Moreover, automated tools for identifying optimal cut locations for both wires and gates cut have been developed~\cite{Tang2021,Brandhofer2024,Kan2024}, which will also be valuable for implementations leveraging noisy quantum interconnects.
While circuit cutting primarily targets circuit size reduction, it can also enhance result fidelity~\cite{Perlin2021,Ayral2021,Bechtold2023a}.
However, this must be carefully evaluated in architectures with noisy interconnects, as interconnect quality may introduce additional performance trade-offs.

The underlying quasiprobability framework in our approach is versatile and finds numerous applications.
It forms the basis for quantum error mitigation techniques like probabilistic error cancellation, which simulates noise-free circuit execution using runs on noisy quantum devices~\cite{Temme2017}.
Other diverse applications include simulating non-Clifford channels using only Clifford channels~\cite{Bennink2017} and even simulating unphysical dynamics, such as non-completely positive maps, using physical quantum operations~\cite{Regula2021a}.

Finally, while QPD-based simulations, including our proposed method, offer near-term strategies for leveraging noisy, interconnected quantum devices, large-scale distributed quantum computation ultimately requires fault tolerance via quantum error correction.
Foundational work showed networked topological error correction codes can tolerate highly noisy quantum interconnects, given sufficient fidelity within each individual device~\cite{Nickerson2013}. 
Surface codes, for instance, maintain fault tolerance even when interconnect noise significantly surpasses local operation noise~\cite{Ramette2024}.
Research also explores alternative modular codes, like quantum low-density parity-check codes for specific connectivities~\cite{Strikis2023} and hyperbolic Floquet codes offering efficient encoding and simpler distributed checks~\cite{Sutcliffe2025}. \section{Conclusion}\label{sec:conclusion}
This work demonstrates how noisy quantum interconnects can be practically utilized to quasiprobabilistically simulate high-fidelity state transfers between distributed devices. 
The proposed QPD, detailed in \Cref{theorem_arbitrary_cut}, reduces the sampling overhead for the simulated state transfer according to interconnect's entanglement fidelity and allows simple calibration using the method presented in \Cref{sec:comp_ent_fid}. 
Alongside this, we analyzed strategies to reduce the number of required circuit variants within the QPD and their associated accuracy trade-offs.

Experimental validation on contemporary quantum devices confirmed the feasibility of this proposed QPD, including its calibration.
We successfully demonstrated the predicted reduction in sampling overhead under realistic noise conditions. 
Significantly, the simulated transfer achieved a higher effective fidelity compared to direct state transfer over the same noisy interconnect.
This advantage was maintained even when using approximated QPDs designed to reduce the number of distinct circuit variants.

These findings bridge the gap between traditional wire cutting, which omits quantum interconnects entirely, and the ideal of using error-free quantum interconnects. 
By harnessing noisy interconnects, our QPD framework offers a flexible, near-term strategy for distributed quantum computation where performance scales with improving interconnect quality.

Our experimental limitations, discussed in \Cref{sec:experimental_limitations}, highlight important directions for future research.
These include extending experiments to multi-qubit systems, integrating these simulated transfers within larger distributed algorithms, testing across a broader range of quantum channel implementations, and validating performance in genuine distributed quantum computing architectures.
Furthermore, the QPD framework presented here for state transfer could be extended to directly simulate other distributed operations, e.g., multi-qubit gates, potentially leveraging noisy interconnects to reduce overheads in those settings as well. 
\section*{Acknowledgment}
This work was partially funded by the BMWK projects \textit{EniQmA} (01MQ22007B) and \textit{SeQuenC} (01MQ22009B).
We acknowledge the use of IBM Quantum Credits for this work. The views expressed are those of the authors, and do not reflect the official policy or position of IBM or the IBM Quantum team.

\bibliographystyle{quantum_trunc}
\bibliography{bibliography}

\onecolumn
\appendix
\section{Pauli Operators and fundamental lemmas}\label{sec:appendix_paulis}
This appendix introduces several fundamental properties of Pauli operators that are essential for subsequent proofs in this work. 
Although these properties are well-known, we include them here to ensure completeness and provide a self-contained presentation. 
We first introduce notation and present relevant summation identities. 
Next, we define the Pauli operators $X_{\vec{a}}$ and $Z_{\vec{a}}$, along with products of them. 
Finally, we establish important identities related to Pauli channels, which will be utilized in later proofs.

\subsection{Notation}
For an $n$-qubit system, a computational basis state is represented by a binary vector $\vec{k}=(k_0, \ldots ,k_{n-1})\in \{0,1\}^{n}$ and is constructed as the tensor product  $\ket{\vec{k}} = \bigotimes_{i=0}^{n-1} \ket{k_i}$.
The symbol $\oplus$ denotes addition modulo 2, which is applied bitwise to these binary vectors.
The resulting vector $\vec{k} \oplus \vec{l}$ corresponds to the quantum state $\ket{\vec{k}\oplus \vec{l}}$.
The computational basis states are orthonormal~\cite{Nielsen2009}, such that their inner product is given by the Kronecker delta for vector arguments:
\begin{align}\label{eq:def_kron_delta}
    \braket{\vec{k}|\vec{l}} = \delta_{\vec{k},\vec{l}}=\begin{cases}
        1,& \text{if }\vec{k}=\vec{l}\\
        0,& \text{otherwise}.
    \end{cases}
\end{align}

\subsection{Summation identities}
We start with the following summation identity:
\begin{lemma}\label{lemma_minus_one}
    For $\vec{a}\in \{0,1\}^n$, it holds that
    \begin{align}
        \sum_{\vec{b}\in\{0,1\}^n} (-1)^{\vec{a} \cdot \vec{b}} = 2^n \delta_{\vec{a},\vec{0}}
    \end{align}
    where $\vec{a} \cdot \vec{b}$ is the dot product of the binary vectors $\vec{a}$ and $\vec{b}$ and $\delta_{\vec{a},\vec{0}}$ is the Kronecker delta.
\end{lemma}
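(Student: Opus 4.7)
The plan is to split into the two cases dictated by the Kronecker delta and exploit the fact that the exponent is a sum over the $n$ bit positions, so the summation factorizes.

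First I would handle the case $\vec{a}=\vec{0}$ directly: every term equals $(-1)^0=1$, so the sum over $\vec{b}\in\{0,1\}^n$ gives $2^n$, matching the right-hand side.

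For the case $\vec{a}\neq \vec{0}$ I would rewrite the dot product as $\vec{a}\cdot\vec{b}=\sum_{i=0}^{n-1}a_i b_i$ and distribute the exponential, obtaining
\begin{equation*}
\sum_{\vec{b}\in\{0,1\}^n} (-1)^{\vec{a}\cdot\vec{b}}
= \sum_{\vec{b}\in\{0,1\}^n} \prod_{i=0}^{n-1}(-1)^{a_i b_i}
= \prod_{i=0}^{n-1} \sum_{b_i\in\{0,1\}} (-1)^{a_i b_i}.
\end{equation*}
Each inner sum evaluates to $2$ if $a_i=0$ and to $1+(-1)=0$ if $a_i=1$. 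Since $\vec{a}\neq \vec{0}$, at least one index $i$ has $a_i=1$, making the whole product vanish. Combining both cases gives exactly $2^n\delta_{\vec{a},\vec{0}}$.

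There is no real obstacle here; the only technical point worth stating carefully is the factorization step, which relies on $(-1)^{a_ib_i}$ depending only on $b_i$ so that Fubini over the finite product set $\{0,1\}^n$ applies without subtlety. Everything else is a direct evaluation.
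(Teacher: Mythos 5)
Your proof is correct and follows essentially the same route as the paper's: a case split on whether $\vec{a}=\vec{0}$, followed by factorizing the sum over $\vec{b}$ into a product of single-bit sums $\sum_{b_i}(-1)^{a_ib_i}=1+(-1)^{a_i}$, at least one of which vanishes when $\vec{a}\neq\vec{0}$. No gaps; nothing further is needed.
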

\begin{proof}
    We consider two cases.
    When $\vec{a}=\vec{0}$, and utilizing the dot product definition $\vec{a} \cdot \vec{b}=\sum_{i=0}^{n-1} a_i b_i$:
    \begin{align}
        \sum_{\vec{b}\in\{0,1\}^n} (-1)^{\vec{0} \cdot \vec{b}} = \sum_{\vec{b}\in\{0,1\}^n} (-1)^{0} = 2^n =2^n\delta_{\vec{0},\vec{0}}.
    \end{align}
    When $\vec{a}\neq\vec{0}$, the derivation proceeds as follows:
    \begin{align}
        \sum_{\vec{b}\in\{0,1\}^n} (-1)^{\vec{a} \cdot \vec{b}}
        &= \sum_{\vec{b}\in\{0,1\}^n} (-1)^{\sum_{i=0}^{n-1} a_i b_i} \label{eq:proof_step_0} \\
        &= \sum_{b_0,\dots,b_{n-1} \in \{0,1\}} \prod_{i=0}^{n-1} (-1)^{a_i b_i} \label{eq:proof_step_1} \\
        &= \prod_{i=0}^{n-1} \left( \sum_{b_i \in \{0,1\}} (-1)^{a_i b_i} \right) \label{eq:proof_step_2_intermediate} \\
&= \prod_{i=0}^{n-1} \left( 1 + (-1)^{a_i} \right). \label{eq:proof_step_3}
    \end{align}
    Since by assumption $\vec{a}\neq\vec{0}$, there must be at least one index $k \in \{0, \ldots, n-1\}$ for which $a_k=1$. 
    For this specific index $k$, the corresponding factor in the product is $ 1 + (-1)^{a_k}=0$.
    Because at least one factor in the product is zero, the entire product in \Cref{eq:proof_step_3} is zero.
    Thus, we conclude for the case $\vec{a}\neq\vec{0}$ that
    \begin{align}
    \sum_{\vec{b}\in\{0,1\}^n} (-1)^{\vec{a} \cdot \vec{b}} = 0 = \delta_{\vec{a},\vec{0}}
    \end{align}
    Combining these two cases, we obtain the desired result.
\end{proof}

An immediate extension of \Cref{lemma_minus_one} is the following:
\begin{lemma}\label{lemma_double_minus_one}
    For $\vec{a}, \vec{b} \in \{0,1\}^n$:
    \begin{align}
        \sum_{\vec{c}\in\{0,1\}^n} (-1)^{\vec{a} \cdot \vec{c}}(-1)^{\vec{b} \cdot \vec{c}} = 2^n \delta_{\vec{a}, \vec{b}}
    \end{align}
\end{lemma}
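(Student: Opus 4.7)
The plan is to reduce this claim directly to \Cref{lemma_minus_one} by combining the two sign factors into a single one. First I would use the identity $(-1)^{x}(-1)^{y} = (-1)^{x+y}$ to write
\begin{align}
(-1)^{\vec{a}\cdot\vec{c}}(-1)^{\vec{b}\cdot\vec{c}} = (-1)^{\vec{a}\cdot\vec{c} + \vec{b}\cdot\vec{c}} = (-1)^{(\vec{a}+\vec{b})\cdot\vec{c}},
\end{align}
where the second equality uses bilinearity of the dot product. Since $(-1)^{2k}=1$ for any integer $k$, the exponent may be reduced modulo $2$, so $(-1)^{(\vec{a}+\vec{b})\cdot\vec{c}} = (-1)^{(\vec{a}\oplus\vec{b})\cdot\vec{c}}$, where $\oplus$ denotes bitwise addition modulo $2$.

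Next I would substitute this into the sum and apply \Cref{lemma_minus_one} with the binary vector $\vec{a}\oplus\vec{b} \in \{0,1\}^n$:
\begin{align}
\sum_{\vec{c}\in\{0,1\}^n} (-1)^{\vec{a}\cdot\vec{c}}(-1)^{\vec{b}\cdot\vec{c}}
= \sum_{\vec{c}\in\{0,1\}^n} (-1)^{(\vec{a}\oplus\vec{b})\cdot\vec{c}}
= 2^n \delta_{\vec{a}\oplus\vec{b},\vec{0}}.
\end{align}
Finally I would observe that $\vec{a}\oplus\vec{b} = \vec{0}$ if and only if $\vec{a}=\vec{b}$ (since $\oplus$ is its own inverse on $\{0,1\}^n$), so $\delta_{\vec{a}\oplus\vec{b},\vec{0}} = \delta_{\vec{a},\vec{b}}$, yielding the claimed identity.

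There is no real obstacle here; the only subtle point is the reduction of the integer-valued exponent modulo $2$ so that the result of \Cref{lemma_minus_one} applies cleanly to the $\{0,1\}^n$-valued vector $\vec{a}\oplus\vec{b}$ rather than to an integer-valued sum $\vec{a}+\vec{b}$. Once this is justified by the periodicity of $(-1)^{k}$, the rest is a direct application of the previous lemma.
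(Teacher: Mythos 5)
Your proposal is correct and follows essentially the same route as the paper's proof: combine the two sign factors into $(-1)^{(\vec{a}\oplus\vec{b})\cdot\vec{c}}$, apply \Cref{lemma_minus_one} to the vector $\vec{a}\oplus\vec{b}$, and note that $\vec{a}\oplus\vec{b}=\vec{0}$ iff $\vec{a}=\vec{b}$. Your explicit justification of the reduction of the integer exponent modulo $2$ is a minor point the paper glosses over by writing $\oplus$ in the exponent directly, but the argument is the same.
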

\begin{proof}
We proceed as follows:
    \begin{align}
        \sum_{\vec{c}\in\{0,1\}^n} (-1)^{\vec{a} \cdot \vec{c}}(-1)^{\vec{b} \cdot \vec{c}} &= \sum_{\vec{c}\in\{0,1\}^n} (-1)^{\vec{a} \cdot \vec{c} \oplus \vec{b} \cdot \vec{c}}\\
        &= \sum_{\vec{c}\in\{0,1\}^n} (-1)^{(\vec{a} \oplus \vec{b} ) \cdot \vec{c}}\\
        &=2^n \delta_{\vec{a} \oplus \vec{b} , \vec{0}}\label{eq:proof_double_minus_one_1}\\
        &=2^n \delta_{\vec{a}, \vec{b}}\label{eq:proof_double_minus_one_2}
    \end{align}
where we applied \Cref{lemma_minus_one} to obtain \Cref{eq:proof_double_minus_one_1} and \Cref{eq:proof_double_minus_one_2} holds since the condition  $\vec{a} \oplus \vec{b} = \vec{0}$ holds exactly when $\vec{a}=\vec{b}$.
\end{proof}

\subsection{Representations and products of Pauli operators}

The operator $X_{\vec{a}}$ defined in \Cref{eq:X_and_Z} can be expressed as the following:
\begin{align}
        X_{\vec{a}} &= \bigotimes_{i=0}^{n-1} X^{a_i}\\
        &= \bigotimes_{i=0}^{n-1} \left(\ketbra{0}{0 \oplus a_i} + \ketbra{1}{1 \oplus a_i}\right)\\
        &= \bigotimes_{i=0}^{n-1} \sum_{k_i \in \{0, 1\}}\ketbra{k_i}{k_i \oplus a_i}\\
        &= \sum_{k_0 \in \{0,1\}} \ldots \sum_{k_{n-1}\in\{0,1\}} \bigotimes_{i=0}^{n-1} \ketbra{k_i}{k_i\oplus a_i} \label{eq:x_a_rep_intermediate}.
\end{align}
To reach the final form, we use the property of the tensor product to separate the kets and bras: 
\begin{align}
    \bigotimes_{i=0}^{n-1} \ketbra{k_i}{k_i\oplus a_i} &= \left(\bigotimes_{i=0}^{n-1} \ket{k_i}\right) \left(\bigotimes_{i=0}^{n-1}\bra{k_i\oplus a_i} \right).
\end{align}
We then identify the resulting tensor products with their multi-qubit state vector notation, i.e., $\ket{\vec{k}} = \bigotimes_{i=0}^{n-1} \ket{k_i}$ and $\bra{\vec{k} \oplus\vec{a}} = \bigotimes_{i=0}^{n-1}\bra{k_i\oplus a_i}$.
This gives the final compact expression:
\begin{align}
        X_{\vec{a}} &= \sum_{\vec{k} \in \{0,1\}^n} \ketbra{\vec{k}}{\vec{k} \oplus\vec{a}}.\label{eq:x_a_rep}
\end{align}
Moreover, the operator $Z_{\vec{a}}$ from \Cref{eq:X_and_Z} can be expressed as the following,
\begin{align}
        Z_{\vec{a}} &= \bigotimes_{i=0}^{n-1} Z^{a_i}\\
        &= \bigotimes_{i=0}^{n-1} \left(\ketbras{0} + (-1)^{a_i}\ketbras{1}\right)\\
        &= \bigotimes_{i=0}^{n-1} \left( \sum_{k_i \in \{0,1\}}(-1)^{k_ia_i}\ketbras{k_i}\right)\\
        &= \sum_{\vec{k} \in \{0,1\}^n} (-1)^{\sum_i a_i k_i} \ketbras{\vec{k}} \\
        &= \sum_{\vec{k} \in \{0,1\}^n} (-1)^{\vec{k} \cdot \vec{a}} \ketbras{\vec{k}}, \label{eq:z_a_rep}
\end{align}
where $\vec{k} \cdot \vec{a}$ is the dot product of the binary vectors $\vec{k}$ and $\vec{a}$.

Next, we establish the explicit form of the operator product $X_{\vec{a}}Z_{\vec{b}}$ using the representations from \Cref{eq:x_a_rep} and \Cref{eq:z_a_rep}:
\begin{align}
    X_{\vec{a}}Z_{\vec{b}} &= \left(\sum_{\vec{k} \in \{0,1\}^n} \ketbra{\vec{k}}{\vec{k}\oplus \vec{a}}\right)\left(\sum_{\vec{l} \in \{0,1\}^n} (-1)^{\vec{l} \cdot \vec{b}} \ketbras{\vec{l}}\right)  \\
    &= \sum_{\vec{k} , \vec{l}\in \{0,1\}^n} (-1)^{\vec{l} \cdot \vec{b}}  \ketbra{\vec{k}}{\vec{k}\oplus \vec{a}}\ketbras{\vec{l}}  \\
    &= \sum_{\vec{k} , \vec{l}\in \{0,1\}^n} (-1)^{\vec{l} \cdot \vec{b}} \delta_{\vec{k}\oplus \vec{a}, \vec{l}} \ketbra{\vec{k}}{\vec{l}}\label{eq:XZ_delta}\\
    &=\sum_{\vec{k} \in \{0,1\}^n} (-1)^{(\vec{k} \oplus \vec{a}) \cdot \vec{b}}\ketbra{\vec{k}}{\vec{k}\oplus \vec{a}}. \label{eq:XZ_form_revised}
\end{align}
The transition to \Cref{eq:XZ_delta} introduces the Kronecker delta $\delta_{\vec{k}\oplus \vec{a}, \vec{l}}$ by evaluating the inner product $\braket{\vec{k}\oplus \vec{a}|\vec{l}}$ of the basis states (see \Cref{eq:def_kron_delta}).
Subsequently, to obtain the final expression, this Kronecker delta is used to eliminate the sum over $\vec{l}$ by enforcing the condition $\vec{l}=\vec{k} \oplus \vec{a}$, as all other terms are zero.
Using this derived representation of $X_{\vec{a}}Z_{\vec{b}}$, we can establish the identity presented in the following lemma, which is used in a subsequent proof.

\begin{lemma}\label{lemma_pauli_product}
    For $n$-bit binary vectors $\vec{a}, \vec{b}, \vec{c}, \vec{d} \in \{0,1\}^n$, the following identity holds:
    \begin{align}
        X_{\vec{a}}Z_{\vec{b}} X_{\vec{c}}Z_{\vec{d}} (X_{\vec{a}}Z_{\vec{b}})^{\dagger} =(-1)^{\vec{a}\cdot \vec{d}} (-1)^{\vec{c}\cdot \vec{b}} X_{\vec{c}}Z_{\vec{d}}
    \end{align}
\end{lemma}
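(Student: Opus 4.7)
The plan is to reduce the identity to the elementary anticommutation of single-qubit $X$ and $Z$, pushed through the tensor product structure. The key commutation fact I would first record as a short sublemma (proved either by the single-qubit relation $ZX = -XZ$ applied componentwise, or directly from the representation \Cref{eq:XZ_form_revised}) is
\begin{align}
Z_{\vec{b}} X_{\vec{c}} = (-1)^{\vec{b}\cdot \vec{c}} X_{\vec{c}} Z_{\vec{b}}.
\end{align}
Together with the trivial facts $X_{\vec{a}} X_{\vec{a}} = I^{\otimes n}$, $Z_{\vec{b}} Z_{\vec{b}} = I^{\otimes n}$, and the Hermiticity $(X_{\vec{a}} Z_{\vec{b}})^{\dagger} = Z_{\vec{b}} X_{\vec{a}}$, this is all the structural input needed.

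With these tools in hand, the proof is essentially a three-step calculation. First I would use the sublemma to bring $Z_{\vec{b}}$ past $X_{\vec{c}}$ inside the product, producing a factor $(-1)^{\vec{b}\cdot \vec{c}}$ and leaving the expression
\begin{align}
X_{\vec{a}} Z_{\vec{b}} X_{\vec{c}} Z_{\vec{d}} (X_{\vec{a}} Z_{\vec{b}})^{\dagger} = (-1)^{\vec{b}\cdot \vec{c}}\, X_{\vec{a}} X_{\vec{c}} Z_{\vec{b}} Z_{\vec{d}} Z_{\vec{b}} X_{\vec{a}}.
\end{align}
Second, since $Z$-operators mutually commute and square to identity, the middle block simplifies to $Z_{\vec{b}} Z_{\vec{d}} Z_{\vec{b}} = Z_{\vec{d}}$. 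Third, I would apply the sublemma once more to move the final $X_{\vec{a}}$ past $Z_{\vec{d}}$, contributing $(-1)^{\vec{a}\cdot \vec{d}}$, and then collapse $X_{\vec{a}} X_{\vec{c}} X_{\vec{a}} = X_{\vec{c}}$ by the commutativity and involutivity of the $X$-operators. Collecting the two sign factors yields the claimed $(-1)^{\vec{a}\cdot \vec{d}}(-1)^{\vec{c}\cdot \vec{b}} X_{\vec{c}} Z_{\vec{d}}$.

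There is no real obstacle here, but if I want a proof independent of the sublemma I would instead insert the explicit form from \Cref{eq:XZ_form_revised} for all four factors and collapse the resulting fourfold sum using the computational-basis orthogonality \Cref{eq:def_kron_delta}; the chain of Kronecker deltas forces $\vec{l} = \vec{k}\oplus\vec{a}$, matches up the remaining labels, and the accumulated sign exponents $(\vec{k}\oplus\vec{a})\cdot\vec{b}$, etc., collapse (via $\vec{x}\cdot(\vec{y}\oplus\vec{z}) = \vec{x}\cdot\vec{y}\oplus\vec{x}\cdot\vec{z}$ and the cancellation of self-dot terms modulo 2) to exactly $\vec{a}\cdot \vec{d} \oplus \vec{c}\cdot \vec{b}$ in the phase, which is the only bookkeeping care needed. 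Either route is direct; I would prefer the first because it keeps the argument at the level of operator algebra and avoids a summation manipulation.
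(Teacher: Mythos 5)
Your primary argument is correct, and it takes a genuinely different route from the paper. The paper proves the identity by brute-force computation: it inserts the explicit computational-basis representation $X_{\vec{a}}Z_{\vec{b}} = \sum_{\vec{k}} (-1)^{(\vec{k}\oplus\vec{a})\cdot\vec{b}}\ketbra{\vec{k}}{\vec{k}\oplus\vec{a}}$ for each factor, multiplies out the resulting sums, and collapses the Kronecker deltas arising from $\braket{\vec{k}\oplus\vec{a}|\vec{l}}$ — exactly the fallback route you sketch in your last paragraph. Your preferred route instead works at the level of operator algebra: the single commutation relation $Z_{\vec{b}}X_{\vec{c}} = (-1)^{\vec{b}\cdot\vec{c}}X_{\vec{c}}Z_{\vec{b}}$ (itself immediate from $ZX=-XZ$ applied factorwise), together with Hermiticity, involutivity, and the mutual commutativity of the $X$-type and $Z$-type operators among themselves. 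I checked the three-step calculation and the signs come out right: the first commutation contributes $(-1)^{\vec{b}\cdot\vec{c}}$, the middle block $Z_{\vec{b}}Z_{\vec{d}}Z_{\vec{b}}$ collapses to $Z_{\vec{d}}$, the second commutation contributes $(-1)^{\vec{a}\cdot\vec{d}}$, and $X_{\vec{a}}X_{\vec{c}}X_{\vec{a}} = X_{\vec{c}}$, giving the claimed phase. Your approach is shorter and makes the origin of the two sign factors transparent (each comes from one $Z$-past-$X$ exchange), at the cost of needing the sublemma as a separate step; the paper's approach is longer but self-contained given that it has already established the explicit representation in its \Cref{eq:XZ_form_revised} for other purposes, so it reuses that machinery rather than introducing a new commutation identity.
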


\begin{proof}
Let $P_1=X_{\vec{a}}Z_{\vec{b}}$ and $P_2=X_{\vec{c}}Z_{\vec{d}}$.
Using the form from \Cref{eq:XZ_form_revised}:
\begin{align}
P_1 &= \sum_{\vec{k} \in \{0,1\}^n} (-1)^{(\vec{k} \oplus \vec{a}) \cdot \vec{b}} \ketbra{\vec{k}}{\vec{k}\oplus \vec{a}}, \\
P_2 &= \sum_{\vec{l} \in \{0,1\}^n} (-1)^{(\vec{l} \oplus \vec{c}) \cdot \vec{d}} \ketbra{\vec{l}}{\vec{l}\oplus \vec{c}}.
\end{align}
The Hermitian conjugate $P_1^{\dagger}$ is:
\begin{align}
P_1^\dagger = \left(\sum_{\vec{m} \in \{0,1\}^n} (-1)^{(\vec{m} \oplus \vec{a}) \cdot \vec{b}} \ketbra{\vec{m}}{\vec{m}\oplus \vec{a}}\right)^\dagger = \sum_{\vec{m} \in \{0,1\}^n} (-1)^{(\vec{m} \oplus \vec{a}) \cdot \vec{b}} \ketbra{\vec{m} \oplus \vec{a}}{\vec{m}}.
\end{align}

\noindent
First, we compute the product $P_1 P_2$:
\begin{align}
    P_1 P_2 &=\left(\sum_{\vec{k} \in \{0,1\}^n} (-1)^{(\vec{k} \oplus \vec{a}) \cdot \vec{b}}\ketbra{\vec{k}}{\vec{k}\oplus\vec{a}}\right)\left(\sum_{\vec{l} \in \{0,1\}^n} (-1)^{(\vec{l} \oplus \vec{c}) \cdot \vec{d}}\ketbra{\vec{l}}{\vec{l}\oplus\vec{c}}\right)\\
    &= \left(\sum_{\vec{k} , \vec{l}\in \{0,1\}^n} (-1)^{(\vec{k} \oplus \vec{a}) \cdot \vec{b}} (-1)^{(\vec{l} \oplus \vec{c}) \cdot \vec{d}}\ketbra{\vec{k}}{\vec{k}\oplus\vec{a}} \ketbra{\vec{l}}{\vec{l}\oplus\vec{c}}\right)\\
    &= \left(\sum_{\vec{k} , \vec{l}\in \{0,1\}^n} (-1)^{(\vec{k} \oplus \vec{a}) \cdot \vec{b}} (-1)^{(\vec{l} \oplus \vec{c}) \cdot \vec{d}}\delta_{\vec{k} \oplus \vec{a},\vec{l}}\ketbra{\vec{k}}{\vec{l}\oplus\vec{c}}\right) \label{eq:inner_prod_delta}\\
    &= \left(\sum_{\vec{k}\in \{0,1\}^n} (-1)^{(\vec{k} \oplus \vec{a}) \cdot \vec{b}} (-1)^{(\vec{k} \oplus \vec{a} \oplus \vec{c}) \cdot \vec{d}}\ketbra{\vec{k}}{\vec{k} \oplus \vec{a}\oplus\vec{c}}\right)\label{eq:canncel_sum_for_zero_terms}\\
    &= (-1)^{\vec{a}\cdot \vec{d}} \left(\sum_{\vec{k} \in \{0,1\}^n} (-1)^{(\vec{k} \oplus \vec{a}) \cdot \vec{b}} (-1)^{(\vec{k}\oplus \vec{c}) \cdot \vec{d}}\ketbra{\vec{k}}{\vec{k}\oplus \vec{a}\oplus \vec{c}}\right)
\end{align}
In \Cref{eq:inner_prod_delta}, we replaced the inner product $\bra{\vec{k}\oplus\vec{a}} \ket{\vec{l}}$ by the Kronecker delta $\delta_{\vec{k} \oplus \vec{a},\vec{l}}$, which is only non-zero for $\vec{l} = \vec{k}\oplus\vec{a}$.
By enforcing this condition, we omit summing over $\vec{l}$ in \Cref{eq:canncel_sum_for_zero_terms}.
Now, we multiply by $P_1^{\dagger}$:
\begin{align}
    P_1P_2P_1^\dagger &= (-1)^{\vec{a}\cdot \vec{d}} \left(\sum_{\vec{k} \{0,1\}^n} (-1)^{(\vec{k} \oplus \vec{a}) \cdot \vec{b}} (-1)^{(\vec{k}\oplus \vec{c}) \cdot \vec{d}}\ketbra{\vec{k}}{\vec{k}\oplus \vec{a}\oplus \vec{c}}\right)\left(\sum_{\vec{m} \in {0,1}^n} (-1)^{(\vec{m}\oplus\vec{a}) \cdot \vec{b}} \ketbra{\vec{m}\oplus \vec{a}}{\vec{m}}\right)\\
    &= (-1)^{\vec{a}\cdot \vec{d}}\sum_{\vec{k} , \vec{m} \{0,1\}^n} (-1)^{(\vec{k} \oplus \vec{a}) \cdot \vec{b}} (-1)^{(\vec{k}\oplus \vec{c}) \cdot \vec{d}}(-1)^{(\vec{m} \oplus \vec{a}) \cdot \vec{b}}\ketbra{\vec{k}}{\vec{k}\oplus \vec{a}\oplus \vec{c}}\ketbra{\vec{m}\oplus \vec{a}}{\vec{m}}.
\end{align}
By evaluating the inner product $\bra{\vec{k}\oplus \vec{a}\oplus \vec{c}}\ket{\vec{m}\oplus \vec{a}}$, we obtain only non-zero terms in the summation when $\vec{k}\oplus \vec{a}\oplus \vec{c}=\vec{m}\oplus \vec{a}$ (see \Cref{eq:def_kron_delta}), which is equivalent to $\vec{k}\oplus\vec{c}=\vec{m}$.
By directly enforcing this identity, the summation simplifies as follows:
\begin{align}
    P_1P_2P_1^\dagger &= (-1)^{\vec{a}\cdot \vec{d}}\sum_{\vec{k} \{0,1\}^n}(-1)^{(\vec{k} \oplus \vec{a}) \cdot \vec{b}} (-1)^{(\vec{k}\oplus \vec{c}) \cdot \vec{d}}(-1)^{(\vec{k} \oplus  \vec{c}  \oplus \vec{a}) \cdot \vec{b}}\ketbra{\vec{k}}{\vec{k}\oplus \vec{c}}\\
    &=(-1)^{\vec{a}\cdot \vec{d}} (-1)^{\vec{c}\cdot \vec{b}} \sum_{\vec{k} \{0,1\}^n}(-1)^{(\vec{k} \oplus \vec{a}) \cdot \vec{b}} (-1)^{(\vec{k} \oplus \vec{c}) \cdot \vec{d}}(-1)^{(\vec{k} \oplus \vec{a}) \cdot \vec{b}}\ketbra{\vec{k}}{\vec{k}\oplus \vec{c}}\\
    &=(-1)^{\vec{a}\cdot \vec{d}} (-1)^{\vec{c}\cdot \vec{b}} \sum_{\vec{k} \in \{0,1\}^n} (-1)^{(\vec{k} \oplus \vec{c}) \cdot \vec{d}}\ketbra{\vec{k}}{\vec{k}\oplus \vec{c}}\\
    &=(-1)^{\vec{a}\cdot \vec{d}} (-1)^{\vec{c}\cdot \vec{b}} X_{\vec{c}}Z_{\vec{d}}
\end{align}
    
\end{proof}

\subsection{Channel identities}

Building on the representations of $Z_{\vec{a}}$ operators, we now prove a useful identity relating a measure-and-prepare channel to an average over conjugation by $Z_{\vec{a}}$ operators.

\begin{lemma}\label{lemma_z_channel}
    For any operator $\rho$, it holds that
    \begin{align}
        \sum_{\vec{k}\in\{0,1\}^n}\tr\left[\ketbras{\vec{k}}\rho\right] \ketbras{\vec{k}} = \frac{1}{2^n}\sum_{\vec{a}\in\{0,1\}^n}Z_{\vec{a}} \rho Z_{\vec{a}}
    \end{align}
\end{lemma}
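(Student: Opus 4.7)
The plan is a direct calculation using the diagonal representation of $Z_{\vec{a}}$ from \Cref{eq:z_a_rep} and the orthogonality identity from \Cref{lemma_double_minus_one}. The right-hand side is essentially a ``$Z$-twirl'' of $\rho$, and the left-hand side is the completely dephasing channel in the computational basis, so the content of the lemma is that uniform conjugation by all $Z_{\vec{a}}$ operators kills the off-diagonal matrix elements of $\rho$ while preserving the diagonal ones.

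First, I would substitute $Z_{\vec{a}} = \sum_{\vec{k}} (-1)^{\vec{k}\cdot\vec{a}} \ketbras{\vec{k}}$ into the right-hand side, obtaining a double sum
\begin{align}
\sum_{\vec{a}\in\{0,1\}^n} Z_{\vec{a}}\rho Z_{\vec{a}} = \sum_{\vec{k},\vec{l}\in\{0,1\}^n} \left(\sum_{\vec{a}\in\{0,1\}^n} (-1)^{\vec{k}\cdot\vec{a}}(-1)^{\vec{l}\cdot\vec{a}} \right) \bra{\vec{k}}\rho\ket{\vec{l}} \ketbra{\vec{k}}{\vec{l}}.
\end{align}
The inner sum over $\vec{a}$ is precisely the quantity handled by \Cref{lemma_double_minus_one}, which evaluates to $2^n\delta_{\vec{k},\vec{l}}$. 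This collapses the double sum to a single sum over the diagonal terms $\vec{k}=\vec{l}$.

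Second, I would identify the diagonal matrix element $\bra{\vec{k}}\rho\ket{\vec{k}}$ with the trace expression $\tr[\ketbras{\vec{k}}\rho]$ using the cyclicity of the trace. Dividing both sides by $2^n$ then yields exactly the claimed identity. There is no genuine obstacle here; the only substantive step is recognizing that \Cref{lemma_double_minus_one} applies directly to the phase sum. The result can equivalently be viewed as a special case of the fact that the group $\{Z_{\vec{a}}\}_{\vec{a}\in\{0,1\}^n}$ forms a unitary $1$-design on the $Z$-diagonal subalgebra, but the elementary calculation above is self-contained and sufficient.
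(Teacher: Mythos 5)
Your proposal is correct and follows essentially the same route as the paper's proof: expand each $Z_{\vec{a}}$ via \Cref{eq:z_a_rep}, collapse the phase sum with \Cref{lemma_double_minus_one} to $2^n\delta_{\vec{k},\vec{l}}$, and identify $\braket{\vec{k}|\rho|\vec{k}}$ with $\tr[\ketbras{\vec{k}}\rho]$. No gaps.
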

\begin{proof}
We start by using \Cref{eq:z_a_rep}:
    \begin{align}
        \frac{1}{2^n}\sum_{\vec{a}\in\{0,1\}^n}Z_{\vec{a}} \rho Z_{\vec{a}} &= \frac{1}{2^n}\sum_{\vec{a}\in\{0,1\}^n}\left(\sum_{\vec{k} \in \{0,1\}^n} (-1)^{\vec{k} \cdot \vec{a}} \ketbras{\vec{k}}\right)\rho \left(\sum_{\vec{l} \in \{0,1\}^n} (-1)^{\vec{l} \cdot \vec{a}} \ketbras{\vec{l}}\right)\\
        &= \frac{1}{2^n}\sum_{\vec{a},\vec{k},\vec{l}\in\{0,1\}^n} (-1)^{\vec{k} \cdot \vec{a}} (-1)^{\vec{l} \cdot \vec{a}}  \ketbras{\vec{k}}\rho \ketbras{\vec{l}}\\
        &= \frac{1}{2^n}\sum_{\vec{k},\vec{l}\in\{0,1\}^n}  \left(\ketbras{\vec{k}}\rho \ketbras{\vec{l}}\sum_{a\in\{0,1\}^n} (-1)^{\vec{k} \cdot \vec{a}} (-1)^{\vec{l} \cdot \vec{a}} \right)
    \end{align}
Next, we apply \Cref{lemma_double_minus_one}:
    \begin{align}
        \frac{1}{2^n}\sum_{\vec{a}\in\{0,1\}^n}Z_{\vec{a}} \rho Z_{\vec{a}}&= \frac{1}{2^n}\sum_{\vec{k},\vec{l}\in\{0,1\}^n}   \ketbras{\vec{k}}\rho \ketbras{\vec{l}}2^n\delta_{\vec{k},\vec{l}}\\
        &=\sum_{\vec{k}\in\{0,1\}^n}  \ketbras{\vec{k}}\rho \ketbras{\vec{k}} \label{eq:diag_form} \\
        &=\sum_{\vec{k}\in\{0,1\}^n}\tr\left[\ketbras{\vec{k}}\rho\right ]\ketbras{\vec{k}} \label{eq:diag_trace_form}.
    \end{align}
    The transition from \Cref{eq:diag_form} to \Cref{eq:diag_trace_form} uses the identity 
    \begin{align}
        \tr\left[\ketbras{\vec{k}}\rho\right] = \sum_{\vec{l}\in\{0,1\}^n} \bra{\vec{l}}\ketbras{\vec{k}}\rho\ket{\vec{l}} = \sum_{\vec{l}\in\{0,1\}^n}\delta_{\vec{l},\vec{k}} \braket{\vec{k}|\rho|\vec{l}} = \braket{\vec{k}|\rho|\vec{k}}.
    \end{align}
\end{proof}

Finally, we conclude with the following lemma, which demonstrates that averaging over the full Pauli group $\mathcal{Q}_n$ projects any operator onto the identity.
\begin{lemma}\label{lemma_pauli_twirl}
    For any operator $\rho$, it holds that
    \begin{align}
        \frac{1}{2^{2n}}\sum_{P\in \mathcal{Q}_n}P \rho P = \frac{\tr[\rho]}{2^n}I^{\otimes n}
    \end{align}
\end{lemma}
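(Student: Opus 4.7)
The plan is to expand $\rho$ in the orthogonal Pauli basis and show that averaging by the full Pauli group projects onto the identity component. Invoking the orthogonality of \Cref{eq:pauli_orthogonality}, write $\rho = \sum_{(\vec{c},\vec{d}) \in \{0,1\}^{2n}} \alpha_{\vec{c},\vec{d}}\, X_{\vec{c}} Z_{\vec{d}}$, so that in particular $\alpha_{\vec{0},\vec{0}} = \tr[\rho]/2^n$. Substituting this decomposition and interchanging the two sums reduces the problem to evaluating $\sum_{\vec{a},\vec{b}\in\{0,1\}^n} (X_{\vec{a}}Z_{\vec{b}})\, X_{\vec{c}}Z_{\vec{d}}\, (X_{\vec{a}}Z_{\vec{b}})^{\dagger}$ for each fixed basis element, after fixing Hermitian Pauli representatives of $\mathcal{Q}_n$ so that $P = P^\dagger$ and the raw product $P\rho P$ coincides with the conjugation $P\rho P^\dagger$.

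I would then apply \Cref{lemma_pauli_product} to rewrite each conjugation as $(-1)^{\vec{a}\cdot\vec{d} + \vec{b}\cdot\vec{c}}\, X_{\vec{c}}Z_{\vec{d}}$. Because the sign factors separately with respect to $\vec{a}$ and $\vec{b}$, the double sum decouples into a product of independent one-dimensional sums:
\begin{align*}
\sum_{\vec{a},\vec{b}\in\{0,1\}^n} (-1)^{\vec{a}\cdot\vec{d} + \vec{b}\cdot\vec{c}} = \Bigl(\sum_{\vec{a}\in\{0,1\}^n} (-1)^{\vec{a}\cdot\vec{d}}\Bigr)\Bigl(\sum_{\vec{b}\in\{0,1\}^n} (-1)^{\vec{b}\cdot\vec{c}}\Bigr).
\end{align*}
Each factor is handled directly by \Cref{lemma_minus_one}, producing $2^n \delta_{\vec{d},\vec{0}}$ and $2^n \delta_{\vec{c},\vec{0}}$ respectively. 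Their product therefore vanishes for every $(\vec{c},\vec{d}) \neq (\vec{0},\vec{0})$ and equals $2^{2n}$ exactly when the Pauli in question is the identity $I^{\otimes n}$.

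Collecting the pieces yields $\sum_{P \in \mathcal{Q}_n} P\rho P = 2^{2n}\, \alpha_{\vec{0},\vec{0}}\, I^{\otimes n} = 2^n\, \tr[\rho]\, I^{\otimes n}$, and dividing by $2^{2n}$ gives the claimed identity. The only real subtlety, and what I expect to be the main obstacle if one is not careful, lies in the convention on representatives of $\mathcal{Q}_n$: \Cref{lemma_pauli_product} is stated for the representative-independent conjugation $P\rho P^\dagger$, whereas the present statement writes $P\rho P$. Choosing Hermitian representatives (for which $P^2 = I^{\otimes n}$, consistent with the orthogonality $\tr[P_a P_b] = 2^n \delta_{a,b}$ already used to define the Pauli basis) reconciles the two expressions and lets the factorization-plus-\Cref{lemma_minus_one} argument run without modification; beyond this bookkeeping, the proof is essentially routine.
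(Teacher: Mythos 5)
Your proposal is correct and follows essentially the same route as the paper's proof: expand $\rho$ in the $X_{\vec{c}}Z_{\vec{d}}$ basis, conjugate each basis element using \Cref{lemma_pauli_product}, factor the resulting double sum over $\vec{a}$ and $\vec{b}$, and collapse each factor with \Cref{lemma_minus_one} so that only the identity component survives. The only difference is a trivial normalization convention in the expansion coefficients, and your remark about choosing Hermitian representatives so that $P\rho P = P\rho P^{\dagger}$ matches the step the paper makes explicitly.
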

\begin{proof}
As $\mathcal{Q}_n$ forms a basis~\cite{Lawrence2002}, we can expand $\rho$ as
\begin{align}\label{eq:Pauli_expansion}
    \rho = \frac{1}{2^n} \sum_{\vec{a},\vec{b}\in \{0,1\}^n} x_{\vec{a},\vec{b}} X_{\vec{a}}Z_{\vec{b}}.
\end{align}
The coefficients $x_{\vec{a},\vec{b}}$ are isolated by taking the Hilbert-Schmidt inner product with a basis element $X_{\vec{a}}Z_{\vec{b}}$. 
For this representation, the general Pauli orthogonality relation from \Cref{eq:pauli_orthogonality} takes the specific form $\tr[(X_{\vec{a}}Z_{\vec{b}})(X_{\vec{c}}Z_{\vec{d}})]=2^n \delta_{\vec{a},\vec{c}}\delta_{\vec{b},\vec{d}}$. This property, together with the linearity of the trace, causes the sum to collapse:
\begin{align}
\tr[X_{\vec{a}}Z_{\vec{b}}\rho] &= \frac{1}{2^n} \sum_{\vec{c},\vec{d}} x_{\vec{c},\vec{d}} \tr[(X_{\vec{a}}Z_{\vec{b}})(X_{\vec{c}}Z_{\vec{d}})] \\
&= \frac{1}{2^n} \sum_{\vec{c},\vec{d}} x_{\vec{c},\vec{d}} (2^n \delta_{\vec{a},\vec{c}}\delta_{\vec{b},\vec{d}})\\ 
&= x_{\vec{a},\vec{b}}.\label{eq:x_a_b_coefficient}
\end{align}

In the following step, we substitute the Pauli expansion of $\rho$ from \Cref{eq:Pauli_expansion} and also expand the sum over $P=X_{\vec{a}}Z_{\vec{b}}\in\mathcal{Q}_n$. 
Crucially, we use the fact that all Pauli operators are Hermitian ($P=P^{\dagger}$) to write the final operator as its conjugate transpose:
\begin{align}
    \frac{1}{2^{2n}}\sum_{P\in \mathcal{Q}_n}P \rho P &= \frac{1}{2^{2n}}\sum_{P\in \mathcal{Q}_n}P \left(\frac{1}{2^n} \sum_{\vec{c},\vec{d}\in \{0,1\}^n} x_{\vec{c},\vec{d}} X_{\vec{c}}Z_{\vec{d}}\right) P\\
    &= \frac{1}{2^{3n}}\sum_{\vec{a}, \vec{c},\vec{b}, \vec{d}} x_{\vec{c},\vec{d}} X_{\vec{a}}Z_{\vec{b}} X_{\vec{c}}Z_{\vec{d}} (X_{\vec{a}}Z_{\vec{b}})^{\dagger}.
\end{align}
We now apply \Cref{lemma_pauli_product} and obtain:
\begin{align}
    \frac{1}{2^{2n}}\sum_{P\in \mathcal{Q}_n}P \rho P 
    &= \frac{1}{2^{3n}}\sum_{\vec{a}, \vec{b}, \vec{c}, \vec{d} \in \{0,1\}^n} x_{\vec{c},\vec{d}} (-1)^{\vec{a}\cdot \vec{d}} (-1)^{\vec{c}\cdot \vec{b}} X_{\vec{c}}Z_{\vec{d}}\label{eq:pauli_twirl_1}\\
    &= \frac{1}{2^{3n}}\sum_{\vec{c}, \vec{d} \in \{0,1\}^n} x_{\vec{c},\vec{d}}  X_{\vec{c}}Z_{\vec{d}}\left(\sum_{\vec{a}\in \{0,1\}^n} (-1)^{\vec{a}\cdot \vec{d}}\right)\left(\sum_{\vec{b} \in \{0,1\}^n} (-1)^{\vec{c}\cdot \vec{b}}\right).
\end{align}
Applying \Cref{lemma_minus_one} twice yields
\begin{align}
    \frac{1}{2^{2n}}\sum_{P\in \mathcal{Q}_n}P \rho P  &= \frac{1}{2^{3n}}\sum_{a', b'} x_{\vec{c},\vec{d}}  X_{\vec{c}}Z_{\vec{d}}2^n\delta_{\vec{d},\vec{0}}2^n\delta_{\vec{c},\vec{0}} \label{eq:pauli_twirl_2}\\
    &= \frac{1}{2^{n}} x_{\vec{0},\vec{0}} X_{\vec{0}}Z_{\vec{0}}\\
    &= \frac{\tr[\rho]}{2^n}I^{\otimes n}
\end{align}
where $x_{\vec{0},\vec{0}}=\tr[\rho]$ follows from \Cref{eq:x_a_b_coefficient}.
    
\end{proof}

 \section{Twirling a Pauli channel with a Pauli mixing ensemble}\label{sec:appendix_pauli_mixing}
This appendix provides a detailed proof of the following lemma, which establishes that twirling a Pauli channel with a Pauli mixing ensemble results in a depolarizing channel.

\begin{lemma} \label{lemma_pauli_mixing}
    Let $\mathcal{C}_{\text{Pauli}}$ be a Pauli channel and $\mathcal{E} = \{(p_i, U_i)\}_{i=0}^{K-1}$ a Pauli mixing unitary ensemble.
    Then it holds that
    \begin{align}
        \mathbb{E}_{\mathcal{E}}(\mathcal{C}_{\text{Pauli}}) = \mathcal{D}_{F(\mathcal{C}_{\text{Pauli}})}.
    \end{align}
\end{lemma}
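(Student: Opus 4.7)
The plan is a direct computation: expand the left-hand side, use that Pauli-mixing unitaries send each non-identity Pauli to a uniform element of $\mathcal{Q}_n^*$ under conjugation, and recognize the resulting expression as the depolarizing channel. First I would substitute the Pauli-channel form from \Cref{eq:pauli_channel} into the definition of $\mathbb{E}_{\mathcal{E}}$ in \Cref{eq:expectation_ensemble} and interchange sums to obtain
\begin{equation*}
\mathbb{E}_{\mathcal{E}}(\mathcal{C}_{\text{Pauli}})(\rho) = \sum_{a=0}^{2^{2n}-1}\chi_{aa}\sum_{i=0}^{K-1} p_i\,(U_i^\dagger P_a U_i)\,\rho\,(U_i^\dagger P_a U_i).
\end{equation*}
Because $\mathcal{E}$ is Pauli mixing, each $U_i$ preserves $\mathcal{P}_n$ under conjugation, so $U_i^\dagger P_a U_i = \beta_{i,a}\, Q_{i,a}$ with $Q_{i,a} := \pi(U_i^\dagger P_a U_i) \in \mathcal{Q}_n$ and some phase $\beta_{i,a}$. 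Since the two Pauli factors around $\rho$ are identical, the phase contributes only $\beta_{i,a}^2$, which drops out once we note that $U_i^\dagger P_a U_i$ is Hermitian (as conjugation of a Hermitian Pauli by a unitary), so $(U_i^\dagger P_a U_i)\,\rho\,(U_i^\dagger P_a U_i) = Q_{i,a}\,\rho\,Q_{i,a}$.

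Second, I would split the outer sum into the $a=0$ term and the $a\neq 0$ terms. For $a=0$ the operator $P_0=I^{\otimes n}$ is fixed under conjugation, so $Q_{i,0}=I^{\otimes n}$ and the inner sum collapses to $\rho$, yielding the contribution $\chi_{00}\,\rho = F(\mathcal{C}_{\text{Pauli}})\,\rho$ by \Cref{eq:ent_fidelity_chi}. For each fixed $a$ with $P_a\in\mathcal{Q}_n^*$, the Pauli-mixing condition \Cref{eq:definition_pauli_mixing} states exactly that the $p_i$-weighted distribution of $Q_{i,a}$ is uniform over $\mathcal{Q}_n^*$, hence
\begin{equation*}
\sum_{i} p_i\, Q_{i,a}\,\rho\,Q_{i,a} = \frac{1}{2^{2n}-1}\sum_{P\in\mathcal{Q}_n^*} P\rho P,
\end{equation*}
and crucially this is independent of the particular input $P_a$.

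Finally, since the inner expression no longer depends on $a$, I would factor it out of the sum over $a\neq 0$ and use $\sum_{a\neq 0}\chi_{aa} = 1 - \chi_{00} = 1 - F(\mathcal{C}_{\text{Pauli}})$ to get
\begin{equation*}
\mathbb{E}_{\mathcal{E}}(\mathcal{C}_{\text{Pauli}})(\rho) = F(\mathcal{C}_{\text{Pauli}})\,\rho + \frac{1-F(\mathcal{C}_{\text{Pauli}})}{2^{2n}-1}\sum_{P\in\mathcal{Q}_n^*} P\rho P,
\end{equation*}
which is exactly $\mathcal{D}_{F(\mathcal{C}_{\text{Pauli}})}(\rho)$ by \Cref{eq:depol_channel}. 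The only real subtlety is the bookkeeping around the phase $\beta_{i,a}$ from Clifford-conjugating a Pauli; the cleanest way to dispatch it is the Hermiticity observation above, after which the argument reduces to the uniform-distribution identity guaranteed by the Pauli-mixing hypothesis and a standard sum manipulation.
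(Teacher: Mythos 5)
Your proposal is correct and follows essentially the same route as the paper's proof in \Cref{sec:appendix_pauli_mixing}: expand the twirl, separate the $a=0$ term, absorb the conjugation phase via Hermiticity of $U_i^\dagger P_a U_i$ (so the sign squares to one), and apply the Pauli-mixing uniformity to collapse the inner sum into $\frac{1}{2^{2n}-1}\sum_{P\in\mathcal{Q}_n^*}P\rho P$ independently of $a$. No gaps; the handling of the phase factor matches the paper's $\eta_{ai}^2=1$ argument exactly.
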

\begin{proof}
We start by substituting the definitions of the $\mathcal{E}$-channel twirl from \Cref{eq:expectation_ensemble} and the Pauli channel $\mathcal{C}_{\text{Pauli}}$ from \Cref{eq:pauli_channel}:
    \begin{align}
        \mathbb{E}_{\mathcal{E}}(\mathcal{C}_{\text{Pauli}})(\rho)&= \sum_{i=0}^{K-1} p_iU_i^\dagger\mathcal{C}_{\text{Pauli}}(U_i\rho U_i^{\dagger})U_i\\
        &=  \sum_{i=0}^{K-1} p_iU_i^\dagger\left(\sum_{a=0}^{2^{2n}-1}\chi_{aa}P_aU_i\rho U_i^{\dagger} P_a\right)U_i \\
        &=  \sum_{a=0}^{2^{2n}-1} \chi_{aa} \sum_{i=0}^{K-1} p_i U_i^\dagger P_aU_i\rho U_i^{\dagger} P_a U_i.  \label{eq:proof_expanded_twirl}
\end{align}
Next, we separate the term for $a=0$, where $P_0=I^{\otimes n}$. 
For this term it holds that $U_i^\dagger P_0U_i = U_i^\dagger I^{\otimes n} U_i = I^{\otimes n}$.
Thus, the $a=0$ term of the sum in \Cref{eq:proof_expanded_twirl} is
\begin{align}
     \chi_{00}\sum_{i=0}^{K-1} p_i I^{\otimes n}\rho I^{\otimes n}=\chi_{00}\rho \left(\sum_{i=0}^{K-1} p_i \right) = \chi_{00}\rho.
\end{align}
So, \Cref{eq:proof_expanded_twirl} can be written as:
\begin{align}
    \mathbb{E}_{\mathcal{E}}(\mathcal{C}_{\text{Pauli}})(\rho) &=  \chi_{00}\rho + \sum_{a=1}^{2^{2n}-1} \chi_{aa} \sum_{i=0}^{K-1} p_i U_i^\dagger P_aU_i\rho U_i^{\dagger} P_a U_i.\label{eq:proof_P0_separated}
\end{align}

Now, we analyze the remaining terms in the sum for $a\ge 1$.
Since each $P_a\in \mathcal{Q}_n^*$ is Hermitian ($P_a^\dagger = P_a$), the resulting operator $U_i^{\dagger} P_a U_i$ is also Hermitian:
\begin{align}
    (U_i^{\dagger} P_a U_i)^{\dagger}= U_i^{\dagger} P_a^\dagger (U_i^{\dagger})^\dagger =U_i^{\dagger} P_a U_i.
\end{align}
From the definition of the Pauli mixing ensemble in \Cref{sec:twirl}, this operator $U_i^{\dagger} P_a U_i$ is in the Pauli group $\mathcal{P}_n$.
We can write it as  $U_i^{\dagger} P_a U_i = \eta_{ai}\pi(U_i^\dagger P_aU_i)$, where $\pi(U_i^\dagger P_aU_i)$ is the corresponding phase-free Pauli operator in $\mathcal{Q}_n^*$, and $\eta_{ai}\in \{\pm 1\}$ is a real sign factor since $U_i^{\dagger} P_a U_i$ is Hermitian.

Substituting $U_i^{\dagger} P_a U_i = \eta_{ai}\pi(U_i^\dagger P_aU_i)$ in \Cref{eq:proof_P0_separated}, we obtain
\begin{align}
    \mathbb{E}_{\mathcal{E}}(\mathcal{C}_{\text{Pauli}})(\rho) &= \chi_{00}\rho +  \sum_{a=1}^{2^{2n}-1} \chi_{aa} \sum_{i=0}^{K-1} p_i \eta_{ai}\pi(U_i^\dagger P_aU_i) \rho \eta_{ai}\pi(U_i^{\dagger} P_a U_i)\\
    &= \chi_{00}\rho +  \sum_{a=1}^{2^{2n}-1} \chi_{aa} \sum_{i=0}^{K-1} p_i \pi(U_i^\dagger P_aU_i) \rho \pi(U_i^{\dagger} P_a U_i) \label{eq:proof_with_pi},
\end{align}
where the sign factors cancel as $\eta_{ai}^2=1$.

To simplify the sum over the ensemble index $i$,  we use the Pauli mixing property of the ensemble $\mathcal{E}$. 
For clarity in the following derivation, we define a function $\Delta$ that acts as a Kronecker delta for Pauli operators~$P_a, P_b\in\mathcal{Q}_n$:
\begin{align}
    \Delta(P_a, P_b) = \begin{cases}
        1, & \text{if } P_a = P_b\\
        0, & \text{otherwise}.
    \end{cases}
\end{align}
According to the definition of a Pauli mixing ensemble from \Cref{eq:definition_pauli_mixing}, for any input operator $P_a\in \mathcal{Q}_n^*$, the random output operator $\pi(U_i^{\dagger} P_a U_i)$ (when $U_i$ is drawn from $\mathcal{E}$) is uniformly distributed over $\mathcal{Q}_n^*$. 
This means that for any target Pauli operator $P\in \mathcal{Q}_n^*$, it holds that
\begin{align}
 \sum_{\substack{i\\ \text{s.t. }\pi(U_i^{\dagger} P_a U_i)=P}} p_i &= \sum_{i=0}^{K-1} p_i\Delta\left(\pi(U_i^{\dagger} P_a U_i),P\right) \\ 
&= \frac{1}{|\mathcal{Q}_n^*|} \\
&= \frac{1}{2^{2n}-1}.
\end{align}
Therefore, for any $P_a\in \mathcal{Q}_n^*$, the inner sum in \Cref{eq:proof_with_pi}  can be simplified by regrouping the summation.
Instead of summing over the ensemble index $i$, we sum over all possible output Pauli operators $P\in \mathcal{Q}_n^*$ and collect the probabilities $p_i$ for each outcome.
This is achieved by inserting the identity $1=\sum_{P\in\mathcal{Q}_n^*}\Delta\left(\pi(U_i^{\dagger} P_a U_i),P\right)$ and swapping the summation order:
\begin{align}
    \sum_{i=0}^{K-1} p_i \pi(U_i^\dagger P_aU_i) \rho \pi(U_i^{\dagger} P_a U_i) &= \sum_{i=0}^{K-1} p_i \left(\sum_{P\in\mathcal{Q}_n^*}\Delta\left(\pi(U_i^{\dagger} P_a U_i),P\right)\right)\pi(U_i^\dagger P_aU_i) \rho \pi(U_i^{\dagger} P_a U_i) \\
    &= \sum_{P\in\mathcal{Q}_n^*} \left(\sum_{i=0}^{K-1} p_i\Delta\left(\pi(U_i^{\dagger} P_a U_i),P\right)\right)P \rho P \\
    &=\frac{1}{2^{2n}-1}\sum_{P \in \mathcal{Q}_n^*}P \rho P. \label{eq:proof_pauli_mixing_applied}
\end{align}

Substituting the result from \Cref{eq:proof_pauli_mixing_applied} into \Cref{eq:proof_with_pi} yields
\begin{align}
    \mathbb{E}_{\mathcal{E}}(\mathcal{C}_{\text{Pauli}})(\rho) &= \chi_{00}\rho +  \sum_{a=1}^{2^{2n}-1} \chi_{aa} \frac{1}{2^{2n}-1}\sum_{P \in \mathcal{Q}_n^*}P \rho P. 
\end{align}
Using the condition $\sum_{a=0}^{2^{2n}-1}\chi_{aa}=1$ of a Pauli channel, we have $\sum_{a=1}^{2^{2n}-1}\chi_{aa}=1-\chi_{00}$.
Thus:
\begin{align}
    \mathbb{E}_{\mathcal{E}}(\mathcal{C}_{\text{Pauli}})(\rho)&= \chi_{00}\rho +  (1-\chi_{00}) \frac{1}{2^{2n}-1}\sum_{P \in \mathcal{Q}_n^*}P \rho P.
\end{align}
This is the definition of a depolarizing channel $\mathcal{D}_{\chi_{00}}$.
Since $\chi_{00}=F(\mathcal{C}_{\text{Pauli}})$ (see \Cref{eq:ent_fidelity_chi}), we conclude:
\begin{align}
    \mathbb{E}_{\mathcal{E}}(\mathcal{C}_{\text{Pauli}}) = \mathcal{D}_{F(\mathcal{C}_{\text{Pauli}})}.
\end{align}
\end{proof} \section{Proof of \Cref{lemma_mp_channel}}\label{sec:proof_lemma}
In the following, \Cref{lemma_mp_channel} is restated for convenience.
\restatelemma*

\begin{proof}
We start by rewriting the  measure-and-prepare channel $\mathcal{M}$ as
    \begin{align}
        \mathcal{M}(\rho) &= \sum_{\vec{k}\in\{0,1\}^n}\tr\left[ \ketbras{\vec{k}}\rho\right]\rho_k \\
        &= \sum_{\vec{k}\in\{0,1\}^n}\tr\left[ \ketbras{\vec{k}}\rho\right]\sum_{\vec{l}\in\{0,1\}^n} \frac{1}{2^n-1}(1-\delta_{\vec{k},\vec{l}}) \ketbras{\vec{l}} \\
        &= \frac{1}{2^n-1}\sum_{\vec{k}\in\{0,1\}^n}\tr\left[ \ketbras{\vec{k}}\rho\right]\left(\left(\sum_{\vec{l}\in\{0,1\}^n}  \ketbras{\vec{l}}\right) - \ketbras{\vec{k}}\right) \\
        &= \frac{1}{2^n-1}\sum_{\vec{k}\in\{0,1\}^n}\tr\left[ \ketbras{\vec{k}}\rho\right]\left(I^{\otimes n}- \ketbras{\vec{k}}\right) \\
        &= \frac{1}{2^n-1}\left(\sum_{\vec{k}\in\{0,1\}^n}\tr\left[ \ketbras{\vec{k}}\rho\right]I^{\otimes n} -\sum_{\vec{k}\in\{0,1\}^n}\tr\left[ \ketbras{\vec{k}}\rho\right] \ketbras{\vec{k}}\right)\label{eq:mp_proof_1}.
\end{align}
Linearity of the trace shows that
\begin{align}
\sum_{\vec{k}\in\{0,1\}^n}\tr\left[\ketbras{\vec{k}}\rho\right] I^{\otimes n} = \tr\left[\sum_{\vec{k}\in\{0,1\}^n} \ketbras{\vec{k}}\rho\right]I^{\otimes n}= \tr\left[I^{\otimes n}\rho\right] I^{\otimes n}= \tr\left[\rho\right] I^{\otimes n}.
\end{align}
Furthermore, by  applying \Cref{lemma_z_channel} the second term becomes
\begin{align}
    \sum_{\vec{k}\in\{0,1\}^n}\tr\left[\ketbras{\vec{k}}\rho\right]\ketbras{\vec{k}} = \frac{1}{2^n}\sum_{\vec{a}\in\{0,1\}^n} Z_{\vec{a}} \rho Z_{\vec{a}}.
\end{align}
Thus, we can write \Cref{eq:mp_proof_1} as
\begin{align}
        \mathcal{M}(\rho) &= \frac{1}{2^n-1}\left(\tr\left[\rho\right]I^{\otimes n} - \frac{1}{2^n}\sum_{\vec{a}\in\{0,1\}^n}Z_{\vec{a}} \rho Z_{\vec{a}}\right)\label{eq:mp_proof_2}.
\end{align}
Next, we insert this result into the definition of the ensemble expectation value in \Cref{eq:expectation_ensemble} to express $\mathbb{E}_{\mathcal{V}}(\mathcal{M})$ as
    \begin{align}
        (\mathbb{E}_{\mathcal{V}}(\mathcal{M}))(\rho) &=\frac{1}{2^n+1} \sum_{j=0}^{2^n} V_j\mathcal{M}(V_j^\dagger \rho V_j)V_j^\dagger\\ 
        &= \frac{1}{2^n+1} \sum_{j=0}^{2^n} V_j\left(\frac{1}{2^n-1}\left(\tr\left[V_j^\dagger \rho V_j\right]I^{\otimes n} -\frac{1}{2^n}\sum_{\vec{a}\in\{0,1\}^n}Z_{\vec{a}} V_j^\dagger \rho V_j Z_{\vec{a}}\right) \right)V_j^\dagger\\
        &= \frac{1}{2^n+1} \sum_{j=0}^{2^n} V_j\left(\frac{1}{2^n-1}\left(\tr\left[\rho \right]I^{\otimes n} -\frac{1}{2^n}\sum_{\vec{a}\in\{0,1\}^n}Z_{\vec{a}} V_j^\dagger \rho V_j Z_{\vec{a}}\right) \right)V_j^\dagger\\
        &= \frac{1}{(2^n+1)(2^n-1)} \sum_{j=0}^{2^n} V_j\left(\tr\left[\rho \right]I^{\otimes n} -\frac{1}{2^n}\sum_{\vec{a}\in\{0,1\}^n}Z_{\vec{a}} V_j^\dagger \rho V_j Z_{\vec{a}}\right)V_j^\dagger\\
        &= \frac{1}{2^{2n}-1} \sum_{j=0}^{2^n}\left(\tr\left[\rho\right]I^{\otimes n} - \frac{1}{2^n}\sum_{\vec{a}\in\{0,1\}^n} V_jZ_{\vec{a}} V_j^\dagger \rho V_j Z_{\vec{a}} V_j^\dagger\right)\\
        &= \frac{1}{2^{2n}-1}\left((2^n+1)\tr\left[\rho\right]I^{\otimes n} - \frac{1}{2^n}\sum_{j=0}^{2^n}\sum_{\vec{a}\in\{0,1\}^n} V_jZ_{\vec{a}} V_j^\dagger \rho V_j Z_{\vec{a}} V_j^\dagger\right)\label{eq:mp_proof_3}.
    \end{align}
For the inner summation over $\vec{a}$, note that when $\vec{a}=\vec{0}$, it holds that $V_jZ_{\vec{a}} V_j^\dagger= V_jI^{\otimes n} V_j^\dagger = I^{\otimes n}$ for all $0\le j \le 2^n$.
Therefore, we can separate the identity term from the sum:
\begin{align}
    \sum_{\vec{a}\in\{0,1\}^n} V_jZ_{\vec{a}} V_j^\dagger \rho V_j Z_{\vec{a}} V_j^\dagger &= I^{\otimes n}\rho I^{\otimes n} + \sum_{\vec{a}\in\{0,1\}^n\setminus\{\vec{0}\}} V_jZ_{\vec{a}}V_j^{\dagger} \rho V_j Z_{\vec{a}} V_j^\dagger.\label{eq:separeted_identity}
\end{align}
For the remaining terms where $\vec{a}\neq \vec{0}$,  we use the definition of the maximal commuting set $S_j$ from \Cref{eq:S_j}. 
This definition establishes a one-to-one correspondence between the vectors $\vec{a}\in \{0,1\}^n\setminus\{\vec{0}\}$ and the Pauli operators $P\in S_j$ via the relation  $s_{j,\vec{a}}V_jZ_{\vec{a}}V^{\dagger}_j=P$, where  $s_{j,\vec{a}}\in\{-1,1\}$.
This allows us to rewrite the remaining sum by inserting the identity $1=s_{j,\vec{a}}^2$:
\begin{align}
    \sum_{\vec{a}\in\{0,1\}^n\setminus\{\vec{0}\}} V_jZ_{\vec{a}} V_j^\dagger \rho V_j Z_{\vec{a}} V_j^\dagger &=\sum_{\vec{a}\in\{0,1\}^n\setminus\{\vec{0}\}} s_{j,\vec{a}}^2 V_jZ_{\vec{a}}V_j^{\dagger} \rho V_j Z_{\vec{a}} V_j^\dagger\\
    &=  \sum_{\vec{a}\in\{0,1\}^n\setminus\{\vec{0}\}} s_{j,\vec{a}} V_jZ_{\vec{a}}V_j^{\dagger} \rho s_{j,\vec{a}} V_j Z_{\vec{a}} V_j^\dagger\\
    &=  \sum_{P \in S_j} P \rho P. \label{eq:sum_over_S_j}
\end{align}
Because the sets $\{S_j\}_{j=0}^{2^n}$ form a disjoint partition of $\mathcal{Q}_n^*$~\cite{Lawrence2002}, summing in \Cref{eq:mp_proof_3} over $j$ and substituting the expressions from \Cref{eq:separeted_identity,eq:sum_over_S_j} yields
\begin{align}
    (\mathbb{E}_{\mathcal{V}}(\mathcal{M}))(\rho) &= \frac{1}{2^{2n}-1}\left((2^n+1)\tr\left[\rho\right]I^{\otimes n} - \frac{2^n+1}{2^n}I^{\otimes n}\rho I^{\otimes n} - \frac{1}{2^n}\sum_{j=0}^{2^n}\sum_{P\in S_j}P \rho P\right)\\
    &= \frac{1}{2^{2n}-1}\left((2^n+1)\tr\left[\rho\right]I^{\otimes n} - \frac{2^n+1}{2^n}I^{\otimes n}\rho I^{\otimes n} - \frac{1}{2^n}\sum_{P\in \mathcal{Q}_n^*}P \rho P\right).
\end{align}
Furthermore, applying \Cref{lemma_pauli_twirl} to the first term yields
\begin{align}
    (\mathbb{E}_{\mathcal{V}}(\mathcal{M}))(\rho) &= \frac{1}{2^{2n}-1}\left(\frac{2^n+1}{2^{n}}\sum_{P\in \mathcal{Q}_n}P \rho P  - \frac{2^n+1}{2^n}I^{\otimes n}\rho I^{\otimes n} - \frac{1}{2^n}\sum_{P\in \mathcal{Q}_n^*}P \rho P\right)\\
    &= \frac{1}{2^{2n}-1}\left(\frac{2^n+1}{2^{n}}\sum_{P\in \mathcal{Q}_n^*}P \rho P  - \frac{1}{2^n}\sum_{P\in \mathcal{Q}_n^*}P \rho P\right)\label{eq:Q_star}\\
    &= \frac{1}{2^{2n}-1}\sum_{P\in \mathcal{Q}_n^*}P \rho P.\label{eq:final_equation_proof_lemma}
\end{align}
\Cref{eq:Q_star} is achieved by splitting the sum over $\mathcal{Q}_n$ into its identity component and the sum over the non-identity elements $\mathcal{Q}_n^*$, which causes the terms involving the identity operator to cancel. 
The final expression in \Cref{eq:final_equation_proof_lemma} matches the definition in \Cref{eq:depol_channel} of $\mathcal{D}_0$, thus $\mathcal{D}_0 = \mathbb{E}_{\mathcal{V}}(\mathcal{M})$.

To establish that the channel twirl $\mathbb{E}_{\mathcal{V}}(\mathcal{M})$ represents the minimal number of measure-and-prepare channels necessary to construct $\mathcal{D}_0$, we analyze the channel's \emph{Pauli transfer matrix}~(PTM)~\cite{Greenbaum2015}.
The PTM provides a $2^{2n}\times2^{2n}$ matrix representation of a quantum channel with $n$ input and output qubits by characterizing how the channel transforms each element of the Pauli basis $\mathcal{Q}_n$.

For the channel $\mathcal{D}_0$, the entries of the PTM $R$ are defined as:
\begin{align}
R_{ij} = \frac{1}{2^n}\tr[P_i\mathcal{D}_0(P_j)]
\end{align}
where $P_i, P_j \in \mathcal{Q}_n$ are Pauli operators indexed by $i, j$, and $P_0 = I^{\otimes n}$ by convention.
Each entry $R_{ij}$ therefore quantifies how the channel maps the input basis element $P_j$ to the output basis element $P_i$, representing the component of $P_i$ present in the transformed operator $\mathcal{D}_0(P_j)$.

Harada et al.~\cite[Theorem 1]{Harada2024} established that the minimal number $K$ of ancilla-free measure-and-prepare circuits required to construct an arbitrary channel as a (quasi)-probabilistic mixture satisfies
\begin{align}\label{eq:harada_inequality}
    \frac{\operatorname{Rank}(R) -1}{2^n -1} \le K.
\end{align}
For the depolarizing channel $\mathcal{D}_0$, the PTM $R$ is diagonal with entries~\cite{Greenbaum2015}:
\begin{align}
    R_{ij}= \begin{cases}
        1,  & \text{if } i=j=0\\
        \frac{1}{2^{2n}-1} & \text{if }  i=j\ne0
    \end{cases}
\end{align}
Since all diagonal entries are non-zero, $\operatorname{Rank}(R) = 2^{2n}$.
Substituting $\operatorname{Rank}(R) = 2^{2n}$ into Inequality~(\ref{eq:harada_inequality}) yields for the channel $\mathcal{D}_0$ that
\begin{align}
    K \geq \frac{2^{2n} - 1}{2^n - 1} = 2^n + 1.\label{eq:meas_and_prep_lb}
\end{align}
The channel twirl $\mathbb{E}_{\mathcal{V}}(\mathcal{M})$ provided in this work is a mixture of $K=2^n + 1$ measure-and-prepare circuit. 
Since this matches the lower bound in \Cref{eq:meas_and_prep_lb}, the number of circuits is minimal.

\end{proof} \section{Measuring entanglement fidelity}\label{sec:appendix_ent_fidelity}
To prove the method for measuring entanglement fidelity presented in \Cref{sec:comp_ent_fid}, we start by reformulating the depolarizing channel  $\mathcal{D}_0$ (\Cref{eq:depol_channel}) with zero entanglement fidelity:
\begin{align}
    \mathcal{D}_0(\rho) &= \frac{1}{2^{2n}-1}\sum_{P \in \mathcal{Q}_n^*} P \rho P\\
    &=\frac{1}{2^{2n}-1}\sum_{P \in \mathcal{Q}_n} P \rho P - \frac{1}{2^{2n}-1} \rho\\
    &= \frac{2^{n}}{2^{2n}-1} \tr[\rho]I^{\otimes n}  - \frac{1}{2^{2n}-1}  \rho,
\end{align}
where we applied \Cref{lemma_pauli_twirl} in the last equation.
For the probability of measuring the zero state when applying $\mathcal{D}_0$ to $\rho_0 = (\ketbras{0})^{\otimes n}$, we obtain
\begin{align}
    \braket{0^{\otimes n}|\mathcal{D}_0(\rho_0)|0^{\otimes n}}
    &= \frac{2^{n}}{2^{2n}-1} \underbrace{\braket{0^{\otimes n}|\tr[\rho_0]I^{\otimes n}|0^{\otimes n}}}_{=1}  - \frac{1}{2^{2n}-1} \underbrace{\braket{0^{\otimes n}|\rho_0|0^{\otimes n}}}_{=1} \\
    &= \frac{2^{n}}{2^{2n}-1} - \frac{1}{2^{2n}-1} \\
    &= \frac{2^{n} - 1}{2^{2n}-1} \\
    &= \frac{1}{2^{n}+1}.
\end{align}
Using the assumption from \Cref{theorem_arbitrary_cut} that we perform a $\mathcal{E}$-channel-twirl that transforms the channel $\mathcal{C}$ in a depolarizing channel $\mathcal{D}_{F(\mathcal{C})}$, i.e., $\mathbb{E}_{\mathcal{E}}(\mathcal{C})=\mathcal{D}_{F(\mathcal{C})}$, and  substituting the definition of the depolarizing channel $\mathcal{D}_{F(\mathcal{C})}$ from \Cref{eq:depol_channel_with_D0}, the probability $P_{0 \to 0}$ in \Cref{eq:P_0_to_0} becomes:
\begin{align}
    P_{0 \to 0} &= \braket{0^{\otimes n}|\mathbb{E}_{\mathcal{E}}(\mathcal{C})(\rho_0)|0^{\otimes n}}\\ 
    &= \braket{0^{\otimes n}|\mathcal{D}_{F(\mathcal{C})}(\rho_0)|0^{\otimes n}}\\ 
    &= F(\mathcal{C}) \braket{0^{\otimes n}|\mathcal{I}(\rho_0)|0^{\otimes n}} + (1 - F(\mathcal{C}))\braket{0^{\otimes n}|\mathcal{D}_0(\rho_0)|0^{\otimes n}}\\
    &= F(\mathcal{C}) + (1 - F(\mathcal{C}))\frac{1}{2^{n}+1} \\
    &= \left(1- \frac{1}{2^{n}+1}\right)F(\mathcal{C}) + \frac{1}{2^{n}+1}\\
    &= \frac{2^n}{2^{n}+1}F(\mathcal{C}) + \frac{1}{2^{n}+1}.
\end{align}
Isolating $F(\mathcal{C})$ by rearranging the terms yields the desired result:
\begin{align}
    F(C) = \frac{2^n+1}{2^n}P_{0 \to 0} - \frac{1}{2^n}.
\end{align} \section{Error}\label{sec:appendix_error}
The total estimation error $\epsilon(N)$ decomposes into sampling and systematic bias components through the triangle inequality:
\begin{align}
    \epsilon(N) &=  \left|\widehat{\braket{O}}_{\tilde{\mathcal{I}}(\rho)}^{N} - \tr[O\rho] \right|\\
    &\le \underbrace{\left|\widehat{\braket{O}}_{\tilde{\mathcal{I}}(\rho)}^{N}  - \tr[O\tilde{\mathcal{I}}(\rho)] \right|}_{=:\epsilon_{\text{sampling}}(N)} + \underbrace{\left| \tr[O\tilde{\mathcal{I}}(\rho)] - \tr[O\mathcal{I}(\rho)]\right|}_{=:\epsilon_{\text{bias}}} 
\end{align}
where we added $\pm\tr[O\tilde{\mathcal{I}}(\rho)]$. 
The operator $\tilde{\mathcal{I}}=\tilde{p}^{-1}\tilde{\mathcal{D}}_{p} - (\tilde{p}^{-1}-1)\tilde{\mathcal{D}}_{0}$ represents the actual implemented channel with the QPD using imperfect depolarizing channels $\tilde{\mathcal{D}}_{p}$ and $\tilde{\mathcal{D}}_{0}$.
The parameter $\tilde{p}$, an estimate of the ideal depolarization parameter $p$, represents the channel's estimated entanglement fidelity. 
As a fidelity, $\tilde{p}$ is naturally bounded above by $1$.
For the QPD to offer an advantage in sampling overhead over classical wire cutting, the true fidelity must satisfy $p>2^{-n}$ (see \Cref{eq:advantage_qpd_over_wire_cut}). Consequently, any useful estimate must also satisfy $\tilde{p}>2^{-n}$. 
This condition inherently ensures that $\tilde{p}$ is strictly positive, which is required for the inverse $\tilde{p}^{-1}$ in the QPD construction of $\tilde{\mathcal{I}}$ to be well-defined. 
Therefore, the bound for an estimate for an advantageous channel is $2^{-n}<\tilde{p}\le1$.

The sampling error $\epsilon_{\text{sampling}}$ can be bounded using  Hoeffding's inequality, as exemplified in related work~\cite[Proposition 3.1]{Piveteau2025}.
To ensure that the sampling error is within a desired precision $\epsilon_{\text{s}}$, i.e.,
\begin{equation} \label{eq:sampling_error_target_precision_revised}
\epsilon_{\text{sampling}}(N) \le \epsilon_{\text{s}},
\end{equation}
with a probability of at least $1-\delta$ (where $\delta \in (0,1]$), the required number of shots $N$ must satisfy~\cite{Piveteau2025}:
\begin{equation} \label{eq:N_shots_hoeffding_revised}
N \ge 2 \left(\frac{\kappa}{\epsilon_{\text{s}}}\right)^2\ln\left(\frac{2}{\delta}\right).
\end{equation}
In this bound, $\kappa=2\tilde{p}^{-1}-1$ is the sampling overhead of the implemented QPD. 
Rearanging Inequality~(\ref{eq:N_shots_hoeffding_revised}) yields that, for a fixed probability $\delta \in (0,1]$, the achievable sampling precision $\epsilon_{\text{s}}$ scales with the number of shots $N$ as $\epsilon_{\text{s}}\in\mathcal{O}(\kappa/\sqrt{N})$.

The systematic bias $\epsilon_{\text{bias}}$ can be bounded as
\begin{align}
     \epsilon_{\text{bias}} &= \left| \tr[O\tilde{\mathcal{I}}(\rho)] - \tr[O\rho] \right| \\
      &= \left| \tr\left[O \left(\tilde{\mathcal{I}}(\rho) - \rho\right)\right] \right| \\
      &\le \left\|O \left(\tilde{\mathcal{I}}(\rho) - \rho\right) \right\|_1.
 \end{align}
The final inequality follows from $|\tr[A]| = |\sum_{i}\lambda_i| \le \sum_{i}|\lambda_i| =\|A\|_1$ with  $\lambda_i$ representing the eigenvalues of~$A$ and $\|A\|_1$ being the trace norm of $A$.
Applying Hölder’s inequality $\|A^\dagger B\|_1 \le \|A\|_\infty\|B\|_1$ enables the norm of the observable~$O$ to be factored out~\cite{Larotonda2018}.
As $O=O^\dagger$, the bound becomes:
\begin{align}\label{eq:error_result_1}
     \epsilon_{\text{bias}} &\le \left\|O \right\|_{\infty} \left\|\tilde{\mathcal{I}}(\rho) - \rho \right\|_1.
 \end{align}
 
To eliminate the dependence on the state $\rho$, we employ the diamond norm, which upper bounds the trace norm for any input state:
\begin{align}\label{eq:error_result_2}
    \left\|\tilde{\mathcal{I}}(\rho) - \rho \right\|_1 = \left\|\tilde{\mathcal{I}}(\rho) - \mathcal{I}(\rho) \right\|_1 \le \left\|\tilde{\mathcal{I}} - \mathcal{I} \right\|_\diamond.
\end{align}
Decomposing the ideal identity $\mathcal{I}$ and implemented identity channel $\tilde{\mathcal{I}}$ using the QPD as $\mathcal{I}=p^{-1}\mathcal{D}_{p} - (p^{-1}-1)\mathcal{D}_{0}$  and $\tilde{\mathcal{I}}=\tilde{p}^{-1}\tilde{\mathcal{D}}_{p} - (\tilde{p}^{-1}-1)\tilde{\mathcal{D}}_{0}$, their difference becomes:
\begin{align}
    \left\|\tilde{\mathcal{I}} - \mathcal{I}\right\|_{\diamond} 
     &= \left\|\frac{1}{\tilde{p}}\tilde{\mathcal{D}}_p 
     -\left(\frac{1}{\tilde{p}}-1\right)\tilde{\mathcal{D}}_0 - \frac{1}{p}\mathcal{D}_p +\left(\frac{1}{p}-1\right)\mathcal{D}_0\right\|_{\diamond} 
\end{align}
By adding and subtracting terms $\frac{1}{\tilde{p}}\mathcal{D}_p$ and $(\frac{1}{\tilde{p}}-1)\mathcal{D}_0$, that weight ideal operators with the imperfect coefficients, we reorganize the expression to isolate the sources of errors:
\begin{align}
    \left\|\tilde{\mathcal{I}} - \mathcal{I}\right\|_{\diamond} 
     = \left\|\frac{1}{\tilde{p}}\left(\tilde{\mathcal{D}}_p - \mathcal{D}_p\right)
      - \left(\frac{1}{\tilde{p}}-1\right)\left(\tilde{\mathcal{D}}_0 -  \mathcal{D}_0\right) + \left(\frac{1}{\tilde{p}} - \frac{1}{p}\right)\mathcal{D}_p - \left(\frac{1}{\tilde{p}} - \frac{1}{p}\right)\mathcal{D}_0 \right\|_{\diamond}.
\end{align}
Applying the triangle inequality splits the right-hand side into three components:
\begin{align}
    \left\|\tilde{\mathcal{I}} - \mathcal{I}\right\|_{\diamond} 
     &\le \frac{1}{\tilde{p}}\left\|\left(\tilde{\mathcal{D}}_p - \mathcal{D}_p\right)\right\|_{\diamond} 
     +\left(\frac{1}{\tilde{p}}-1\right)\left\|\left(\tilde{\mathcal{D}}_0 - \mathcal{D}_0\right)\right\|_{\diamond} + \left|\frac{1}{\tilde{p}} - \frac{1}{p}\right|\left\|\mathcal{D}_p - \mathcal{D}_0\right\|_{\diamond}.\label{eq:coefficent_term_error}
\end{align}
The bound $2^{-n}<\tilde{p}\le1$, established previously, ensures that all coefficients in this expression are non-negative.
The depolarizing channel $\mathcal{D}_p$ can be formulated as $\mathcal{D}_p = p\mathcal{I} + (1-p)\mathcal{D}_0$, which simplifies the final term:
\begin{align}
    \left\|\mathcal{D}_p - \mathcal{D}_0 \right\|_{\diamond} &= \left\|p\mathcal{I} + (1-p)\mathcal{D}_0 - \mathcal{D}_0 \right\|_{\diamond}\\ 
     &= \left\|p\mathcal{I} - p\mathcal{D}_0 \right\|_{\diamond}\\ 
    &= p\left\|\mathcal{I} - \mathcal{D}_0 \right\|_{\diamond}\\
&\le 2p
\end{align}
where the last equality holds since the maximal value for the diamond norm distance is $2$ (see \Cref{eq:bounds_diamond_norm}).
As a result, we can express the last term of Inequality~(\ref{eq:coefficent_term_error}) as
\begin{align}\label{eq:error_result_4}
     \left|\frac{1}{\tilde{p}} - \frac{1}{p}\right|\left\|\mathcal{D}_p - \mathcal{D}_0\right\|_{\diamond} &\le 2p \left|\frac{1}{\tilde{p}} - \frac{1}{p}\right|
     = 2 \left|\frac{p}{\tilde{p}} - 1\right|.
\end{align}
Combining the results of \Cref{eq:error_result_1,eq:error_result_2,eq:coefficent_term_error,eq:error_result_4} yields the composite bias bound:
\begin{align}
     \epsilon_{\text{bias}} &\le \left\|O \right\|_{\infty} \left(\frac{1}{\tilde{p}}\left\|\tilde{\mathcal{D}}_p -\mathcal{D}_p \right\|_{\diamond} + \left(\frac{1}{\tilde{p}}-1\right)\left\|\mathcal{D}_0 - \tilde{\mathcal{D}}_0 \right\|_{\diamond} +  2 \left|\frac{p}{\tilde{p}} - 1\right| \right).
 \end{align}
which explicitly quantifies contributions from imperfections in the implemented depolarizing channels $\mathcal{D}_p$ (first summand) and $\mathcal{D}_0$ (second summand) and discrepancies between the ideal QPD coefficient $p^{-1}$ and its approximated value $\tilde{p}^{-1}$ (third summand).

To derive \Cref{eq:pauli_mixing_coherent_errors}, which quantifies the error when applying Pauli mixing ensembles $\mathcal{E}$ to non-Pauli channels $\mathcal{C}$, we first decompose the channel $\mathcal{C}$ into its diagonal (Pauli) and off-diagonal (coherent) parts based on its $\chi$-matrix representation from \Cref{eq:coherent_incoherent_part}:
\begin{align}
\mathcal{C}(\rho) = \mathcal{C}_{\text{Pauli}}(\rho) + \mathcal{C}_{\text{coherent}}(\rho),
\end{align}
where $\mathcal{C}_{\text{Pauli}}(\rho) = \sum_{i}\chi_{ii}P_i\rho P_i$ and $\mathcal{C}_{\text{coherent}}(\rho) = \sum_{i\ne j}\chi_{ij}P_i\rho P_j$.
Due to the linearity of the twirling superoperator $\mathbb{E}_{\mathcal{E}}$, the twirled channel can be similarly decomposed:
\begin{align}
    \mathbb{E}_{\mathcal{E}}(\mathcal{C}) &= \mathbb{E}_{\mathcal{E}}(\mathcal{C}_{\text{Pauli}}) +  \mathbb{E}_{\mathcal{E}}(\mathcal{C}_{\text{coherent}}).
\end{align}
From \Cref{lemma_pauli_mixing}, we know that twirling the Pauli part yields a depolarizing channel $\mathbb{E}_{\mathcal{E}}(\mathcal{C}_{\text{Pauli}}) =  \mathcal{D}_{F(\mathcal{C_{\text{Pauli}}})}$.
Furthermore, the entanglement fidelity of a channel depends only on the $\chi_{00}$ element of the $\chi$-matrix, so $F(\mathcal{C}) = F(\mathcal{C}_{\text{Pauli}}) = \chi_{00}$.
Substituting this gives
\begin{align}
    \mathbb{E}_{\mathcal{E}}(\mathcal{C}) &=  \mathcal{D}_{F(\mathcal{C})} +  \mathbb{E}_{\mathcal{E}}(\mathcal{C}_{\text{coherent}}).
\end{align}

With this expression, we can now bound the deviation from the ideal depolarizing channel. 
The derivation proceeds as follows:
\begin{align}
    \left\|\mathbb{E}_{\mathcal{E}}(\mathcal{C}) -\mathcal{D}_{F(\mathcal{C})}  \right\|_{\diamond} &= \left\|\mathcal{D}_{F(\mathcal{C})} + \mathbb{E}_{\mathcal{E}}(\mathcal{C}_{\text{coherent}}) -\mathcal{D}_{F(\mathcal{C})}  \right\|_{\diamond}\\
    &= \left\|\mathbb{E}_{\mathcal{E}}(\mathcal{C}_{\text{coherent}})\right\|_{\diamond}\\
    &= \max_{\rho\in D(R\otimes A)} \left\|(\mathcal{I}_R  \otimes\mathbb{E}_{\mathcal{E}}(\mathcal{C}_{\text{coherent}}) )(\rho)\right\|_1, \label{eq:choherent_error_trace_norm}
\end{align}
where we applied the definition of the diamond norm from \Cref{eq:diamond_norm_def}, which involves maximizing over all states $\rho$ on the combined system $R\otimes A$, where $A$ is the input Hilbert space of the channel $\mathcal{C}$ and $R$ is an ancillary system of $n_R$ qubits. 
Next, we substitute the definitions of the $\mathcal{E}$-channel-twirl (\Cref{eq:expectation_ensemble}) and the coherent error term $\mathcal{C}_{\text{coherent}}(\rho) = \sum_{i\ne j}\chi_{ij}P_i\rho P_j$:
\begin{align}
    \left\|\mathbb{E}_{\mathcal{E}}(\mathcal{C}) -\mathcal{D}_{F(\mathcal{C})}  \right\|_{\diamond}&= \max_{\rho\in D(R\otimes A)} \left\|\sum_{k=0}^{K-1} \sum_{i\ne j}p_k\chi_{ij}(I^{\otimes n_R} \otimes U_k^\dagger P_i U_k )\rho (I^{\otimes n_R} \otimes U_k^{\dagger} P_jU_k)\right\|_1\\
    &\le  \sum_{k=0}^{K-1} \sum_{i\ne j}p_k|\chi_{ij}|\max_{\rho\in D(R\otimes A)} \left\|(I^{\otimes n_R} \otimes U_k^\dagger P_i U_k)\rho (I^{\otimes n_R} \otimes U_k^{\dagger} P_jU_k)\right\|_1\\
    &=  \sum_{k=0}^{K-1} \sum_{i\ne j}p_k|\chi_{ij}|\max_{\rho\in D(R\otimes A)} \left\|\rho \right\|_1 \label{eq:proof_error_chi_1}\\
    &=  \sum_{k=0}^{K-1} \sum_{i\ne j}p_k|\chi_{ij}| \label{eq:proof_error_chi_2}\\
    &= \sum_{i\ne j}|\chi_{ij}| \label{eq:proof_error_chi_3}
\end{align}
\Cref{eq:proof_error_chi_1} follows directly from the unitary invariance of the trace norm, which guarantees invariance under unitary transformations of the form $I^{\otimes n_R} \otimes U_k^{\dagger} P_jU_k$ applied to both sides of $\rho$.
In \Cref{eq:proof_error_chi_2}, the trace norm $\|\rho\|_1$ simplifies to $1$ because $\rho$ is a valid density operator (positive semi-definite with $\tr[\rho]=1$), eliminating the need of the maximization over $\rho$.
Finally, \Cref{eq:proof_error_chi_3} uses the normalization condition $\sum_k p_k = 1$ for the probability distribution ${p_k}$. 

\end{document}